\documentclass[a4paper,12pt]{article}
\usepackage{cite}

\usepackage{amsmath,amssymb,amsthm,mathtools}
\makeatletter
\renewenvironment{proof}[1][\proofname]{\par
\pushQED{\qed}%
\normalfont \topsep6\p@\@plus6\p@\relax
\trivlist
\item\relax
{\bfseries  
#1\@addpunct{.}}\hspace\labelsep\ignorespaces 
}{%
\popQED\endtrivlist\@endpefalse
}
\makeatother

\usepackage{array}
\usepackage{authblk}
\usepackage{caption}
\usepackage{fancyhdr}
\usepackage{finite_corr}
\usepackage{framed}
\usepackage[T1]{fontenc}
\usepackage{mathrsfs,stmaryrd}
\usepackage{physics}
\usepackage[section]{placeins}
\usepackage{tocloft}
\usepackage{xcolor}
\usepackage{adjustbox,rotating}

\usepackage{pgf}
\definecolor{xgray}{rgb}{0.75, 0.75, 0.75}

\usepackage{xparse}
\usepackage{tikz}
\usetikzlibrary{arrows,arrows.meta,backgrounds,calc,cd,decorations.markings,decorations.pathmorphing,decorations.pathreplacing,decorations.shapes,external,fit,graphs,math,matrix,positioning,shapes.geometric}

\usepackage{hyperref}

\numberwithin{equation}{section}

\def\swmap{\mathscr{O}}
\def\orbitP#1{P_{#1}}
\def\poly{\mathcal{F}}
\def\polt{f}
\def\cnN#1{\mathcal{N}^{#1}}
\def\hN#1{\hat{\mathcal{N}}^{#1}}
\def\su2{su(2)}
\def\gss{\xi}
\def\projD{\mathcal L}
\def\dos{\mathrm{dos}}
\def\intr{\mathrm{int}}
\def\coeffN{\mathrm{Coeff}}
\def\centz{\omega}

\DeclareMathOperator{\Span}{Span}
\DeclareMathOperator{\im}{im}

\DeclareMathOperator{\diag}{diag}

\theoremstyle{definition}

\newtheorem{definition}{Definition}[section]
\newtheorem{proposition}[definition]{Proposition}
\newtheorem{corollary}[definition]{Corollary}
\newtheorem{lemma}[definition]{Lemma}
\newtheorem{remark}[definition]{Remark}
\newtheorem{example}[definition]{Example}

\makeatletter
\let\c@mainprop\c@definition
\makeatother

\newlength{\barcut}
\newlength{\barshrink}
\newenvironment{barenv}[2][]{%
  \par\vspace{-\barshrink}%
  \def\barenvname{#2}%
  \def\barname{#1}%
  \MakeFramed{\advance\hsize-\width \FrameRestore}%
  \ifx\barname\empty \begin{#2}\else \begin{#2}[#1]\fi}%
 {\expandafter\end\expandafter{\barenvname}\endMakeFramed%
  \par\vspace{-\barshrink}}
\newenvironment{rebarenv}[3][]{%
  \par\vspace{-\barshrink}%
  \def\barenvname{#2}%
  \def\barname{#1}%
  \MakeFramed{\advance\hsize-\width \FrameRestore}%
  \expandafter\def\csname the#2\endcsname{\ref{#3}}%
  \def\barsharedname{mainprop}%
  \ifx\barenvname\barsharedname
    \def\thedefinition{\ref{#3}}%
  \fi
  \ifx\barname\empty \begin{#2}\else \begin{#2}[#1]\fi}%
 {\expandafter\end\expandafter{\barenvname}%
  \addtocounter{\barenvname}{-1}\endMakeFramed%
  \par\vspace{-\barshrink}}
\newcommand{\newbar}[2]{%
  \newenvironment{#1}[1][]{\begin{barenv}[##1]{#2}}{\end{barenv}}%
  \newenvironment{re#1}[2][]{\begin{rebarenv}[##1]{#2}{##2}}{\end{rebarenv}}}

\newbar{maintheorem}{theorem}
\newbar{mainconjecture}{conjecture}
\newbar{mainproposition}{mainprop}

\title{ \hfill { \normalsize QMUL-PH-26-36}\\  [1em] \bf 
Polynomial and Eigenspace Factorisations \\
for the One-loop Dilatation Operator at Finite $N$
}

\author{{\bf \small  Adrian Padellaro}\thanks{adrian.padellaro@icts.res.in}}
\affil{International Centre for Theoretical Sciences-TIFR,\authorcr Shivakote, Hesaraghatta Hobli, Bengaluru North 560089, India}
\author{{\bf  \small Sanjaye Ramgoolam}\thanks{s.ramgoolam@qmul.ac.uk}}
\affil{Centre for Theoretical Physics, Department of Physics and Astronomy, Queen~Mary University of London,  London E1 4NS, United Kingdom}
\author{{\bf \small  Ryo Suzuki}\thanks{rsuzuki.mp@gmail.com}}
\affil{School of Physics and Shing-Tung Yau Centre of Southeast University,\authorcr No.2 Sipailou, Xuanwu district, Nanjing, Jiangsu, 210096, China}

\date{}

\begin{document}

\maketitle
\thispagestyle{plain}
\begin{abstract}

We study gauge-invariant two-matrix polynomials of degree $L$, corresponding to the states in the $su(2)$ sector of $\mathcal{N}=4$ super Yang-Mills theory with gauge group $U(N)$ or $SU(N)$ at finite $N$. For the $U(N)$ theory, the state space has the structure of an algebra $\mathcal{A}_N(L)$, which is a direct sum of permutation centraliser algebras, i.e. sub-algebras of the group algebra  $\mathbb{C}[S_L]$ of $S_L$. For $N \ge L$, $\mathcal{A}_N (L) = \mathcal{A} (L)$ is independent of $N$, and the one-loop dilatation operator acts on $\mathcal{A}(L)$ with $N$ as a parameter. For $N < L$, $\mathcal{A}_N (L)$ is a subspace of $\mathcal{A}(L)$ defined using a finite-$N$ cutoff on Young diagrams with $L$ boxes. We establish two  results on the eigenspaces of the one-loop dilatation operator. Our first main result is that the characteristic polynomial of the one-loop dilatation operator on $\mathcal{A}_N(L)$ divides the characteristic polynomial from $\mathcal{A}(L)$, continued to $N < L$. Our second result is that each eigenstate of the one-loop dilatation operator in the $SU(N)$ theory generates an infinite tower of eigenstates in the $U(N)$ theory, through the multiplication of single-letter traces. These towers define an eigenspace factorisation of the $U(N)$ theory labelled by the number of such traces. Further, we conjecture that the space of $SU(N)$-invariant polynomials in two traceless matrices at finite $N$ is isomorphic to the intersection of $\mathcal{A}_N(L)$ with the subspace of $\mathbb{C}[S_L]$ spanned by derangements. This intersection conjecture is supported by dimension counting. Finally, we propose a one-loop dilatation operator on the intersection, thus providing a direct permutation framework for the $SU(N)$ spectral problem at finite $N$. The computed spectra based on the proposal agree with the prediction from the eigenspace factorisation in the $U(N)$ theory.

\end{abstract}
\flushbottom
\pagestyle{plain}

\newpage

\tableofcontents

\clearpage
\section{Introduction}\label{sec:introduction}

Physical observables of four-dimensional $\cN=4$ super Yang-Mills theory (SYM)
with $U(N)$ or $SU(N)$ gauge group are equivalently calculable from superstring
theory on \AdSxS, according to the AdS/CFT correspondence
\cite{Maldacena:1997re,Gubser:1998bc,Witten:1998qj}. The two theories have
independent formulations, in different spacetime dimensions and in terms of
different degrees of freedom. By the operator-state correspondence
\cite{Polchinski_1998}, the quantum states of the CFT are labelled by
gauge-invariant local operators; AdS/CFT identifies these with the quantum
states arising from the gravitational theory of strings and branes in \AdSxS,
and their conformal dimensions $\Delta$ with the energies $E$ of those states.
For fixed radius of the \AdSxS, the string coupling is proportional to $1/N$,
so that AdS/CFT at finite $N$ requires understanding of the non-perturbative regime on the string side.

In this paper we study local operators constructed as gauge-invariant polynomials in two complex adjoint scalar fields $Z,W$, regarded as two complex matrices.
An operator containing $m$ copies of $Z$ and $n$ copies of $W$ has bidegree $(m,n)$ and classical conformal dimension $\Delta_0=L$, where $L=m+n$.
For weak non-zero coupling, the spectrum of conformal dimensions can be obtained perturbatively from the eigenvalues of the loop-corrected dilatation operator,
\begin{equation}
\Delta = \Delta_0 + \frac{N g_{\rm YM}^2}{8\pi^2} \, \Delta_1 + \dots \,, \qquad
\fD_N = \fD^{(0)}_N + \frac{N g_{\rm YM}^2}{8\pi^2} \, \fD^{(1)}_N + \cdots \,.
\end{equation}
Here $g_{\rm YM}$ is the Yang-Mills coupling, and $\Delta_1$ is the eigenvalue of the one-loop dilatation operator $\fD_N^{(1)}$.
The gauge-invariant polynomials in $Z$ and $W$ form the $\su2$ sector, which is closed under one-loop dilatation at general $N$, with $m$ and $n$ separately preserved \cite{Beisert:2003tq,Beisert:2003jj}.

The space of gauge invariant polynomial functions has a highly non-trivial dependence on $N$.
We write the space of $U(N)$-invariant polynomials in $Z,W$ of fixed degrees $(m,n)$ as $\mathcal{H}_N (m,n)$.
This space is finite-dimensional for each
$(m,n,N)$, so finding the one-loop corrections to conformal dimensions becomes a problem of computing roots of the characteristic polynomial of a finite-dimensional matrix. 
In the stable regime $N \geq L$, $\dim \mathcal{H}_N(m,n)$ counts multi-traces and is independent
of $N$; in the unstable regime $N < L$, distinct multi-traces are no longer linearly
independent due to trace relations and $\dim \mathcal{H}_N(m,n)$ depends non-trivially on $N$.
In terms of the dual string theory, holding $L$ fixed while varying $N$ moves between the perturbative and non-perturbative regimes of the string theory.
Perturbative string theory corresponds to the stable regime, namely fixed-$L$ large-$N$ limit, while the unstable regime probes the non-perturbative physics of branes and black holes.

Symmetric groups provide a powerful framework for understanding spaces of
invariant polynomials for varying $N$ and fixed degree $(m,n)$.
As we recall below, any invariant in $\cH_N(m,n)$ can be written as a
linear combination of multi-trace invariants constructed from permutations $g \in S_{m+n}$,
\begin{equation}\label{PermsAndInvariants} 
  Z_{i_1 i_{g(1)}}\cdots Z_{i_m i_{g(m)}}\,
  W_{i_{m+1} i_{g(m+1)}}\cdots W_{i_{m+n} i_{g(m+n)}} 
\end{equation}
with repeated indices summed.
A permutation specifies how to contract the colour indices of $Z$ and $W$, and each cycle gives a single-trace factor. 
That these invariants span the space for every $N$ is the first fundamental theorem of invariant theory for $GL(N)$, whose polynomial invariants coincide with those of $U(N)$, and goes back to Weyl \cite{Weyl1939classical}; see \cite{GoodmanWallach2009} for a modern treatment of the mixed-tensor form used here. 
Combined with Schur--Weyl duality, this description turns computations with invariants into computations in the group algebra $\mathbb{C}[S_{m+n}]$, where $N$ enters through $\mathrm{tr}_{V^{\otimes L}}(g)=N^{\mathrm{cycles}(g)}$ for $V=\mathbb{C}^N$, and through the cutoff $\ell(R)\leq N$ on the number of rows of Young diagrams $R$.

The application of symmetric group techniques to large-$N$ gauge theories has a fruitful history.
Permutations and Schur-Weyl duality were central to the reformulation of large-$N$ two-dimensional gauge theory as a string theory
\cite{GrossTaylor1993,CMR1994largeN,Cordes:1994sd,Ramgoolam1994wilson}.
In applications to $\mathcal{N}=4$ SYM \cite{Corley:2001zk}, the permutation parameterisation of holomorphic gauge
invariant functions of a complex matrix in the half-BPS sector allowed the
computation of orthogonal bases labelled by Young diagrams $R$, extremal
three-point functions and a general proposal for the CFT duals of giant
gravitons
\cite{McGreevy:2000cw,Balasubramanian:2001nh,Hashimoto:2000zp,Grisaru:2000zn} and Lin-Lunin-Maldacena (LLM) geometries \cite{Lin:2004nb}.
The norms of these orthogonal basis elements vanish for $\ell(R)>N$.
Finite-$N$ orthogonal bases and counting formulae were subsequently generalised
to multi-matrix systems
\cite{Kimura:2007wy,Brown:2007xh,Bhattacharyya:2008rb,Bhattacharyya:2008xy,Brown:2008ij}.
The restricted Schur and covariant bases also have a Young diagram label $R$ where finite-$N$ Hilbert spaces have $\ell(R) \leq N$ and the null states have $\ell(R) > N$.
Since then, this framework has been developed for quivers
\cite{Pasukonis:2013ts}, matrix systems with discrete permutation group $S_N$
symmetry
\cite{Barnes:2021tjp,Barnes:2022qli,OConnor:2023ptf,OConnor:2024udv,OConnor:2025fpj}
and tensor models \cite{BenGeloun:2017vwn,BenGeloun:2020yau,Barnes:2023rvv}.

The multi-matrix approaches exploit the fact that the linear extension of \eqref{PermsAndInvariants} for $\sigma \in \bb{C}[S_{m+n}]$ is invariant under conjugation by $\gamma \in S_m \times S_n$,
\begin{equation}\label{subconj}
\sigma \rightarrow \gamma \sigma \gamma^{-1}.
\end{equation}
The sums over the orbits in $S_{m+n}$ generated by \eqref{subconj} span the subalgebra of $\mathbb{C}[S_{m+n}]$ commuting with $S_m \times S_n$, called the \emph{permutation centraliser algebra} (PCA) $\cA(m,n)$ \cite{Mattioli:2016eyp}.
This associative algebra is semi-simple, allowing the structure theory of such algebras,
notably the Wedderburn-Artin decomposition, to be applied.
In the stable regime $N \ge L$, the dimension of $\cA(m,n)$ is equal to $\dim \cH_N(m,n)$, and the Wedderburn-Artin decomposition of
$\cA(m,n)$ as a direct sum of matrix algebras accounts for the counting formula
of the dimension as a sum of squares of Littlewood-Richardson coefficients. In
the unstable regime, the cut-off $\ell(R) \le N$ selects a subset of these
matrix blocks, defining a semisimple subalgebra $\cA_N(m,n)$ of $\cA(m,n)$
which accounts for the finite $N$ counting formula.

The PCA framework allows the explicit construction of a complete basis of
operators, compatible with finite-$N$ trace relations. The centre of
$\mathbb{C}[S_{m+n}]$, consisting of the elements commuting with all of $S_{m+n}$, is
contained in $\cA(m,n)$, and its elements are directly related to the Casimir
operators introduced in \cite{Kimura:2008ac}. Motivated by this observation,
several important properties of the centre were investigated in
\cite{Kemp:2019log}. Recently, these findings culminated in exact methods, based on integer eigensystems inside permutation algebras, for computing representation-theoretic finite-$N$ bases for multi-matrix invariants with $U(N)$ gauge group \cite{Padellaro:2024rld}.
Following these results, we will refer to a basis of $\cA(m,n)$ that
diagonalises the left action of central elements of $\mathbb{C}[S_{m+n}]$, and
identifies a Young diagram label $R$ which controls finite $N$ cutoffs, as a
\emph{stringy-exclusion-principle} (SEP) basis. The name is inspired by the cutoff on
the charges of bulk states found in \cite{Maldacena:1998bw}, of which the
condition $\ell(R) \le N$ is the counterpart here.

The two gauge groups, $U(N)$ and $SU(N)$, are equally natural choices in $\cN=4$ SYM, but they are significantly different at finite $N$.
The permutation framework has been used to study $\cH'_N(m,n)$, the space of $SU(N)$-invariant polynomials of traceless matrices $Z', W'$, extending earlier studies of half-BPS operators and correlators to the $SU(N)$ theory \cite{Corley:2002mj,deMelloKoch:2004crq,Brown:2007bb,Aprile:2020single}. 
The traceless condition removes permutations with fixed points. Thus, $\cH'_N(m,n)$ in the stable regime is represented by derangements, namely permutations without fixed points; for a single complex matrix, this description of the independent $SU(N)$ invariants was given in \cite{Corley:2002mj}. Derangements also arise in the study of tensor powers of adjoint $\mathfrak{sl}_N$ modules \cite{BenkartDoty2002,Doty:2007}.
Derangements do not form a group, so their use does not directly provide the representation-theoretic machinery available for $U(N)$. Moreover, projecting to derangements and projecting to the finite-$N$ subspace do not in general commute.
Our aim is to describe both $SU(N)$ gauge-invariant operators and their one-loop dilatation entirely within the PCA framework at any $N$, with the physical action related to the permutation action by the permutation construction of gauge invariant operators in \eqref{PermsAndInvariants}. Section \ref{sec:derangements} addresses this problem through an intersection conjecture and a separate proposal for the one-loop action.

We now state the three main results of this paper. The first is a finite-$N$ factorisation for the characteristic polynomial of the one-loop dilatation operator for $U(N)$. In the stable regime, the one-loop mixing matrix can be written as an operator $ H^{(1)}_{\gss}$ on $\cA(m,n)$ whose entries depend on the formal parameter $\gss$. Evaluating this operator at $\gss=1/N$ is meaningful even when $N<L$, although the matrix then acts on a space larger than the physical finite-$N$ space.
We prove that the orthogonal complement of $\cA_N(m,n)$ (the space of null states) is invariant under the one-loop dilatation.
It follows that the mixing matrix is block triangular by ordering a basis into physical and null sectors.
The characteristic polynomials exhibit the factorisation condition
\begin{equation}
\det_{\cH_N(m,n)}(x - \fD_N^{(1)}) = \frac{\det_{\cA(m,n)} (x -  H^{(1)}_{\gss} )\Big\vert_{\gss = \frac{1}{N}} }{Q^{(N)}_{m,n} (x)}
\label{eq:intro-finite-N-factorisation}
\end{equation}
where the left-hand side is the characteristic polynomial on the physical finite-$N$ space, and $Q^{(N)}_{m,n}$ comes from the null sector. Thus, the characteristic polynomial computed before imposing trace relations already contains the exact finite-$N$ spectrum as a factor.

The second result is a single-letter-trace factorisation relating the $U(N)$ and $SU(N)$ theories. 
By splitting $U(N)$ matrices into traceless and trace parts, we obtain an identity for the Hilbert series of $\cH_N$ and $\cH'_N$. 
This factorisation property extends to the eigenspace of the one-loop dilatation operator at finite $N$.
As a consequence, the $SU(N)$ characteristic polynomial is determined by a ratio of $U(N)$ characteristic polynomials at neighbouring bidegrees. The eigenspace factorisation provides an independent route to compute the $SU(N)$ characteristic polynomial at finite $N$, and a stringent check on the direct derangement construction.

Our third result is an intrinsic PCA description of $SU(N)$ states in terms of derangements, together with a proposal for their one-loop dilatation operator in the stable regime.
In the unstable regime, we propose that $SU(N)$-invariant operators at finite $N$ are described by the intersection
\begin{equation}
\cD_N(m,n):=\cA_N(m,n)\cap\mathbb{C}[D_{m+n}]  \ \cong \ \mathcal{H}'_N(m,n)
\label{eq:intro-SU-intersection}
\end{equation}
where $D_{m+n} \subset S_{m+n}$ is the set of derangements and $\cA_N(m,n)$ is the finite-$N$ subspace of PCA spanned by any SEP basis with $\ell(R) \leq N$.
We give computational evidence for this isomorphism by comparing the dimension of the intersection and the rank of its $SU(N)$ two-point-function Gram matrix with $\dim\mathcal{H}'_N(m,n)$. Separately, we propose a projected one-loop operator on the intersection and test its characteristic polynomial against the eigenspace factorisation formula. Together, these constructions give a candidate PCA description of $SU(N)$ operators and their one-loop dynamics in the unstable regime.
 
The intersection \eqref{eq:intro-SU-intersection} is also connected to the quarter-BPS states in $\cN=4$ SYM.
Early constructions of quarter-BPS operators were given in \cite{Ryzhov:2001bp}, and the number of quarter-BPS states at finite $N$ was found in \cite{Kinney:2005ej,Dolan:2007rq}. General weak-coupling quarter-BPS operators were constructed by using the intersection of a symmetrisation operation with the finite-$N$ constraint \cite{Lewis-Brown:2020nmg}. In the $\su2$ sector, the quarter-BPS states are characterised by the zero modes of the one-loop dilatation operator. 
Our proposal \eqref{eq:intro-SU-intersection} is a natural extension of the same strategy, where the symmetrisation operation is replaced by the derangement projection.

This paper is organised as follows. 
Section \ref{sec:UN factorisation} develops the PCA framework of the $U(N)$ invariants and the one-loop mixing problem, including a proof of the finite-$N$ factorisation \eqref{eq:intro-finite-N-factorisation}. 
Section \ref{sec:SU(N)} establishes the single-letter-trace factorisation by studying the space of $SU(N)$-invariant operators and the eigensystem of the one-loop dilatation operator. 
Section \ref{sec:derangements} develops the derangement description of traceless invariants, formulates \eqref{eq:intro-SU-intersection}, and tests the induced one-loop operator. 
The appendices collect conventions and computational data.
Our symmetric-group conventions follow \cite{Padellaro:2024rld}.

\section{One-loop finite-$N$ polynomial factorisation for $U(N)$}\label{sec:UN factorisation}

We consider the two-matrix sector of four-dimensional $\mathcal{N}=4$ super Yang-Mills theory with gauge group $U(N)$.
Let $\mathcal{H}_N(m,n)$ denote the space of $U(N)$ invariant two-matrix polynomials containing $m$ copies of $Z$ and $n$ copies of $W$, with $m,n \in \mathbb{Z}_{\geq 0}$.
We use the notation $m+n$ and $L \equiv m+n$ interchangeably.
The dimension of $\mathcal{H}_N(m,n)$ is independent of $N$ in the stable regime $N\geq L$. By contrast, the dimension of $\mathcal{H}_N(m,n)$ decreases in the unstable regime $N< L$, due to trace relations between invariant polynomials.
We now give an extended overview of the results in this section.

The first subsection describes the $U(N)$ invariants in terms of permutations. The permutation centraliser algebra $\cA (m,n)$ provides a unified description of the state space for all positive integer $N \ge L$.
The relation between permutations and $U(N)$ invariants is encoded by the permutations-to-invariants map (see Definition \ref{def:trace map})
\begin{equation}
    \swmap^{(N)}_{m,n}: \mathcal{A}(m,n) \longrightarrow \mathcal{H}_N(m,n).
    \label{eq:permutation-to-invariant-map-summary}
\end{equation}
In the stable regime this map is an isomorphism,
\begin{equation}
    \mathcal{H}_N(m,n) \cong \mathcal{A}(m,n), \qquad N\geq L.
\end{equation}
For $N<L$, the map has a non-trivial nullspace (a consequence of trace relations)
\begin{equation}
    \ker \swmap^{(N)}_{m,n} = \{\sigma \in \cA(m,n) \, \vert \, \swmap^{(N)}_{m,n}(\sigma) = 0\} \neq \{0\}, \qquad N < L
\end{equation}
The physical state space is represented by a subalgebra $\mathcal{A}_N(m,n)\subset\mathcal{A}(m,n)$, which we will define.
The decomposition of $\cA(m,n)$ into $\cA_N(m,n)$ and the nullspace 
\begin{equation}
    \cA(m,n) \cong \cA_N(m,n) \oplus \ker \swmap^{(N)}_{m,n}
\end{equation}
can be described using representation-theoretic bases, including the restricted-Schur \cite{Bhattacharyya:2008rb,Bhattacharyya:2008xy} and covariant bases \cite{Brown:2007xh,Brown:2008ij}. They are collectively called stringy-exclusion-principle (SEP) bases \cite{Lewis-Brown:2020nmg}. An eigenvalue construction of integer SEP bases was recently worked out in \cite{Padellaro:2024rld}, building on the observations in \cite{Kimura:2008ac, Kemp:2019log}.
Importantly, the restriction of the permutations-to-invariants map to $\cA_N(m,n)$ 
\begin{equation}
    \hat \swmap^{(N)}_{m,n}: \cA_N(m,n) \rightarrow \cH_N(m,n),
\end{equation}
gives an isomorphism
\begin{equation}
    \cA_N(m,n) \overset{\hat \swmap^{(N)}_{m,n}}{\cong} \mathcal{H}_N(m,n).
\end{equation}

We then consider the one-loop dilatation operator in the $\su2$ sector \cite{Beisert:2003tq},
\begin{equation}
\fD^{(1)}_N : \mathcal{H}_N(m,n) \rightarrow \mathcal{H}_N(m,n), \qquad
\fD^{(1)}_N = -\frac{1}{N}:\tr(\comm{Z}{W}\comm{\pdv{}{Z}}{\pdv{}{W}}): 
\end{equation}
where the colons denote normal ordering. 
Computing its action on operators labelled by an orbit basis $\sigma,\tau$ of $\cA(m,n)$ gives
\begin{equation}
\fD_N^{(1)}\,\swmap^{(N)}_{m,n}(\sigma) = \sum_{\tau}\Bigl\{(H^{(1)}_{\mathrm{p}})_{\tau\sigma}+\frac{1}{N}(H^{(1)}_{\mathrm{np}})_{\tau\sigma}\Bigr\}\,\swmap^{(N)}_{m,n}(\tau)
\end{equation}
where $H^{(1)}_{\mathrm{p}}$ and $H^{(1)}_{\mathrm{np}}$ are the planar and non-planar matrix elements, respectively. 
Promoting $1/N$ to a formal variable $\gss$, we define a new operator acting on $\cA(m,n)$ by
\begin{equation}
 H^{(1)}_{\gss} : \cA(m,n) \rightarrow \cA(m,n), \qquad
 H^{(1)}_{\gss}  (\sigma) = \sum_{\tau}
\Big\{  (  H^{(1)}_{\mathrm{p}} )_{\tau \sigma} + \gss \, (  H^{(1)}_{\mathrm{np}} )_{\tau \sigma} \Big\} \, \tau 
\end{equation}
and call it the one-loop permutation dilatation operator. Their explicit forms are given in Definition \ref{def: 1LPDO} \cite{Bellucci:2004ru,Pasukonis:2010rv}.
Proposition~\ref{prop:D1 intertwining} states that this action is compatible with the permutations-to-invariants map for every positive integer $N$, 
\begin{equation}
\fD_N^{(1)}\circ\swmap^{(N)}_{m,n} = \left.\swmap^{(N)}_{m,n}\circ H^{(1)}_{\gss}\right|_{\gss=1/N} \,, \qquad \forall N \in \mathbb{N}
\label{eq:intertwine}
\end{equation}
which we call the \emph{intertwining relation}.
This relation ensures that finite-$N$ trace relations are preserved on both sides, giving a triangularity property of the one-loop permutation dilatation operator in SEP bases.
Note that the one-loop permutation dilatation operator takes a block form in a SEP basis
\begin{equation}
     H_{\gss}^{(1)} \ \rightarrow \  \begin{pmatrix}  H_{11}^{(1)} &  H_{12}^{(1)} \\  H_{21}^{(1)} &  H_{22}^{(1)} \end{pmatrix}
    \quad \text{where} \quad \begin{aligned}
        & H_{11}^{(1)}: \cA_N(m,n) \rightarrow \cA_N(m,n), 
        && H_{12}^{(1)}: \ker \swmap^{(N)}_{m,n} \rightarrow \cA_N(m,n), \\
        & H_{21}^{(1)}: \cA_N(m,n) \rightarrow \ker \swmap^{(N)}_{m,n},
        && H_{22}^{(1)}: \ker \swmap^{(N)}_{m,n} \rightarrow \ker \swmap^{(N)}_{m,n}.
    \end{aligned}
\end{equation}
Using \eqref{eq:intertwine}, we show that the one-loop permutation dilatation operator is block triangular at $\gss = 1/N$
\begin{equation}
     H_{\gss}^{(1)} \Big\vert_{\gss=1/N} \ \rightarrow \  \begin{pmatrix}  H_{11}^{(1)}& 0 \\  H_{21}^{(1)} &  H_{22}^{(1)} \end{pmatrix}\,
\end{equation}
and the matrix elements of $ H_{11}^{(1)}$ can be chosen to be equal to the matrix elements of $ \fD_N^{(1)}$.
From this, the main theorem of this section follows.

\begin{maintheorem}[Polynomial factorisation]\label{prop:finite-N factorisation}
Consider the characteristic polynomials
    \begin{equation}
        \det_{\mathcal{H}_N(m,n)}(x -  \fD_N^{(1)}), \quad \det_{\cA(m,n)}(x -  H^{(1)}_{\gss}).
    \end{equation}
    For every positive integer $N$, the first polynomial divides the second after setting $\gss=N^{-1}$:
    \begin{equation}
        \det_{\cA(m,n)}(x -  H^{(1)}_{\gss})\Big\vert_{\gss=\frac{1}{N}} = Q^{(N)}_{m,n}(x)\det_{\cH_N(m,n)}(x - \fD_N^{(1)}) \, ,
    \end{equation}
    where $Q^{(N)}_{m,n}(x)$ is a polynomial factor. 
\end{maintheorem}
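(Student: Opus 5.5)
The argument rests on one consequence of the intertwining relation \eqref{eq:intertwine} (Proposition~\ref{prop:D1 intertwining}): at $\gss=1/N$ the operator $H^{(1)}_{\gss}$ preserves the kernel $K:=\ker\swmap^{(N)}_{m,n}$. Everything else is linear algebra of block-triangular matrices.

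\emph{Step 1: invariance of the kernel.} Take $\sigma\in K$ and write $H:=H^{(1)}_{\gss}\big|_{\gss=1/N}$. By \eqref{eq:intertwine},
\[
\swmap^{(N)}_{m,n}(H\sigma)=\fD^{(1)}_N\bigl(\swmap^{(N)}_{m,n}(\sigma)\bigr)=\fD^{(1)}_N(0)=0,
\]
so $H\sigma\in K$. This holds for every positive integer $N$. For $N\ge L$ the kernel is trivial, since the map is an isomorphism, and the statement holds with $Q^{(N)}_{m,n}=1$.

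\emph{Step 2: block triangular form.} Use the decomposition $\cA(m,n)=\cA_N(m,n)\oplus K$ provided by an SEP basis. Here $\cA_N(m,n)$ is spanned by the elements with $\ell(R)\le N$ and $K$ by those with $\ell(R)>N$; the paper states that these span the kernel. In this basis the block $H^{(1)}_{12}$ in the notation above is the component mapping $K$ into $\cA_N(m,n)$. By Step 1 it vanishes at $\gss=1/N$. The matrix of $H$ is therefore block lower-triangular, and
\[
\det_{\cA(m,n)}(x-H)=\det_{\cA_N}(x-H_{11})\,\det_{K}(x-H_{22}).
\]
Set $Q^{(N)}_{m,n}(x):=\det_K(x-H_{22})$. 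This is a monic polynomial in $x$ of degree $\dim K$.

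\emph{Step 3: identify $H_{11}$ with $\fD^{(1)}_N$.} Let $\pi:\cA(m,n)\to\cA_N(m,n)$ be the projection along $K$. Since $\swmap^{(N)}_{m,n}$ kills $K$, it factors as $\swmap^{(N)}_{m,n}=\hat\swmap^{(N)}_{m,n}\circ\pi$. For $a\in\cA_N(m,n)$ we have $H_{11}a=\pi(Ha)$, so
\[
\hat\swmap^{(N)}_{m,n}(H_{11}a)=\swmap^{(N)}_{m,n}(Ha)=\fD^{(1)}_N\,\hat\swmap^{(N)}_{m,n}(a).
\]
Because $\hat\swmap^{(N)}_{m,n}$ is an isomorphism, this says $H_{11}$ is conjugate to $\fD^{(1)}_N$ on $\cH_N(m,n)$. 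Hence
\[
\det_{\cA_N}(x-H_{11})=\det_{\cH_N(m,n)}(x-\fD^{(1)}_N),
\]
which gives the claimed factorisation.

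\emph{Main obstacle.} The delicate input is the basis-independent claim $\cA(m,n)=\cA_N(m,n)\oplus\ker\swmap^{(N)}_{m,n}$, with $\hat\swmap^{(N)}_{m,n}$ an isomorphism. It relies on the fact that $\swmap^{(N)}$ annihilates exactly the $\ell(R)>N$ SEP elements and is injective on the span of the rest. The earlier sections supply this through the Schur–Weyl cutoff, so I take it as given.

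Two further points are harmless. First, the specialisation $\gss=1/N$ commutes with taking the determinant, because $\det_{\cA(m,n)}(x-H^{(1)}_{\gss})$ is a polynomial in $\gss$. Second, only invariance of $K$ is used, not invariance of $\cA_N(m,n)$; this is why the block $H^{(1)}_{21}$ may be nonzero.
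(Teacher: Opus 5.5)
Your proposal is correct and follows essentially the same route as the paper: the intertwining relation (Proposition~\ref{prop:D1 intertwining}) makes $\ker\swmap^{(N)}_{m,n}$ invariant at $\gss=1/N$, so the matrix is block triangular in an SEP basis adapted to $\cA_N(m,n)\oplus\ker\swmap^{(N)}_{m,n}$, and the top-left block is identified with $\fD^{(1)}_N$ through $\hat\swmap^{(N)}_{m,n}$. The only difference is presentational. The paper extracts the matrix coefficients of $O^S_b$ directly in the SEP basis (Lemma~\ref{cor:finite-N factorisation}). You instead factor $\swmap^{(N)}_{m,n}=\hat\swmap^{(N)}_{m,n}\circ\pi$, which is the basis-free form the paper records afterwards in Corollaries~\ref{lemma:finite-N factorisation} and~\ref{prop: unstable 1LPDO}.
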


Note that the space $\mathcal{H}_N(m,n)$ involves finite-$N$ effects, in particular trace relations, for $N < L$.
This theorem shows that the continued characteristic polynomial in the stable regime, which a priori knows nothing about trace relations, contains the finite-$N$ spectrum in a precise sense.

\subsection{Permutation framework for $\mathcal{N}=4$ SYM with $U(N)$ gauge group}\label{sec:review UN}

Consider the polynomial ring $\mathbb{C}[Z,W]$ in $2N^2$ variables $Z_{ij}, W_{ij}$.
Given any $U \in U(N)$ we define the action on $Z,W$ by conjugation
\begin{equation}
    Z \mapsto UZU^\dagger, \quad 
    W \mapsto UWU^\dagger \,.
\end{equation}
This induces an action on the space of polynomials and the physical space of states in the $\su2$ sector of four-dimensional $\mathcal{N}=4$ SYM corresponds to the subspace of invariant polynomials.
\begin{definition}[Space of $U(N)$ invariants]
The subspace of $U(N)$ invariant polynomials is denoted by
\begin{equation}
    \mathcal{H}_N = \qty{ \poly (Z,W) \in \mathbb{C}[Z,W] \, \vert \, 
    \poly (UZU^\dagger , UWU^\dagger) = \poly (Z,W) \quad \forall U \in U(N)}.
\end{equation}
The space of invariants can be organised into subspaces of homogeneous bidegree, where $Z$ has degree $(1,0)$ and $W$ has degree $(0,1)$.
The subspaces of homogeneous degree $(m,n)$ are defined as
\begin{equation}
    \mathcal{H}_N(m,n) = \{ \poly(Z,W) \in \mathcal{H}_N \, \vert \, \poly(x Z, y W) = x^m y^n \poly(Z,W) \, \forall x,y \in \mathbb{C}\},
\end{equation}
and give a decomposition
\begin{equation}
    \mathcal{H}_N = \bigoplus_{m,n \geq 0} \mathcal{H}_N(m,n).
\end{equation}
\end{definition}
As a consequence of the first fundamental theorem of invariant theory for $GL(N)$ \cite{Weyl1939classical,GoodmanWallach2009}, the space $\mathcal{H}_N(m,n)$ is spanned by multi-traces.
Multi-traces form a basis in the stable regime, but have linear dependences in the unstable regime.
\begin{example}\label{ex: H_N(2,1)}
All elements in $\mathcal{H}_N(2,1)$ are linear combinations of 
\begin{equation}
\tr(ZZW), \quad \tr(ZZ)\tr(W), \quad \tr(ZW)\tr(Z), \quad \tr(Z)^2 \tr(W).
\end{equation}
In the stable regime $N \geq 3$, these elements form a basis. For $N < 3$ they are linearly dependent.
\end{example}

\subsubsection{Hilbert space at finite $N$}
Permutation centraliser algebras give a useful description of the space of invariants.
They form subalgebras of the group algebras $\mathbb{C}[S_L]$ of symmetric groups.
\begin{definition}[Symmetric group algebra]
Let $S_L$ be the symmetric group on $L$ letters.
The symmetric group algebra $\mathbb{C}[S_L]$ is the complex vector space whose basis is $S_L$
\begin{equation}
\mathbb{C}[S_L]
= \Big\{ \sum_{g\in S_L} c_g \, g \, \Big| \, c_g \in\mathbb{C} \Big\}.
\end{equation}
It becomes a finite-dimensional associative algebra when the group multiplication is extended bilinearly
\begin{equation}
\Big( \sum_{g \in S_L} c_{g}  g \Big)
\Big( \sum_{g' \in S_L} c_{g'} {g'} \Big)
= \sum_{g, g' \in S_L} c_{g}  \, c_{g'} \, g g' .
\end{equation}
\end{definition}
\begin{example}
    The product of elements $\sigma, \sigma' \in \mathbb{C}[S_3]$ given by
    \begin{equation}
        \sigma = (123) + (132), \quad \sigma' = (12) + (13)
    \end{equation}
    is given by
    \begin{equation}
        \sigma \sigma' = (123)(12) + (123)(13) + (132)(12) + (132)(13) = (12) + (13) + 2 \, (23).
    \end{equation}
We use the convention $(g_1g_2)(i)=g_2(g_1(i))$ following \cite{Padellaro:2024rld}.
\end{example}
The permutation centraliser algebras of interest are associated with subgroups $S_m \times S_n \subset S_{m+n}$ which permute the first $m$ objects among themselves and the last $n$ objects among themselves.
\begin{definition}[Permutation centraliser algebra \cite{Mattioli:2016eyp}]
The permutation centraliser algebra (PCA) at degree $(m,n)$ is the subspace of the group algebra made out of elements that commute with $S_m \times S_n$
\begin{equation}
\cA(m,n)
=
\{\sigma \in \mathbb{C}[S_{m+n}] \, \vert  \, \gamma \sigma \gamma^{-1}=\sigma, \quad \forall \gamma \in S_m \times S_n \}.
\label{def:PCA}
\end{equation}
\end{definition}
\begin{example}
    The group algebra element $\sigma = (13)(24) - (23)(14)$ is not in $\cA(2,2)$, but $\sigma = (13)(24) + (14)(23)$ is an element of $\cA(2,2)$.
\end{example}

The algebra $\cA(m,n)$ maps surjectively onto $\mathcal{H}_{N}(m,n)$.
Intuitively, this is because the following explicit map relates $\cA(m,n)$ to the space of all multi-trace invariants.
\begin{definition}[Permutations-to-invariants map]\label{def:trace map}
For $g\in S_{m+n}$\,, we define the multi-trace $U(N)$ invariant
\begin{equation}
\swmap^{(N)}_{m,n}(g) \equiv \sum_{i_1,\dots,i_{m+n}=1}^N
\prod_{p=1}^m Z_{i_p i_{g(p)}}
\prod_{q=m+1}^{m+n} W_{i_q i_{g(q)}} \,.
\label{eq:Og-definition}
\end{equation}
For a group-algebra element $\sigma=\sum_{g\in S_{m+n}}c_g(\sigma)g$, we linearly extend the above map,
\begin{equation}
\swmap^{(N)}_{m,n}(\sigma) \equiv
\sum_{g\in S_{m+n}}c_g(\sigma)\,\swmap^{(N)}_{m,n}(g) .
\label{eq:Osigma-definition}
\end{equation}
Since $\swmap^{(N)}_{m,n}(\sigma) = \swmap^{(N)}_{m,n}(h \sigma h^{-1})$ for any $h \in S_m \times S_n$\,, we obtain a map from permutation centraliser algebras to invariants
\begin{equation}
\swmap^{(N)}_{m,n}: \cA(m,n) \rightarrow \mathcal{H}_N(m,n).
\end{equation}
\end{definition}
\begin{example}
Consider $\sigma = (13)(24) + (14)(23)$ with $m=n=2$, then
    \begin{equation}
        \swmap^{(N)}_{2,2}(\sigma)
        = \sum_{i_1,i_2,i_3,i_4=1}^N \Big( Z_{i_1 i_3}Z_{i_2 i_4} W_{i_3 i_1} W_{i_4 i_2} + Z_{i_1 i_4}Z_{i_2 i_3} W_{i_3 i_2} W_{i_4 i_1} \Big)
        = 2 \tr(ZW)\tr(ZW).
    \end{equation}
\end{example}

The PCA $\cA(m,n)$ has a natural/combinatorial basis called the orbit basis.
For an element $g \in S_{m+n}$, the set 
\begin{equation}
    \{g\}_{m,n} =\{hgh^{-1}\mid h\in S_m \times S_n\},
\end{equation}
of elements related to $g$ by conjugation of the subgroup $S_m \times S_n$ is called the $S_m \times S_n$-orbit of $g$.
The group $S_{m+n}$ is partitioned into the set of distinct orbits.
Let $\{ g_1,\dots, g_l\}$ be a complete set of representatives of orbits and $l = \dim \cA(m,n)$, then
\begin{equation}
    S_L = \bigcup_{k=1}^{l} \{ g_k \}_{m,n} \,, \qquad
    \{ g_k \}_{m,n} \cap \{ g_{k'} \}_{m,n}= 
    \begin{cases}
        \{ g_k \}_{m,n}, &\text{ if $k = k'$}\\
        \emptyset, &\text{ if $k \neq k'$}
    \end{cases}
\end{equation}
Orbit representatives are used to define the orbit basis.
\begin{definition}[Orbit basis]
Let $\{ g_1,\dots, g_l\}$ be a complete set of representatives of $S_m \times S_n$ orbits of $S_{m+n}$.
For each representative $g_k$, define the following element in $\cA(m,n)$
\begin{equation}
    \orbitP{m,n}(g_k)=\sum_{h \in S_m \times S_n } hg_kh^{-1} \,.
\label{eq:orbit-sum}
\end{equation}
We extend $P_{m,n}$ linearly to $\mathbb{C}[S_{m+n}]$.
They form the orbit basis of $\cA(m,n)$
\begin{equation}
\cA(m,n)=
\Span_{\mathbb C}\left( \orbitP{m,n}(g_1),\dots, \orbitP{m,n}(g_l)\right).
\label{eq:orbit-basis}
\end{equation}
\end{definition}
\begin{example}
    \label{ex:cA22}
    The PCA $\cA(2,2)$ has dimension $l = 10$. A complete set of representatives is
    \begin{equation}
        \begin{aligned}
            &g_1 = (1)(2)(3)(4),\,
            &&g_2 = (1)(2)(34), \, 
            &&g_3 = (12)(3)(4), \,
            &&g_4 = (13)(2)(4), \,
            &&g_5 = (1)(234),   \,  \\
            &g_6 = (4)(123),    \,
            &&g_7 = (12)(34),   \,
            &&g_8 = (13)(24),   \,
            &&g_9 = (1234),     \,
            &&g_{10} = (1324)\, .
        \end{aligned}
    \end{equation}
    The choice of representatives is not unique, but the corresponding orbit sums are independent of the choice.
\end{example}

Permutation centraliser algebras have a natural Hilbert space (inner product) structure.
\begin{definition}[Orbit-basis inner product]
    \label{def:orbit-basis inner prod}
Define the linear function $\delta: \mathbb{C} [S_{m+n}] \rightarrow \mathbb{C}$ by
\begin{equation}
    \delta(g) = \begin{cases}
        1, &\text{ if $g=\mathrm{id}$}\\
        0, &\text{ if $g\neq \mathrm{id}$}.
    \end{cases}
\end{equation}
A natural inner product on $\cA(m,n)$ is given by
\begin{equation}
\langle \sigma, \sigma' \rangle = \delta(\sigma \cS(\sigma'))
\label{eq: inner prod}
\end{equation}
where $\cS$ is the involution
\begin{equation}
    \mathcal{S}\qty(\sum_{k=1}^l c_k \orbitP{m,n}(g_k)) = \sum_{k=1}^l c_k^* \orbitP{m,n}(g_k^{-1}).
\end{equation}
The orbit basis is orthogonal with respect to this inner product, which we call the orbit-basis inner product.
\end{definition}

\begin{definition}[Integer partitions]
The symbol $R \vdash L$ denotes an integer partition $R$ of $L$, or equivalently a Young diagram $R$ with $L$ boxes. The height of $R$ is the number of non-zero entries of $R$ denoted by $\ell(R)$. 
For example, the three integer partitions of $L=3$ have heights
\begin{equation}
\ell([3]) = 1, \qquad
\ell([2,1]) = 2, \qquad
\ell([1,1,1]) = 3.
\end{equation}
\end{definition}

The dimension of $\cA(m,n)$ is given by a sum of representation-theoretic numbers \cite{Collins:2008gc, Mattioli:2016eyp},
\begin{equation}
    \dim \cA(m,n) = \sum_{R \vdash L} g(R)
\label{eq: dimcA}
\end{equation}
where the sum runs over integer partitions (Young diagrams) $R \vdash L$, and
\begin{equation}
g(R) = \sum_{\substack{r_1 \vdash m \\ r_2 \vdash n}} g(r_1,r_2; R)^2 
= \sum_{\substack{\Lambda \vdash m+n \\ \ell(\Lambda) \leq 2}} C(R,R,\Lambda) K_{\Lambda, (m,n)} \,.
\label{eq: g(R)}
\end{equation}
Here $g(r_1,r_2; R)$ are Littlewood-Richardson coefficients, $C(R,R,\Lambda)$ are Kronecker coefficients and $K_{\Lambda, (m,n)}$ are Kostka numbers.

As mentioned, the permutations-to-invariants map is surjective but not injective for general $N$.
In other words, the map has a non-trivial kernel/nullspace, which we will describe using special eigenbases for $\cA(m,n)$.
These bases are called SEP bases and to define them we need to introduce the left action of the centre of the group algebra on $\cA(m,n)$.
\begin{definition}[{Centre of $\mathbb{C} [S_L]$}]
The centre of $\mathbb{C}[S_L]$ is the subspace of elements that commute with all of $\mathbb{C}[S_L]$
\begin{equation}
    \mathcal{Z} ( \mathbb{C}[S_L] ) = \{ \centz \in \mathbb{C}[S_L] \, \vert \, g \centz = \centz g \quad \forall g \in S_L \}.
\end{equation}
It is also equivalent to the permutation centraliser algebra $\cA(L,0)$.
\end{definition}
\begin{example}
    The relevant orbits for $\cA(L,0)$ are $S_L$-orbits of $S_L$, or equivalently, conjugacy classes of $S_L$.
    Therefore, the centre has an orbit basis labelled by conjugacy classes. For $L=3$ the orbit basis is given by
    \begin{equation}
        P_{L,0}(\mathrm{id}) \propto \mathrm{id}, \quad 
        P_{L,0}((12)) \propto (12) + (13) + (23), \quad 
        P_{L,0}((123)) \propto (123) + (132) \; .
    \end{equation}
\end{example}
\noindent
The centre acts on $\cA(m,n)$ by left multiplication.
\begin{definition}[Left centre action]\label{def:left center}
Let $\centz \in \mathcal{Z}(\mathbb{C}[S_L])$ have an expansion
\begin{equation}
    \centz = \sum_{g \in S_L} c_g(\centz) g.
\end{equation}
The left action of $\centz$ on $\sigma \in \cA(m,n)$ is given by
\begin{equation}
m^{\mathfrak{L}}: \mathcal{Z}(\mathbb{C}[S_{m+n}]) \otimes \cA(m,n) \rightarrow \cA(m,n) , \qquad
m^{\mathfrak{L}}[\centz ](\sigma) = \sum_{g \in S_L} c_g(\centz ) g \sigma.
\end{equation}
\end{definition}

For a central element satisfying $\cS(\centz)=\centz$, the left action is self-adjoint with respect to the orbit-basis inner product \eqref{eq: inner prod} \cite{Padellaro:2024rld}.
Since all central elements commute, it follows that there exists a simultaneous eigenbasis (of the centre) for $\cA(m,n)$.
\begin{definition}[SEP basis]
\label{def:finite-N basis}
A basis for $\cA(m,n)$ is called SEP
if it simultaneously diagonalises the left action of the centre.
Such bases can always be labelled by an integer partition $R \vdash L$ and $a \in \{1,\dots, g(R)\}$, where $g(R)$ is defined in equation \eqref{eq: g(R)}.
For example, write
\begin{equation}
    \mathcal{Q}^{R}_a = \sum_{k=1}^l \mathcal{Q}^{R}_a(g_k) \orbitP{m,n}(g_k)
\label{cQRa expand}
\end{equation}
then the coefficients $\mathcal{Q}^{R}_a(g_k)$ are constrained by demanding
\begin{equation}
    m^{\mathfrak{L}}[ \centz ](\mathcal{Q}^R_a) = \frac{\chi^R(\centz )}{d_R} \mathcal{Q}^R_a.
\label{left-center Q^R_a}
\end{equation}
Here $\chi^R (\sigma)$ is the $S_L$ character for $\sigma \in \bb{C}[S_L]$, $d_R$ is the dimension of the irreducible representation of $S_L$ labelled by $R$, and $\frac{\chi^R(\centz )}{d_R}$ is called a central character.
\end{definition}
\begin{example}
    The SEP bases for $L=m$ are (proportional to) central projectors. For example, at $L=3$ we have
    \begin{equation}
        \cQ^{[3]} \propto \sum_{g \in S_3} g, \quad \cQ^{[2,1]} \propto 2\, \mathrm{id} - (123) - (132), \quad \cQ^{[1,1,1]} \propto \sum_{g \in S_3} \mathrm{sgn}(g)  g.
    \end{equation}
\end{example}
\begin{remark}
\label{remark: multiplicity}
It is often useful to further refine/resolve the multiplicity label $a$ using operators that commute with the left action of the centre.
For example, one resolution involves picking out fixed irreducible representations $\Lambda$ of the $U(2)$ symmetry rotating $Z$ and $W$, which is a global symmetry of four-dimensional $\mathcal{N}=4$ SYM.
We will come back to this point at the end of this Subsection.
\end{remark}

Cutoffs on the height $\ell(R)$ of the Young diagram labels on SEP bases define subspaces of $\cA(m,n)$ that describe the image and kernel of the permutations-to-invariants map.
\begin{definition}[Finite-$N$ PCA subspace]
\label{def:cAN}
Let $\cQ^R_a$ be a SEP basis.
For each positive integer $N$, $\cA(m,n)$ has a subspace
\begin{equation}
    \cA_N(m,n) = \Span_\mathbb{C}(\cQ^{R}_a \, \vert \, R \vdash (m+n), \ell(R) \leq N, a \in \{1,\dots,g(R)\}) .
\end{equation}
\end{definition}
\noindent
As we will now prove, the finite-$N$ PCA subspace is isomorphic to the image of the permutations-to-invariants map and the kernel forms a complement \cite{Corley:2001zk, Brown:2007xh}.
We introduce notation for this isomorphism and state it as a proposition.
\begin{mainproposition}
\label{prop: AN iso HN}
Let the restriction of the permutations-to-invariants map to $\cA_N(m,n)$ be denoted
\begin{equation}
    \hat \swmap^{(N)}_{m,n}: \cA_N(m,n) \rightarrow \cH_N(m,n).
\end{equation}
It gives an isomorphism
\begin{equation}
    \cA_N(m,n) \cong \mathcal{H}_N(m,n).
\end{equation}
\end{mainproposition}

To prove this, we will use the so-called $U(N)$ inner product. 
\begin{definition}[$U(N)$ inner product]\label{prop:UN inner}
    The $U(N)$ inner product on $\cA(m,n)$ is defined as the pullback of the free two-point function of $\mathcal{N}=4$ SYM with $U(N)$ gauge group.
For $\sigma_1 , \sigma_2 \in \cA(m,n)$ 
\begin{equation} 
\begin{aligned}
\langle \sigma_1,\sigma_2\rangle_{U(N)} &\equiv 
\left\langle \swmap^{(N)}_{m,n}(\sigma_1) \, \swmap^{(N)}_{m,n}(\sigma_2)^\dagger \right\rangle_{\cN=4} ,
\\
\text{where } \swmap^{(N)}_{m,n}(g)^\dagger &= \sum_{i_1,\dots,i_L=1}^N
\prod_{p=1}^m Z^\dagger_{i_p i_{g^{-1}(p)}} \prod_{q=m+1}^L W^\dagger_{i_q i_{g^{-1} (q)}} \,,
\end{aligned}
\label{eq:UN-pullback} 
\end{equation}
and the free two-point function is computed from the elementary $U(N)$ Wick contractions
\begin{equation} 
\left\langle Z_{ij}Z^\dagger_{kl}\right\rangle_{\cN=4}
=\left\langle W_{ij}W^\dagger_{kl}\right\rangle_{\cN=4}
=\delta_{il}\delta_{jk}. 
\label{eq:UN-wick} 
\end{equation}
\end{definition}
 This two-point function is positive definite, because it agrees with the expectation values defined by a Gaussian integral on the space of two complex $N\times N$ matrices. Hence $\Vev{O O^\dagger}_{\cN=4}=0$ if and only if $O=0$.
The $U(N)$ inner product is related to the orbit-basis inner product \eqref{eq: inner prod} by
\begin{equation}
\langle \sigma_1, \sigma_2 \rangle_{U(N)} = m! n! \, \langle \Omega_N \, \sigma_1, \sigma_2 \rangle 
= \sum_{h \in S_m \times S_n}
\delta \Big( \Omega_N h \, \sigma_1 h^{-1} \cS (\sigma_2 ) \Big), \qquad
\Omega_N = \sum_{h \in S_L} N^{{\rm cycles} (h)} h .
\label{eq: U(N) inner prod}
\end{equation}
where ${\rm cycles}(h)$ is the number of permutation cycles in $h \in S_L$ (see \cite{Corley:2001zk}).
In other words, the orbit-basis inner product agrees, up to a normalisation factor, with the $N\rightarrow\infty$ limit of the $U(N)$ inner product \cite{Padellaro:2024rld}.
The element $\Omega_N$ belongs to the centre $\mathcal{Z}(\mathbb{C}[S_{m+n}])$ and is therefore diagonal in a SEP basis.
The eigenvalue of $\cQ^R_a$ is equal to the product of box weights
\begin{equation}
f^R =\prod_{(i,j)\in R}(N+j-i) , \qquad
\begin{cases}
f^R=0, & \text{ if $\ell(R)> N$} \\
f^R > 0, & \text{ if $\ell(R)\le N$}. \\
\end{cases}
\label{eq:fR-vanishing}
\end{equation}

The following result on the kernel of the permutations-to-invariants map was given in \cite{Brown:2007xh}, formulated using covariant bases.
For completeness we give a small extension of the proof using general SEP bases.
\begin{mainproposition}\label{thm:UN-inv-fin}
The kernel of the permutations-to-invariants map is spanned by the SEP-basis elements labelled by Young diagrams with more than $N$ rows
\begin{equation}
    \ker(\swmap^{(N)}_{m,n})=\Span_{\mathbb C}\bigl\{\cQ^R_a\mid\ell (R)>N\bigr\}. 
\label{eq:radical-kernel} 
\end{equation}
The restriction of $\langle\cdot,\cdot\rangle_{U(N)}$ to $\cA_N(m,n)$ is positive definite.
Moreover, the label $R$ in SEP bases gives an orthogonal decomposition with respect to the orbit-basis inner product:
\begin{equation}
    \cA(m,n) = \cA_N(m,n) \oplus \ker\bigl(\swmap^{(N)}_{m,n}\bigr).
\label{eq:PCA-orth-decomp}
\end{equation}
\end{mainproposition}
\begin{proof}
For the first claim, decompose an arbitrary element $\sigma\in\cA(m,n)$ into its $R$-sector components in the SEP basis:
\begin{equation}
\sigma=\sum_R\sigma_R, \qquad \sigma_R=\sum_a\sigma_{R,a} \, \cQ^R_a.
\end{equation}
Distinct $R$-sectors are orthogonal in the orbit-basis inner product, and $\Omega_N$ acts on each sector by the scalar $f^R$.
The $U(N)$ inner product in Definition \ref{prop:UN inner} therefore expresses the free two-point function as
\begin{equation}
    \left\langle \swmap^{(N)}_{m,n}(\sigma) \, \swmap^{(N)}_{m,n}(\sigma)^\dagger \right\rangle_{\cN=4}
    = \langle \sigma,\sigma\rangle_{U(N)}
    = m!n! \sum_R f^R \langle \sigma_R,\sigma_R\rangle .
\label{eq:SEP-norm}
\end{equation}
The orbit-basis inner product is positive definite, so $\langle \sigma_R,\sigma_R\rangle>0$ whenever $\sigma_R\neq0$.
Since $f^R>0$ for $\ell(R)\leq N$ and $f^R=0$ otherwise, the right-hand side vanishes if and only if $\sigma_R=0$ for every $R$ with $\ell(R)\leq N$.
The free two-point function is positive definite on $\cH_N(m,n)$, so the left-hand side vanishes if and only if $\sigma\in\ker(\swmap^{(N)}_{m,n})$.
Comparing these conditions proves \eqref{eq:radical-kernel}.
The same identity shows that the restriction of the $U(N)$ inner product to $\cA_N(m,n)$ is positive definite.
For the second claim, the two summands in \eqref{eq:PCA-orth-decomp} are spanned by the $R$-sectors with $\ell(R)\leq N$ and $\ell(R)>N$, respectively.
They exhaust $\cA(m,n)$ and are orthogonal in the orbit-basis inner product, proving the orthogonal decomposition.
\end{proof}

From basic linear algebra, the proposition we wanted to prove follows.
For completeness we outline the argument.
\begin{proof}[Proof of Proposition~\ref{prop: AN iso HN}]
The permutations-to-invariants map $\swmap^{(N)}_{m,n}: \cA(m,n) \rightarrow \cH_N(m,n)$ is a linear map between finite-dimensional vector spaces.
We use the result from linear algebra that says that the restricted map is an isomorphism to the image.
In the present case, this says that we have an isomorphism
\begin{equation}
    \hat \swmap^{(N)}_{m,n}: \cA_N(m,n) \rightarrow \im \swmap^{(N)}_{m,n}
\end{equation}
Since $\swmap^{(N)}_{m,n}$ is surjective, $\im \swmap^{(N)}_{m,n}  = \cH_N(m,n)$ and we get $\cA_N(m,n) \cong \cH_N(m,n)$.
In other words, the restricted permutations-to-invariants map is an isomorphism to the space of invariants.
\end{proof}

We now return to the Remark \ref{remark: multiplicity} on resolving the multiplicity labels of SEP basis elements.
The adjoint(conjugation) action of $S_L$ on $\bb{C}[S_L]$ is related to permutations of the field contents of the corresponding invariant. 
To see this, we introduce the vector $X^a_{ij} = (Z_{ij}, W_{ij})$. 
Note that for every $\gamma \in S_L$ we have the relabelling identity
\begin{equation}
\sum_{i_1, i_2 , \dots , i_L=1}^N 
X^{a_{\gamma(1)}}_{i_1 i_{g(1)}} \, X^{a_{\gamma(2)}}_{i_2 i_{g(2)}} \, \dots \, X^{a_{\gamma(L)}}_{i_L i_{g(L)}} 
=
\sum_{j_1, j_2 , \dots , j_L=1}^N 
X^{a_1}_{j_1 j_{h(1)}} \, X^{a_2}_{j_2 j_{h(2)}} \, \dots \, X^{a_L}_{j_L j_{h(L)}} \,, \qquad
h = \gamma^{-1} g \gamma\,.
\label{symm cov perm}
\end{equation}
The vector $X^a_{ij}$ forms the fundamental representation of $U(2)$ and the relabelling identity implies that the adjoint action is Schur-Weyl dual to the action of $U(2)$.
The covariant basis \cite{Brown:2007xh,Brown:2008ij} decomposes $\mathcal{H}_N$ into irreducible representations of $U(2)$.
In the language of PCAs, this decomposition corresponds to the diagonalisation of an adjoint action of the centre \cite{Kemp:2019log,Padellaro:2024rld} (see also \cite{Bornman:2016zwn} for the action of $U(2)$ on the restricted Schur basis).
\begin{definition}[Adjoint centre action]
The adjoint action of $\centz \in \mathcal{Z}(\mathbb{C}[S_{m+n}])$ on $\sigma \in \cA(m,n)$ is given by
    \begin{equation}
m^{\mathfrak{ad}}: \mathcal{Z}(\mathbb{C}[S_{m+n}]) \otimes \cA(m,n) \rightarrow \cA(m,n) , \qquad
m^{\mathfrak{ad}}[\centz](\sigma) = \sum_{g \in S_{m+n}} c_g(\centz) \, g \sigma g^{-1}.
    \end{equation}
For $\cS(\centz)=\centz$, the adjoint action is self-adjoint with respect to the orbit-basis inner product and commutes with the left action of the centre. 
The simultaneous eigenbases are labelled by $R \vdash (m+n)$ and $\Lambda \vdash (m+n)$ with $\ell(\Lambda) \leq 2$.
The partition $\Lambda = [\lambda_1,\lambda_2]$ labels an irreducible representation of $U(2)$.
Its restriction to $SU(2)$ has spin $(\lambda_1-\lambda_2)/2$, and its total $U(1)$ degree is $\lambda_1+\lambda_2$\,.
\end{definition}

Every irreducible representation $\Lambda$ of $U(2)$ has a highest-weight state (HWS) vector, unique up to scale.
For a fixed $\Lambda=[m,n]$ with $m\geq n\geq0$, these HWS vectors lie in $\mathcal{H}_N(m,n)$. 
Applying the lowering operator of $u(2)$
\begin{equation}
    J_{-} = \tr (W\pdv{}{Z})
\end{equation}
to the highest-weight vector in $\mathcal{H}_N(m,n)$ produces a state in $\mathcal{H}_N(m-1,n+1)$ if $m>n$, and a zero vector if $m=n$.
Applying the quadratic Casimir $\hat C_2$ of $u(2)$ leaves $(m,n)$ invariant and multiplies the state by an eigenvalue $C_2(\Lambda)$.
The $u(2)$ Casimir is related to the adjoint action of the central element \cite{Kimura:2008ac}
\begin{equation}
T_2 = \sum_{i<j}(ij) \in \mathcal{Z}(\bb{C}[S_{m+n}]), \qquad
m^{\mathfrak{ad}}[T_2](\sigma) = \sum_{1\leq i<j\leq L}(i,j)\,\sigma\,(i,j), \qquad \sigma\in\cA(m,n).
\label{eq:adjoint-T2}
\end{equation}
The eigenvalue of the adjoint action in the $\Lambda$ sector is the normalised character $\chi^\Lambda(T_2)/d_\Lambda$.
Therefore, the HWS subspace of matrix invariants in the irreducible representation $\Lambda = [m,n]$ is given by 
\begin{equation}
    \label{eq: cH HWS}
    \cH_N^{\mathrm{HWS}}(m,n) = \{ \poly(Z,W) \in \mathcal{H}_{N}(m,n) \, \vert \, \hat C_2 \poly(Z,W) = C_2([m,n]) \poly(Z,W) \}.
\end{equation}
Correspondingly, by the permutations-to-invariants map, it is the space
\begin{equation}
    \label{eq: cA HWS}
    \cA_N^{\mathrm{HWS}}(m,n) = \{ \sigma \in \cA_N(m,n) \, \vert \, m^{\mathfrak{ad}}[T_2](\sigma) = \frac{\chi^{[m,n]}(T_2)}{d_{[m,n]}} \sigma\}.
\end{equation}
We will also use the stable HWS spaces
\begin{equation}
    \label{eq: cA HWS stable}
    \cA^{\mathrm{HWS}}(m,n) = \{ \sigma \in \cA(m,n) \, \vert \, m^{\mathfrak{ad}}[T_2](\sigma) = \frac{\chi^{[m,n]}(T_2)}{d_{[m,n]}} \sigma\}.
\end{equation}
The dimension of $\cA_N^{\rm HWS} (m,n)$ follows from \eqref{eq: dimcA} and Definition \ref{def:cAN} as
\begin{equation}
\dim \cA_N^{\mathrm{HWS}}(m,n) = \sum_{\substack{R \vdash L \\ \ell(R) \le N}} C(R,R, [m,n]) 
\label{eq: dimcA HWS}
\end{equation}
where we used $K_{[m,n], (m,n)} = 1$ \cite{Brown:2007xh,Brown:2008ij}.

\subsubsection{One-loop dilatation operator}

We now move on to review the one-loop dilatation operator in permutation language.
The dilatation operator in the $\su2$ sector of $\mathcal N=4$ SYM can be expanded at small coupling as \cite{Beisert:2003tq}
\begin{equation}
     \fD_N =  \fD^{(0)}_N+\frac{N g_{\rm YM}^2}{8\pi^2} \,  \fD^{(1)}_N+\cdots 
\label{eq:dilatation-expansion}
\end{equation}
The tree-level (first) term counts the number of fields/letters 
\begin{equation}
     \fD^{(0)}_N = \tr\!\left(Z\frac{\partial}{\partial Z}+W\frac{\partial}{\partial W}\right), \qquad
\end{equation}
meaning that every $\poly \in \mathcal{H}_N(m,n)$ satisfies $ \fD^{(0)}_N \poly = L \, \poly$.
The one-loop (second) term is given by
\begin{equation}
     \fD^{(1)}_N = -\frac1N :\tr\!\left([Z,W]\left[\frac{\partial}{\partial Z},\frac{\partial}{\partial W}\right]\right): \ .
\end{equation}
The dilatation operator is Hermitian with respect to the two-point functions of $\mathcal{N}=4$ SYM.
Hence, it is diagonalisable and all eigenvalues are real. It also commutes with the action of $U(2)$ on $\mathcal{H}_N$ rotating $Z$ and $W$ into each other.

The one-loop dilatation can be formulated as a linear operator on permutation algebras, following \cite{Bellucci:2004ru,Pasukonis:2010rv}.
\begin{definition}[One-loop permutation dilatation operator]\label{def: 1LPDO}
For $h \in S_L$, define the intermediate map
\begin{equation}
    \bar H^{(1)}_{\gss}(h) \equiv \sum_{\substack{i=1\\ i\ne h(i)}}^L
\left[ h - (i,h(i))\, h\,(i,h(i)) \right]
 +{\gss} \sum_{i=1}^L \sum_{\substack{j=1\\ h(j)\ne i\\ j\ne i}}^L
\left[ h\, (i,h(j)) - (i,j)\, h\,(i,h(j))\,(i,j) \right].
\label{eq:D1-orbit-basis}
\end{equation}
This map is covariant under conjugation as shown in \eqref{app:D1 SL covariance},
\begin{equation}
\bar H^{(1)}_{\gss}(\gamma h\gamma^{-1}) = \gamma\bar H^{(1)}_{\gss}(h)\gamma^{-1}, \qquad \gamma\in S_L.
\label{eq:H1-conjugation}
\end{equation}
Consequently, $P_{m,n}\circ\bar H^{(1)}_{\gss}$ is independent of the representative chosen within each $S_m\times S_n$ orbit.
In the orbit basis, with representatives $\{g_1,\dots,g_l\}$, define the one-loop permutation dilatation operator by
\begin{equation}
    \begin{aligned}
        & H^{(1)}_{\gss}: \cA(m,n) \rightarrow \cA(m,n) \\
        & H^{(1)}_{\gss}(P_{m,n}(g_k)) \equiv P_{m,n} \circ \bar H^{(1)}_{\gss}(g_k).
    \end{aligned}
\end{equation}
The first term of \eqref{eq:D1-orbit-basis} is the planar contribution, while the second term is the non-planar contribution and carries an explicit factor of ${\gss}$.
Writing the full composition explicitly gives
\begin{equation}
    \begin{aligned}
    & H^{(1)}_{\gss}(\orbitP{m,n} (g_k))
    =
\sum_{\substack{i=1\\ i\ne g_k(i)}}^L
\left\{
\orbitP{m,n} (g_k) - \orbitP{m,n}\!\left((i,g_k(i))g_k(i,g_k(i))\right)
\right\}
\\
    &\quad\quad\quad\quad\quad\quad\quad\quad +{\gss}
\sum_{i=1}^L
\sum_{\substack{j=1\\ g_k(j)\ne i\\ j\ne i}}^L
\left\{
\orbitP{m,n} \!\left(g_k(i,g_k(j))\right)
- \orbitP{m,n} \!\left((i,j)g_k(i,g_k(j))(i,j)\right)
\right\}.
    \end{aligned}
\end{equation}
\end{definition}

\begin{mainproposition}\label{prop:D1 intertwining}
The one-loop permutation dilatation operator satisfies the intertwining relation via the permutations-to-invariants map,
\begin{equation}
\fD_N^{(1)}\!\left(\swmap^{(N)}_{m,n}(\sigma)\right) = \left.\swmap^{(N)}_{m,n}\!\left(H^{(1)}_{\gss}(\sigma)\right)\right|_{\gss=1/N}, \qquad \sigma\in\cA(m,n), \quad \forall N \in \mathbb{N}.
\label{eq:all-N intertwining}
\end{equation}
\end{mainproposition}
\begin{proof}
The direct calculation in Appendix \ref{app:induced action D1} gives
\begin{equation}
\fD_N^{(1)}\!\left(\swmap^{(N)}_{m,n}(g)\right) 
= \swmap^{(N)}_{m,n}\!\left( \bar H^{(1)}_{\xi} (g)\right) \Big|_{\xi=1/N} , \qquad g \in S_L\,, \quad \forall N \in \mathbb{N}
\label{eq:direct calc D1}
\end{equation}
where the right-hand side is given by \eqref{eq:D1-orbit-basis}.
Equation \eqref{eq:direct calc D1} is the pointwise identity for each $g \in S_L$.
By using the conjugation property of the permutations-to-invariants map,
\begin{equation}
\swmap^{(N)}_{m,n}\!\left(P_{m,n}(g)\right) = m!n!\,\swmap^{(N)}_{m,n}(g)
\end{equation}
and $\bar H^{(1)}_{\gss}$ in \eqref{eq:H1-conjugation}, we can rewrite \eqref{eq:direct calc D1} as the action on the orbit basis of $\cA(m,n)$.
\end{proof}

\subsection{Factorisation of characteristic polynomials at finite $N$}
In this subsection we prove the main theorem of this section (Theorem \ref{prop:finite-N factorisation}).
To make the proofs explicit, we will use SEP bases throughout this subsection.
In particular, let $\cQ^R_a$ be a SEP basis for $\cA(m,n)$ and define
\begin{equation}
    O^R_{a} = \hat \swmap^{(N)}_{m,n}(\cQ^R_a) \in \mathcal{H}_N(m,n).
\end{equation}
Recall that these elements form a basis for $\mathcal{H}_N(m,n)$
\begin{equation}
    \mathcal{H}_N(m,n) = \Span \Bigl( O^R_a \, \Big| \, R \vdash (m+n), \ell(R) \leq N, 1\leq a \leq g(R) \Bigr) .
\end{equation}
This choice of basis is very convenient because
\begin{equation}
    \label{eq: swmap SEP basis}
    \swmap^{(N)}_{m,n}(\cQ^R_a) =
    \begin{cases}
        O^R_a, &\text{ if $\ell(R) \leq N$}\\
        0, &\text{ if $\ell(R) > N$}.
    \end{cases}
\end{equation}
In the end, we will see that the statements we want to make are independent of the choice.

Let us first consider the relationship between $ \fD^{(1)}_N$ and $ H_{\gss}^{(1)}$ in these bases.
Proposition~\ref{prop:D1 intertwining} gives the intertwining relation
\begin{equation}
     \fD_N^{(1)}\swmap^{(N)}_{m,n}(\cQ^R_a)
    =
    \swmap^{(N)}_{m,n}( H_{\gss}^{(1)}(\cQ^R_a))
    \Big\vert_{\gss=\frac{1}{N}}\,.
\end{equation}
By \eqref{eq: swmap SEP basis}, the left-hand side vanishes for $\ell(R)>N$ and can be written as follows:
\begin{equation}
    \label{eq: Dside2}
     \fD_N^{(1)}\swmap^{(N)}_{m,n}(\cQ^R_a) =
    \begin{cases}
         \fD_N^{(1)} O^R_a, &\text{ if $\ell(R) \leq N$}\\
        0, &\text{ if $\ell(R) > N$.}
    \end{cases}
\end{equation}
For $\ell(R)\leq N$, the action of $ \fD_N^{(1)}$ on $O^R_a$ gives
\begin{equation}
    \label{eq: Dside}
     \fD_N^{(1)}  O^R_a  = \sum_{\substack{S \vdash (m+n) \\ \ell(S) \leq N}}\sum_{b=1}^{g(S)} 
     ( \fD_N^{(1)})_{ba}^{SR} O^{S}_b 
\end{equation}
where $( \fD_N^{(1)})_{ba}^{SR}$ are the matrix coefficients in the SEP basis.
On the right-hand side, expanding the action of $ H_{\gss}^{(1)}$ gives
\begin{equation}
    \label{eq: Hside}
    \swmap^{(N)}_{m,n}( H_{\gss}^{(1)}(\cQ^R_a))
    \Big\vert_{\gss=\frac{1}{N}}
    =
    \swmap^{(N)}_{m,n} \Big(
    \sum_{S \vdash (m+n) } \sum_{b=1}^{g(S)}
    (H_{\gss}^{(1)})_{ba}^{SR} \, \cQ^S_b \Big) \Big|_{\gss=\frac{1}{N}}
    =
    \sum_{\substack{S \vdash (m+n) \\ \ell(S)\leq N}}
    \sum_{b=1}^{g(S)}
    \left.( H_{\gss}^{(1)})_{ba}^{SR}\right|_{\gss=\frac{1}{N}}
    O^S_b ,
\end{equation}
where we used \eqref{eq: swmap SEP basis} to truncate the sum.

This implies a block triangular structure for $( H_{\gss}^{(1)})_{ba}^{SR}$ at $\gss=1/N$.
Examples of matrix elements are presented in Appendix \ref{app:mixing matrix data}.
\begin{lemma}
\label{cor:finite-N factorisation}
Let $( H_{\gss}^{(1)})_{ba}^{SR}$ be the matrix elements of the one-loop permutation dilatation operator in a SEP basis $\cQ^R_a$.
It has block triangular structure at $\gss = 1/N$
\begin{equation}
    ( H_{\gss}^{(1)})_{ba}^{SR}\Big \vert_{\gss=\frac{1}{N}} = \begin{pmatrix}  \fD_N^{(1)} & 0 \\ H_{21}^{(1)} & H_{22}^{(1)} \end{pmatrix}, 
\end{equation}
where the first and second blocks correspond to $\cA_N(m,n)$ and $\ker(\swmap^{(N)}_{m,n})$, respectively.
The zero block corresponds to $\ell(R)>N$ and $\ell(S)\leq N$.
The matrix in the top-left block is the operator $ \fD_N^{(1)}$ evaluated in the basis $O^R_a = \hat \swmap^{(N)}_{m,n}(\cQ^R_a)$.
\end{lemma}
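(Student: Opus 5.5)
The plan is to compare the two expansions \eqref{eq: Dside2}--\eqref{eq: Dside} and \eqref{eq: Hside} of the intertwining relation of Proposition~\ref{prop:D1 intertwining}. We evaluate both on each SEP basis element $\cQ^R_a$ and then extract coefficients. The only input beyond the intertwining relation is that $\{O^S_b \mid \ell(S)\le N\}$ is a basis of $\cH_N(m,n)$. This follows from Proposition~\ref{prop: AN iso HN}, since $\hat\swmap^{(N)}_{m,n}$ is an isomorphism and the $\cQ^S_b$ with $\ell(S)\le N$ span $\cA_N(m,n)$. Hence the $O^S_b$ are linearly independent, and equating coefficients is legitimate.

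First take $\ell(R)>N$. By \eqref{eq: Dside2} the left-hand side is $0$. By \eqref{eq: Hside} the right-hand side is $\sum_{\ell(S)\le N}\sum_b (H^{(1)}_\gss)^{SR}_{ba}|_{\gss=1/N}\,O^S_b$. Linear independence of the $O^S_b$ then forces
\begin{equation}
(H^{(1)}_\gss)^{SR}_{ba}\big|_{\gss=1/N}=0,\qquad \ell(R)>N,\ \ell(S)\le N.
\end{equation}
This is the zero block. Equivalently, $H^{(1)}_{1/N}$ maps $\ker\swmap^{(N)}_{m,n}$ into itself, using Proposition~\ref{thm:UN-inv-fin} to identify the kernel with $\Span\{\cQ^R_a\mid\ell(R)>N\}$.

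Next take $\ell(R)\le N$. Now \eqref{eq: Dside} and \eqref{eq: Hside} are two expansions of the same vector in the same basis, so $(\fD^{(1)}_N)^{SR}_{ba}=(H^{(1)}_\gss)^{SR}_{ba}|_{\gss=1/N}$ for all $\ell(S)\le N$. This identifies the top-left block with the matrix of $\fD^{(1)}_N$ in the basis $O^R_a$. The components with $\ell(S)>N$ are annihilated by $\swmap^{(N)}_{m,n}$. They are therefore unconstrained and make up $H^{(1)}_{21}$, while the $\ell(R)>N$, $\ell(S)>N$ entries make up $H^{(1)}_{22}$.

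There is no real obstacle here. The one point needing care is that the matrix elements $(H^{(1)}_\gss)^{SR}_{ba}$ are polynomials in $\gss$ in a fixed, $N$-independent SEP basis, so specialising to $\gss=1/N$ commutes with expanding in that basis. This holds because the SEP basis diagonalises the centre, which does not involve $\gss$. A second point is the block ordering convention: the block is upper-right in (row $S$, column $R$) notation once physical sectors are listed first. With that settled, the lemma follows directly.
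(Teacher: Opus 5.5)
Your proposal is correct and follows the paper's proof essentially step for step. You split into the cases $\ell(R)\le N$ and $\ell(R)>N$ and equate coefficients of $O^S_b$ ($\ell(S)\le N$) on the two sides of the intertwining relation, exactly as the paper does; your explicit appeal to Proposition~\ref{prop: AN iso HN} for the linear independence of the $O^S_b$ only spells out what the paper uses implicitly.
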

\begin{proof}
    We consider two cases.
    The first case is $R$ with $\ell(R) \leq N$ such that $\cQ^R_a \in \cA_N(m,n)$.
    Extracting coefficients of $O^S_b$ in \eqref{eq: Hside} and \eqref{eq: Dside}  gives the equalities
    \begin{equation}
        ( H_{\gss}^{(1)})_{ba}^{SR}\Big \vert_{\gss=\frac{1}{N}} = ( \fD_N^{(1)})_{ba}^{SR}, \qquad \forall R,S \vdash (m+n) \text{ s.t. } \ell(R) \leq N, \ell(S) \leq N.
    \end{equation}
    This is the upper-left block in the lemma statement.
    The second case is $\ell(R) > N$ such that $\cQ^R_a \in \ker \swmap^{(N)}_{m,n}$.
    In this case, extracting the coefficient of \eqref{eq: Dside2} gives zero and therefore
    \begin{equation}
        ( H_{\gss}^{(1)})_{ba}^{SR}\Big \vert_{\gss=\frac{1}{N}} = 0 \qquad \forall R,S \vdash (m+n) \text{ s.t. } \ell(R) > N, \ell(S) \leq N.
    \end{equation}
\end{proof}

The main theorem, which we recall for convenience, follows from the lemma.
We give examples of the Theorem in subsection \ref{subsec:ex charpoly}.
\begin{remaintheorem}[Polynomial factorisation]{prop:finite-N factorisation}
Consider the characteristic polynomials
    \begin{equation}
        \det_{\mathcal{H}_N(m,n)}(x -  \fD_N^{(1)}), \quad \det_{\cA(m,n)}(x -  H^{(1)}_{\gss}).
    \end{equation}
    For every positive integer $N$, the first polynomial divides the second after setting $\gss=N^{-1}$:
    \begin{equation}
        \det_{\cA(m,n)}(x -  H^{(1)}_{\gss})\Big\vert_{\gss=\frac{1}{N}} = Q^{(N)}_{m,n}(x)\det_{\cH_N(m,n)}(x - \fD_N^{(1)}) \, ,
    \end{equation}
    where $Q^{(N)}_{m,n}(x)$ is a polynomial factor. 
\end{remaintheorem}
\begin{proof}
    By Lemma \ref{cor:finite-N factorisation}, at $\gss=N^{-1}$ the characteristic polynomial is the product of the characteristic polynomials of the diagonal blocks in a SEP basis. 
\end{proof}
\begin{remark}
In the orbit basis, the one-loop permutation dilatation operator has rational entries.
SEP bases have integer vector solutions \cite{Padellaro:2024rld} and therefore the one-loop dilatation operator in a SEP basis has rational coefficients as well.
It follows that both factors are polynomials with rational coefficients.
\end{remark}

Note that the block triangular structure implies that every $\cQ^R_a$ with $\ell(R) > N$ gets mapped back into $\ker \swmap^{(N)}_{m,n}$.
Consequently, any linear combination also stays within the kernel and we get the basis independent version of lemma \ref{cor:finite-N factorisation}
\begin{corollary}
    \label{lemma:finite-N factorisation}
    Let $\swmap^{(N)}_{m,n}$ be the map that takes elements of $\cA(m,n)$ to invariant polynomials in $\mathcal{H}_{N}(m,n)$ and let $\ker\swmap^{(N)}_{m,n}$  be the subspace of elements that are mapped to zero.
    The one-loop permutation dilatation operator $ H^{(1)}_{\gss}: \cA(m,n) \rightarrow \cA(m,n)$ leaves $\ker\swmap^{(N)}_{m,n}$ invariant for $\gss=\tfrac{1}{N}$.
\end{corollary}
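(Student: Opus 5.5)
The most direct route avoids bases entirely and uses the intertwining relation of Proposition~\ref{prop:D1 intertwining}. Take any $\sigma \in \ker\swmap^{(N)}_{m,n}$, so that $\swmap^{(N)}_{m,n}(\sigma)=0$. Applying \eqref{eq:all-N intertwining} and linearity of $\fD_N^{(1)}$ gives
\begin{equation}
\swmap^{(N)}_{m,n}\bigl(H^{(1)}_{\gss}(\sigma)\bigr)\Big|_{\gss=1/N} = \fD_N^{(1)}\bigl(\swmap^{(N)}_{m,n}(\sigma)\bigr) = \fD_N^{(1)}(0) = 0 .
\end{equation}
This says exactly that $H^{(1)}_{\gss}\vert_{\gss=1/N}(\sigma) \in \ker\swmap^{(N)}_{m,n}$. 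Two small points make this rigorous:
\begin{itemize}
\item[(i)] $H^{(1)}_{\gss}\vert_{\gss=1/N}$ is a well-defined linear map $\cA(m,n)\to\cA(m,n)$, since the specialisation of the formal parameter commutes with the linear extension over the orbit basis.
\item[(ii)] $\swmap^{(N)}_{m,n}$ is linear, so the kernel is a linear subspace.
\end{itemize}
With these, invariance holds for every element of the kernel, not just for basis elements.

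As a consistency check, the statement can also be read off from Lemma~\ref{cor:finite-N factorisation}. Proposition~\ref{thm:UN-inv-fin} identifies $\ker\swmap^{(N)}_{m,n}$ with $\Span\{\cQ^R_a \mid \ell(R)>N\}$. The vanishing upper-right block in the lemma says that the image of each such $\cQ^R_a$ has no component along any $\cQ^S_b$ with $\ell(S)\le N$. Hence each such image lies in the same span. Extending by linearity over arbitrary combinations, the kernel is invariant, independently of which SEP basis was chosen.

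No step is a genuine obstacle: all the content sits in the intertwining relation, which is assumed from Proposition~\ref{prop:D1 intertwining} and Appendix~\ref{app:induced action D1}. The only care needed is to keep $\gss$ set to $1/N$ with the same $N$ that defines the kernel; for other values of $\gss$ the invariance need not hold.
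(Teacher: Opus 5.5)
Your argument is correct. It takes a more direct route than the paper, whose derivation your second paragraph reproduces.

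\textbf{The paper's route.} The paper proves the corollary as a consequence of Lemma~\ref{cor:finite-N factorisation}. There the intertwining relation is evaluated only on SEP basis elements $\cQ^R_a$. Proposition~\ref{thm:UN-inv-fin} is then used to identify $\ker\swmap^{(N)}_{m,n}$ with $\Span\{\cQ^R_a \mid \ell(R)>N\}$. The vanishing block is read off, and the result is extended by linearity.

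\textbf{Your route.} Your main argument applies Proposition~\ref{prop:D1 intertwining} directly to an arbitrary kernel element. It therefore needs only linearity and the intertwining relation. It does not use the SEP description of the kernel, so it avoids the positivity of the two-point function and the vanishing of $f^R$ for $\ell(R)>N$. It is also manifestly basis-independent, which is the form the paper says it wants.

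\textbf{What the paper's route buys.} It produces a concrete complement to the kernel, namely the orthogonal complement $\cA_N(m,n)$. On that complement the top-left block equals $\fD_N^{(1)}$. This is what makes the projected operator $P^{\cA_N(m,n)}\circ H^{(1)}_{\gss}$ in Corollary~\ref{prop: unstable 1LPDO} meaningful.

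\textbf{Extending your argument.} Your approach, pushed one step further, recovers Theorem~\ref{prop:finite-N factorisation} without any SEP basis. Kernel invariance means $H^{(1)}_{\gss}\vert_{\gss=1/N}$ descends to the quotient $\cA(m,n)/\ker\swmap^{(N)}_{m,n}$. Surjectivity of $\swmap^{(N)}_{m,n}$ makes this quotient isomorphic to $\cH_N(m,n)$, and the intertwining relation makes the induced map conjugate to $\fD_N^{(1)}$. A basis of the kernel, extended to $\cA(m,n)$, then gives a block-triangular matrix whose diagonal blocks are $H^{(1)}_{\gss}$ restricted to the kernel and the induced map on the quotient. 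The determinant therefore factorises with $Q^{(N)}_{m,n}(x)=\det_{\ker}(x-H^{(1)}_{\gss})\vert_{\gss=1/N}$.
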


It is also useful to have an independent statement about the top-left block.
For this, we introduce the orthogonal (with respect to the orbit-basis inner product) projector to $\cA_N(m,n)$
\begin{equation}
    P^{\cA_N(m,n)}: \cA(m,n) \rightarrow \cA_N(m,n).
\end{equation}
Restricting the one-loop permutation dilatation operator to act on $\cA_N(m,n)$ defines a composite map
\begin{equation}
    P^{\cA_N(m,n)} \circ  H^{(1)}_\gss: \cA_N(m,n) \rightarrow \cA_N(m,n).
\end{equation}
It is directly related to $ \fD^{(1)}_N$ by the restricted permutations-to-invariants map.
\begin{corollary}
\label{prop: unstable 1LPDO}
    The composite operator 
   \begin{equation}
       P^{\cA_N(m,n)} \circ  H^{(1)}_\gss\Big\vert_{\gss = 1/N}: \cA_N(m,n) \rightarrow \cA_N(m,n)
   \end{equation}
   is equivalent to the one-loop dilatation operator:
   \begin{equation}
       \label{eq: unstable 1LPDO prf}
       P^{\cA_N(m,n)} \circ  H^{(1)}_\gss \vert_{\gss=1/N} = (\hat \swmap^{(N)} _{m,n})^{-1} \circ \fD^{(1)}_N \circ \hat \swmap^{(N)}_{m,n} \, .
   \end{equation}
\end{corollary}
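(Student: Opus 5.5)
The plan is to verify the identity on the SEP basis elements $\cQ^R_a$ with $\ell(R)\le N$, since these span $\cA_N(m,n)$ by Definition~\ref{def:cAN}. Both sides are linear maps $\cA_N(m,n)\to\cA_N(m,n)$, so agreement on a basis suffices. The right-hand side is well defined because $\hat\swmap^{(N)}_{m,n}$ is an isomorphism by Proposition~\ref{prop: AN iso HN}.

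First, compute the left-hand side on $\cQ^R_a$. Expand $H^{(1)}_\gss(\cQ^R_a)$ in the SEP basis as in \eqref{eq: Hside}, giving $\sum_{S,b}(H^{(1)}_\gss)^{SR}_{ba}\cQ^S_b$. By Proposition~\ref{thm:UN-inv-fin}, the decomposition $\cA(m,n)=\cA_N(m,n)\oplus\ker\swmap^{(N)}_{m,n}$ is orthogonal with respect to the orbit-basis inner product. The kernel is spanned by the $\cQ^S_b$ with $\ell(S)>N$, and $\cA_N(m,n)$ by those with $\ell(S)\le N$. Hence the orthogonal projector $P^{\cA_N(m,n)}$ keeps exactly the terms with $\ell(S)\le N$ and annihilates the rest. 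This holds even though the SEP basis need not be orthonormal within a given $R$-sector: the projector only needs the sector splitting, and distinct $R$-sectors are orthogonal. At $\gss=1/N$ we therefore obtain
\begin{equation}
P^{\cA_N(m,n)}H^{(1)}_\gss(\cQ^R_a)\big|_{\gss=1/N}=\sum_{\ell(S)\le N}\sum_b (H^{(1)}_\gss)^{SR}_{ba}\big|_{\gss=1/N}\,\cQ^S_b .
\end{equation}

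Second, compute the right-hand side. Since $\hat\swmap^{(N)}_{m,n}(\cQ^R_a)=O^R_a$, equation \eqref{eq: Dside} gives $\fD^{(1)}_N O^R_a=\sum_{\ell(S)\le N,b}(\fD^{(1)}_N)^{SR}_{ba}O^S_b$. Applying the inverse isomorphism sends $O^S_b\mapsto\cQ^S_b$. Lemma~\ref{cor:finite-N factorisation} identifies the upper-left block: $(H^{(1)}_\gss)^{SR}_{ba}|_{\gss=1/N}=(\fD^{(1)}_N)^{SR}_{ba}$ for $\ell(R),\ell(S)\le N$. So the two expansions agree coefficient by coefficient, which proves \eqref{eq: unstable 1LPDO prf}.

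The only point requiring care is the identification of $P^{\cA_N(m,n)}$ with truncation to $\ell(S)\le N$ in the SEP expansion. This rests entirely on the orthogonality statement \eqref{eq:PCA-orth-decomp} and on $\ker\swmap^{(N)}_{m,n}=\Span\{\cQ^S_b:\ell(S)>N\}$ from \eqref{eq:radical-kernel}. The resulting identity is basis independent, because both sides are defined without reference to a particular SEP basis.
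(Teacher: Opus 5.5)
Your proposal is correct and is essentially the paper's own proof. The paper also evaluates both sides on SEP basis vectors $\cQ^R_a$ with $\ell(R)\le N$, lets $P^{\cA_N(m,n)}$ truncate to $\ell(S)\le N$, and matches coefficients using Lemma~\ref{cor:finite-N factorisation}; your explicit appeal to the orthogonality of distinct $R$-sectors is what justifies that truncation step.
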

\begin{proof}
The restricted permutations-to-invariants map is an isomorphism and therefore has a unique inverse.
For consistency with the rest of this section, take a SEP basis $\cQ^R_a$ with $\ell(R) \leq N$ for $\cA_N(m,n)$ and $O^R_a = \hat \swmap^{(N)}_{m,n}(\cQ^R_a)$ a basis for $\cH_N(m,n)$.
This gives
\begin{equation}
    (\hat \swmap^{(N)} _{m,n})^{-1}(O^R_a) = \cQ^R_a \quad \text{and}  \quad 
    P^{\cA_N(m,n)}(\cQ^R_a) = 
    \begin{cases}
        \cQ^R_a, &\text{ if $\ell(R) \leq N$ }\\
        0, &\text{ if $\ell(R) > N$. }
    \end{cases}
\end{equation}
and the l.h.s. of \eqref{eq: unstable 1LPDO prf} equals
\begin{equation}
    \begin{aligned}
        P^{\cA_N(m,n)} \circ  H^{(1)}_\gss(\cQ^R_a)\Big\vert_{\gss = 1/N}&= 
    \sum_{S\vdash m+n} \sum_{b = 1}^{g(S)}  ( H^{(1)}_\gss)_{ba}^{SR}\Big\vert_{\gss=1/N} P^{\cA_N(m,n)}(\cQ^S_b) \\
    &=
    \sum_{\substack{S\vdash m+n \\ \ell(S) \leq N}} \sum_{b = 1}^{g(S)}  ( H^{(1)}_\gss)_{ba}^{SR}\Big\vert_{\gss=1/N} \cQ^S_b
    \end{aligned}
\end{equation}
while the r.h.s. equals
\begin{equation}
    \begin{aligned}
        ((\hat \swmap^{(N)} _{m,n})^{-1} \circ \fD^{(1)}_N \circ \hat \swmap^{(N)}_{m,n})(\cQ^R_a) 
        &= 
        ((\hat \swmap^{(N)} _{m,n})^{-1} \circ \fD^{(1)}_N)(O^R_a) \\
        &= \sum_{\substack{S\vdash m+n \\ \ell(S) \leq N}} \sum_{b = 1}^{g(S)}  ( \fD^{(1)}_N)_{ba}^{SR} (\hat \swmap^{(N)} _{m,n})^{-1}(O^S_b) \\
        &= \sum_{\substack{S\vdash m+n \\ \ell(S) \leq N}} \sum_{b = 1}^{g(S)}  ( \fD^{(1)}_N)_{ba}^{SR} \cQ^S_b.
    \end{aligned}
\end{equation}
From Lemma \ref{cor:finite-N factorisation} the matrix elements agree and the two sides are equal.
Lastly, since the equality holds for all basis elements it holds at the level of operators.
\end{proof}

This composite operator satisfies a compatibility condition with respect to the $U(N)$ inner product.
\begin{corollary}\label{prop: unstable 1LPDO-2}
For any $\sigma_a,\sigma_b\in\cA_N(m,n)$, the operators in Corollary \ref{prop: unstable 1LPDO} satisfy the equation
\begin{equation}
\left.\left\langle \sigma_b,
P^{\cA_N(m,n)}\circ  H_{\gss}^{(1)}(\sigma_a)
\right\rangle_{U(N)}\right|_{\gss=1/N}
=
\left\langle
\swmap^{(N)}_{m,n}(\sigma_b)\,
\left(\fD_N^{(1)}\swmap^{(N)}_{m,n}(\sigma_a)\right)^\dagger
\right\rangle_{\cN=4}.
\label{eq:finite-N physical matching}
\end{equation}
\end{corollary}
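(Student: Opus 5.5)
The plan is to unwind the definition of the $U(N)$ inner product and then use Corollary \ref{prop: unstable 1LPDO} to move the projector onto the invariant side. By Definition \ref{prop:UN inner}, for any $\sigma_a,\sigma_b\in\cA_N(m,n)$ we have
\begin{equation}
\left\langle \sigma_b, P^{\cA_N(m,n)}\circ H^{(1)}_{\gss}(\sigma_a)\right\rangle_{U(N)}\Big|_{\gss=1/N}
=\left\langle \swmap^{(N)}_{m,n}(\sigma_b)\,\Bigl(\swmap^{(N)}_{m,n}\bigl(P^{\cA_N(m,n)}\circ H^{(1)}_{\gss}(\sigma_a)\big|_{\gss=1/N}\bigr)\Bigr)^\dagger\right\rangle_{\cN=4}.
\end{equation}
One point to check here is that setting $\gss=1/N$ commutes with taking the inner product. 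This holds because the $U(N)$ inner product at fixed $N$ is sesquilinear and the $\gss$-dependence is polynomial with real coefficients. The conjugation in the second slot therefore does not introduce $\gss^*$; if needed, we simply evaluate at $\gss=1/N$ first.

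The key step is to identify the argument of the second $\swmap^{(N)}_{m,n}$. The element $P^{\cA_N(m,n)}\circ H^{(1)}_{\gss}(\sigma_a)|_{\gss=1/N}$ lies in $\cA_N(m,n)$, so $\swmap^{(N)}_{m,n}$ acts on it as the restricted map $\hat\swmap^{(N)}_{m,n}$. Corollary \ref{prop: unstable 1LPDO} then gives
\begin{equation}
\hat\swmap^{(N)}_{m,n}\circ P^{\cA_N(m,n)}\circ H^{(1)}_{\gss}\big|_{\gss=1/N}(\sigma_a)=\fD^{(1)}_N\,\hat\swmap^{(N)}_{m,n}(\sigma_a)=\fD^{(1)}_N\,\swmap^{(N)}_{m,n}(\sigma_a),
\end{equation}
where the last equality uses $\sigma_a\in\cA_N(m,n)$.

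There is also a more direct route that avoids the corollary. Decompose $H^{(1)}_{\gss}(\sigma_a)|_{\gss=1/N}$ along the orthogonal decomposition \eqref{eq:PCA-orth-decomp}. The component in $\ker\swmap^{(N)}_{m,n}$ is annihilated by $\swmap^{(N)}_{m,n}$, so $\swmap^{(N)}_{m,n}\circ P^{\cA_N(m,n)}=\swmap^{(N)}_{m,n}$. The intertwining relation of Proposition \ref{prop:D1 intertwining} then finishes the computation.

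Substituting this into the two-point function gives the right-hand side of \eqref{eq:finite-N physical matching}. The only real subtlety is the bookkeeping of the projector: it is orthogonal with respect to the orbit-basis inner product, not the $U(N)$ one. This does not matter, because the argument only uses that the projector discards a component lying in $\ker\swmap^{(N)}_{m,n}$. I expect this to be the main point to state carefully, since everything else is immediate from the earlier results.
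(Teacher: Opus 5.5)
Your proposal is correct and follows the paper's proof. The paper likewise notes that the permutations-to-invariants map annihilates the component discarded by the projector, so that by Proposition~\ref{prop:D1 intertwining} the image of $P^{\cA_N(m,n)}\circ H^{(1)}_{\gss}(\sigma_a)$ at $\gss=1/N$ equals $\fD_N^{(1)}\swmap^{(N)}_{m,n}(\sigma_a)$ (your ``more direct route''), and then substitutes this into the definition of the $U(N)$ inner product; your primary route through Corollary~\ref{prop: unstable 1LPDO} is an equivalent repackaging of that same identity.
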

\begin{proof}
The following intertwining relation holds
\begin{equation}
\left.\swmap^{(N)}_{m,n}\!\left( H_{\gss}^{(1)}(\sigma)\right)\right|_{\gss=1/N}=
\left.\swmap^{(N)}_{m,n}\!\left(P^{\cA_N(m,n)}\circ H_{\gss}^{(1)}(\sigma)\right)\right|_{\gss=1/N}
= \fD_N^{(1)}\swmap^{(N)}_{m,n}(\sigma).
\label{eq:projected finite-N intertwining}
\end{equation}
Substituting \eqref{eq:projected finite-N intertwining} into the definition of the $U(N)$ inner product, Definition \ref{prop:UN inner}, gives \eqref{eq:finite-N physical matching}. 
\end{proof}

The factor $Q^{(N)}_{m,n}(x)$ in Theorem \ref{prop:finite-N factorisation} can be understood by introducing the orthogonal projector onto $\ker \swmap^{(N)}_{m,n}$
\begin{equation}
    [1 - P^{\cA_N(m,n)}]: \cA(m,n) \rightarrow \ker \swmap^{(N)}_{m,n}.
\end{equation}
and the composite operator
\begin{equation}
    [1 - P^{\cA_N(m,n)}] \circ  H^{(1)}_\gss: \ker \swmap^{(N)}_{m,n} \rightarrow
    \ker \swmap^{(N)}_{m,n}.
\end{equation}
By Lemma \ref{cor:finite-N factorisation} we have
\begin{equation}
    \det_{\cA(m,n)}(x -  H^{(1)}_\gss) \Big \vert_{\gss = 1/N} = 
    \det_{\cA_N(m,n)}(x - P^{\cA_N(m,n)} \circ  H^{(1)}_\gss)  \det_{\ker \swmap^{(N)}_{m,n}}(x - [1-P^{\cA_N(m,n)}] \circ  H^{(1)}_\gss) \Big \vert_{\gss = 1/N}.
\end{equation}
and
\begin{equation}
\label{eq: char poly equality}
    \begin{aligned}
        \det_{\cH_N(m,n)}(x -  \fD^{(1)}_N) &= \det_{\cA_N(m,n)}(x - P^{\cA_N(m,n)} \circ  H^{(1)}_\gss) \Big \vert_{\gss = 1/N}, \\
        Q^{(N)}_{m,n}(x) &= \det_{\ker \swmap^{(N)}_{m,n}}(x - [1-P^{\cA_N(m,n)}] \circ  H^{(1)}_\gss) \Big \vert_{\gss = 1/N}.
    \end{aligned}
\end{equation}

\subsection{Examples of characteristic polynomials}
\label{subsec:ex charpoly}
In this subsection we will show explicit examples of finite-$N$ factorisation of the one-loop permutation dilatation operator.
We compare the two characteristic polynomials appearing in Theorem \ref{prop:finite-N factorisation}.
Because the one-loop dilatation operator exhibits $U(2)$ symmetry rotating $Z$ into $W$, the spectrum contains many degenerate/redundant eigenvalues.
To reduce the amount of redundant information presented, we have restricted ourselves to the subspaces of highest-weight states under the $U(2)$ symmetry.
The conclusion of Theorem \ref{prop:finite-N factorisation} holds equally well within this subspace because the dilatation operator commutes with the quadratic Casimir $\hat C_2$ of $U(2)$.
The projector to $\cA_N(m,n)$ also commutes with the adjoint action of the centre of $\mathbb{C}[S_{m+n}]$.

The examples are computed from the one-parameter family obtained by restricting
$P^{\cA_N(m,n)}\circ H_{\gss}^{(1)}$ to $\cA_N(m,n)$. First, compute the planar and non-planar terms of the one-loop permutation dilatation operator in the orbit basis, using the convention
\begin{equation}
     H^{(1)}_{\gss}(\orbitP{m,n} (g_k)) = \sum_{k'=1}^{\dim \cA(m,n)} ( H^{(1)}_{\mathrm{p}})_{k'k} \orbitP{m,n}(g_{k'}) 
    + \gss \sum_{k'=1}^{\dim \cA(m,n)} ( H^{(1)}_{\mathrm{np}})_{k'k}  \orbitP{m,n}(g_{k'}).
\end{equation}
The matrix $ H^{(1)}_{\mathrm{p}}$ is the result of applying the planar part of \eqref{eq:D1-orbit-basis} and $ H^{(1)}_{\mathrm{np}}$ is the result of applying the non-planar part.
Next choose a SEP basis of the HWS subspace, with change-of-basis coefficients
\begin{equation}
\cQ^{R,\mathrm{HWS}}_a
=
\sum_{k=1}^{\dim \cA(m,n)}
\cQ^{R,\mathrm{HWS}}_{a,k}\orbitP{m,n}(g_k).
\end{equation}
Here $R\vdash L$ and $a=1,\dots,C(R,R,[m,n])$, as follows from \eqref{eq: dimcA HWS}. In this basis, the planar/non-planar decomposition of each matrix element is
\begin{equation}
\begin{aligned}
\sum_{S \vdash L } \sum_{b=1}^{C(S,S,[m,n])} ( H^{(1)}_{\mathrm{p}})^{SR}_{ba} \, \cQ^{S,\mathrm{HWS}}_{b,k} 
&= \sum_{k'=1}^{\dim \cA(m,n)} ( H^{(1)}_{\mathrm{p}})_{kk'} \cQ^{R,\mathrm{HWS}}_{a,k'} 
\\
\sum_{S \vdash L } \sum_{b=1}^{C(S,S,[m,n])} ( H^{(1)}_{\mathrm{np}})^{SR}_{ba} \, \cQ^{S,\mathrm{HWS}}_{b,k} 
&= \sum_{k'=1}^{\dim \cA(m,n)} ( H^{(1)}_{\mathrm{np}})_{kk'} \cQ^{R,\mathrm{HWS}}_{a,k'} \, .
\end{aligned}
\end{equation}
The full one-loop permutation dilatation operator in a SEP basis is given by the sum
\begin{equation}
\Big( P^{\cA_N(m,n)} \circ  H_{\gss}^{(1)} \Big)_{ba}^{SR}
= ( H^{(1)}_{\gss})^{SR}_{ba} 
= ( H^{(1)}_{\mathrm{p}})^{SR}_{ba} +\gss ( H^{(1)}_{\mathrm{np}})^{SR}_{ba}
\end{equation}
where we may omit $P^{\cA_N(m,n)}$ by choosing the finite-$N$ SEP label $S$ satisfying $\ell(S)\leq N$. 
Below we use the notation $\det_{\cA_N^{\mathrm{HWS}}(m,n)}(x-H_{\gss}^{(1)})$ without $P^{\cA_N(m,n)}$, assuming that both row and column labels are taken from a SEP basis of $\cA_N^{\mathrm{HWS}}(m,n)$.

At the physical value $\gss=N^{-1}$, the refinement of equation \eqref{eq: char poly equality} gives the equality of characteristic polynomials
\begin{equation}
\det_{\cH_N^{\mathrm{HWS}}(m,n)}(x-\fD_N^{(1)})
=
\det_{\cA_N^{\mathrm{HWS}}(m,n)}
\!\left(x- H_{\gss}^{(1)}\right)
\Big|_{\gss=1/N}.
\label{eq:HWS physical charpoly}
\end{equation}
We compare \eqref{eq:HWS physical charpoly} with the stable-space polynomial
$\det_{\cA^{\mathrm{HWS}}(m,n)}(x- H_{\gss}^{(1)})$, where no cutoff is imposed on the Young diagrams. Theorem \ref{prop:finite-N factorisation}, restricted to the HWS subspace, states that the former divides the latter after setting $\gss=N^{-1}$.

\subsubsection{$\cA(2,2)$}
The characteristic polynomial of the one-loop permutation dilatation operator in the stable regime of the HWS of $\Lambda = [2,2]$ is given by
\begin{equation}
    N\geq 4: \quad \det_{\cA^{\mathrm{HWS}}(2,2)}(x-  H^{(1)}_{\gss}) = x^2(x-6)
\end{equation}
for all $\gss$.
The characteristic polynomials in the unstable regimes $N=2,3$, computed through the actions on $\cA_N^{\mathrm{HWS}}(2,2)$, are equal to
\begin{equation}
    \begin{aligned}
        &N=2: \quad \det_{\mathcal{H}_2^{\mathrm{HWS}}(2,2)}(x-  \fD^{(1)}_{N=2})= x(x-6) \\ 
        &N=3: \quad \det_{\mathcal{H}_3^{\mathrm{HWS}}(2,2)}(x-  \fD^{(1)}_{N=3}) = x^2(x-6)
    \end{aligned}
\end{equation}
Clearly, the unstable polynomials divide the stable polynomial.

\subsubsection{$\cA(3,2)$}
The one-loop characteristic polynomial in the stable regime of the highest-weight space of $\Lambda=[3,2]$ is given by
\begin{equation}
    N\geq 5: \quad \det_{\cA^{\mathrm{HWS}}(3,2)}(x-  H^{(1)}_{\gss}) = x^4(x-4)(x-6)
\end{equation}
for all $\gss$.
In the unstable regimes $N=2,3,4$ they are equal to 
\begin{equation}
    \begin{aligned}
        &N=2:\quad \det_{\cH_2^{\mathrm{HWS}}(3,2)}(x-  \fD^{(1)}_{N=2})= x(x-6) \\
        &N=3:\quad \det_{\cH_3^{\mathrm{HWS}}(3,2)}(x-  \fD^{(1)}_{N=3})= x^3(x-4)(x-6) \\
        &N=4:\quad \det_{\cH_4^{\mathrm{HWS}}(3,2)}(x-  \fD^{(1)}_{N=4}) = x^4(x-4)(x-6) \\
    \end{aligned}
\end{equation}
Again, the unstable polynomials divide the stable polynomial, as expected.

\subsubsection{$\cA(4,2)$}
The HWS of $\Lambda=[4,2]$ is the first case where the spectrum in the stable regime depends on $\gss$.
The polynomial is given by
\begin{equation}
\begin{gathered}
N\geq 6: \quad \det_{\cA^{\mathrm{HWS}}(4,2)}(x-  H^{(1)}_{\gss}) = x^{10}(x-4)(x-6) q_\gss (x), \\
q_\gss (x) \equiv x^3-16x^2+(80-40\xi^2)x-120+80\xi^2.
\end{gathered}
\label{eq:charpoly A42}
\end{equation}
In the unstable regimes $N=2,3,4,5$ they are equal to
\begin{equation}
    \begin{aligned}
        &N=2:\quad \det_{\cH_2^{\mathrm{HWS}}(4,2)}(x-  \fD^{(1)}_{N=2})=x^2(x-6)(x-10)\\
        &N=3:\quad \det_{\cH_3^{\mathrm{HWS}}(4,2)}(x-  \fD^{(1)}_{N=3})=x^6(x-4)(x-6)(x^2-\frac{38}{3} x +\frac{100}{3}) \\
        &N=4:\quad \det_{\cH_4^{\mathrm{HWS}}(4,2)}(x-  \fD^{(1)}_{N=4})=x^9(x-4)(x-6)(x^3 - 16x^2 + \frac{155}{2}x - 115)\\
        &N=5:\quad \det_{\cH_5^{\mathrm{HWS}}(4,2)}(x-  \fD^{(1)}_{N=5})=x^{10}(x-4)(x-6)(x^3 - 16x^2 + \frac{392}{5}x - \frac{584}{5}) 
    \end{aligned}
\end{equation}
which divide \eqref{eq:charpoly A42} when $\gss=\tfrac{1}{N}$ since
\begin{equation}
    \begin{aligned}
        q_{\gss = 1/2} (x) &= (x^3 - 16x^2 + \frac{320-40}{4}x + \frac{80-480}{4}) = (x-10)(x^2-6x+10), \\
        q_{\gss = 1/3} (x) &=
        (x^3 - 16x^2 + \frac{720-40}{9}x + \frac{80-1080}{9})  =
        (x-\frac{10}{3})(x^2-\frac{38}{3}x+\frac{100}{3}), \\
        q_{\gss = 1/4} (x) &=
        (x^3 - 16x^2 + \frac{1280-40}{16}x + \frac{80-1920}{16})  =
        (x^3-16x^2 + \frac{155}{2}x- 115), \\
        q_{\gss = 1/5} (x) &=
        (x^3 - 16x^2 + \frac{2000-40}{25}x + \frac{80-3000}{25})  =
        (x^3-16x^2 + \frac{392}{5}x- \frac{584}{5}) \,.
    \end{aligned}
\end{equation}

\section{$SU(N)$ factorisation of $U(N)$ state spaces}\label{sec:SU(N)}

The two complex scalars of $\cN=4$ SYM with $U(N)$ gauge group can be re-organised into traceless matrices $Z',W'$ and a trace part $z,w$.
We introduce lower-case symbols for single-letter traces
\begin{equation}
z \equiv \frac{\tr(Z)}{N} = \frac{1}{N} \sum_{i=1}^N Z_{ii} \,, \qquad
w \equiv \frac{\tr(W)}{N} = \frac{1}{N} \sum_{i=1}^N W_{ii} \,.
\label{def:sl-trace}
\end{equation}
The variables $Z_{ij},W_{ij}$ and $Z'_{ij},W'_{ij}$ are related by
\begin{equation}
    \begin{aligned}
       &Z_{ij}= Z_{ij}'+\delta_{ij} \, z,\\
       &W_{ij}= W_{ij}'+\delta_{ij} \, w,
    \end{aligned}
    \quad \quad 
    \text{or equivalently}
    \quad \quad 
    \begin{aligned}
       &Z = Z' + I z\\
       &W = W' + I w
    \end{aligned}
\label{eq:traceless-variable}
\end{equation}
where $I$ is the identity matrix and $\tr(Z')=\tr(W')=0$. 
The multi-trace operators in the $U(N)$ gauge theory can be expressed in terms of $Z',W'$ and $z,w$.

We focus on the subspace of $SU(N)$-invariant polynomials in $\mathbb{C}[Z',W']$ that do not contain the single-letter traces $z,w$. 
We denote this subspace by
\begin{equation}
    \mathcal{H}'_N = \{ \poly (Z',W') \in \mathbb{C}[Z',W'] \, \vert \, \poly (U Z'U^\dagger, U W'U^\dagger ) = \poly (Z',W')  \quad \forall U \in SU(N)\}.
\label{def:cH'mnN}
\end{equation}
It is bi-graded:
\begin{equation}
\mathcal{H}'_N= \bigoplus_{m,n \geq 0} \mathcal{H}'_N(m,n), \qquad \cH'_N(1,0) = \cH'_N(0,1) = 0.
\end{equation}
There are no invariants of degree $(1,0)$ or $(0,1)$ because $\tr(Z') = \tr(W') = 0$.
For example, $\mathcal{H}_N'(2,1)$ is at most one-dimensional, spanned by
\begin{equation}
        \tr(Z'Z'W') \, .
\end{equation}
This should be compared to Example \ref{ex: H_N(2,1)}.

In this section we will show how the $U(N)$ Hilbert space $\mathcal{H}_N$ factorises into the tensor product of the polynomial ring $\mathbb{C}[z,w]$ and the Hilbert space $\mathcal{H}'_N$ of invariants constructed from traceless matrices:
\begin{equation}
\mathcal{H}_N \cong \mathcal{H}'_N \otimes \mathbb{C}[z,w].
\label{eq:hilb factorisation}
\end{equation}
We also show that this factorisation extends to the one-loop level. 
The one-loop dilatation operator commutes with multiplication by single-letter traces, and both $\mathcal{H}_N$ and $\mathcal{H}'_N$ are closed under its action.
We use the symbol $\fD_N^{(1)}$ for the operator on $\mathcal{H}_N$ and $\fD_N^{\prime(1)}$ for the operator on $\mathcal{H}'_N$\,.
For a fixed eigenvalue $e$, we define the corresponding eigenspaces in the $U(N)$ and $SU(N)$ theories by
\begin{equation}
\mathcal{H}_N(e) = \{ \poly \in \mathcal{H}_N \, \vert \,  \fD^{(1)}_N \poly = e \poly \}, \qquad
\mathcal{H}'_N(e) = \{ \poly \in \mathcal{H}'_N \, \vert \,  \fD^{\prime(1)}_N \poly = e \poly \}\,.
\label{def:eigen fD1}
\end{equation}
The factorisation in \eqref{eq:hilb factorisation} also holds for each one-loop eigenspace:
\begin{equation}
\mathcal{H}_N(e) \cong \mathcal{H}'_N(e) \otimes \mathbb{C}[z,w].
\label{eq:eigenspace factorisation}
\end{equation}

An important consequence of the factorisation is the relation between the characteristic polynomial of $U(N)$ and $SU(N)$ theories in the highest-weight sector of the global $U(2)$ symmetry:
\begin{remainproposition}[One-loop factorisation formula]{cor:suN vs uN HWS fixed e}
\begin{equation}
\det_{\mathcal{H}_N'^{\mathrm{HWS}}(m,n)}(x -  \fD^{\prime(1)}_N) 
=
\frac{ \det_{\mathcal{H}_N^\mathrm{HWS}(m,n)}(x- \fD^{(1)}_N)
\det_{\mathcal{H}_N^\mathrm{HWS}(m-1,n-1)}(x- \fD^{(1)}_N) }
{ [\det_{\mathcal{H}_N^\mathrm{HWS}(m-1,n)}(x- \fD^{(1)}_N)]^{(1-\delta_{m,n})}
\det_{\mathcal{H}_N^\mathrm{HWS}(m,n-1)}(x- \fD^{(1)}_N) } 
\end{equation}
which is valid for any $N$. 
\end{remainproposition}
\noindent
Proposition~\ref{cor:suN vs uN HWS fixed e} will be used to test our conjecture in Section \ref{sec:derangements}.

\subsection{Hilbert series factorisation}

At the level of counting, the factorisation of the Hilbert series is readily derived using Molien-Weyl's theorem.
For a basic review of the theorem for physicists see \cite{OConnor:2023ptf} and for a textbook treatment see \cite{cox1998using}.

The Hilbert series for $\mathcal{H}_N$ is the generating function
\begin{equation}
\cZ_{\text{free}}^{u(N)}(x,y)
=
\sum_{m,n=0}^\infty x^m y^n \, \dim \mathcal{H}_N({m,n}) \, .
\label{eq:uN Hilbert series}
\end{equation}
The Molien-Weyl theorem says that this is equivalently computed from a group integral with Haar measure on $U(N)$
\begin{equation}
    \cZ_{\text{free}}^{u(N)}(x,y) = \int_{U(N)} \frac{\mathrm{d}u }{\det(1 - x D^{Z}(u))\det(1 - y D^{W}(u))}.
\end{equation}
Here $D^Z(u)$ and $D^W(u)$ are two copies of the adjoint representation of $U(N)$, acting by $X\mapsto uXu^\dagger$ for $X = Z, W$, respectively.
Readers familiar with this integrand may recognise this as the following plethystic exponential:
\begin{equation}
    \cZ_{\text{free}}^{u(N)}(x,y) = \int_{U(N)} \mathrm{d}u \; \mathrm{exp}
    \qty( \sum_{k=1}^\infty \frac{x^k + y^k}{k} \, \Bigl[ {\rm tr} (u^k ) \, {\rm tr} (u^\dagger{}^k) \Bigl] ) 
\end{equation}
where $\tr$ is the trace in the fundamental representation.
As we have explained, these representations decompose into the space of traceless matrices and their trace (which transforms trivially for all $U \in U(N)$)
\begin{equation}
    D^{Z} \cong D^{Z'} \oplus D^{z}, \quad
    D^{W} \cong D^{W'} \oplus D^{w} \, .
\end{equation}
It follows that
\begin{equation}
    \begin{aligned}
        \det(1 - x D^{Z}(u)) &= \det(1 - x D^{Z'}(u))\det(1 - x D^{z}(u)) = \det(1 - x D^{Z'}(u)) (1-x) \\
        \det(1 - y D^{W}(u)) &= \det(1 - y D^{W'}(u))\det(1 - y D^{w}(u)) = \det(1 - y D^{W'}(u)) (1-y) 
    \end{aligned}
\end{equation}
and therefore
\begin{equation}
    \cZ_{\text{free}}^{u(N)}(x,y) =\frac{1}{(1-x)(1-y)} \int_{U(N)} \mathrm{d}u \frac{1}{\det(1 - x D^{Z'}(u))\det(1 - y D^{W'}(u))}.
\end{equation}
The remaining matrix integral is, by Molien-Weyl's theorem, the Hilbert series for $\mathcal{H}'_N$
\begin{equation}
\cZ_{\text{free}}^{su(N)}(x,y)
=
\sum_{m,n=0}^\infty x^m y^n \, \dim \mathcal{H}_N'({m,n}) 
=
\int_{SU(N)} \mathrm{d}u \frac{1}{\det(1 - x D^{Z'}(u))\det(1 - y D^{W'}(u))} .
\label{eq:suN Hilbert series}
\end{equation}
Here we integrated the $U(1)$ part of the $U(N)$ integral, which acts trivially on $Z', W'$ in the adjoint representation of $SU(N)$.
In other words,
\begin{equation}
    \cZ_{\text{free}}^{u(N)}(x,y) = \frac{\cZ_{\text{free}}^{su(N)}(x,y)}{(1-x)(1-y)} \,.
\label{eq: ZUN to ZSUN}
\end{equation}
Since $\mathbb{C}[z,w]$ has Hilbert series $(1-x)^{-1}(1-y)^{-1}$, this corresponds to \eqref{eq:hilb factorisation} at the level of Hilbert series.
As an immediate consequence, the fixed degree counting of traceless invariants becomes
\begin{equation}
    \label{eq:H' dim diff formula}
    \dim \mathcal{H}'_N({m,n})  = 
    \dim \mathcal{H}_N(m,n) -
    \dim \mathcal{H}_N(m-1,n) -
    \dim \mathcal{H}_N(m,n-1) +
    \dim \mathcal{H}_N(m-1,n-1)  \, .
\end{equation}
Some examples of these dimensions are given in Table \ref{tbl:intsec dims}.

Molien-Weyl's theorem can also be used to prove a similar factorisation of the generating function for (counting of) quarter-BPS operators.
It is expected that $\cZ_{\mathrm{BPS}}^{u(N)}(x,y)$ is given by the Hilbert series for the ring of $U(N)$ invariant polynomials in a pair of commuting matrices $\comm{Z}{W}=0$ \cite{Dolan:2007rq}.
A generic pair of commuting matrices $(Z,W)$ is simultaneously diagonalisable by a similarity transformation,
\begin{equation}
    (Z,W) \sim (\diag(z_1,\dots,z_N),  \diag(w_1,\dots,w_N))
\end{equation}
and since $U(N)$ invariant polynomials are also invariant under similarity transformations, an invariant polynomial on the space of commuting matrices is determined by its restriction to diagonal pairs. 
The remaining invariance is the subgroup $S_N \subset U(N)$ of permutation matrices simultaneously permuting the eigenvalues $(z_i,w_i) \mapsto (z_{s(i)}, w_{s(i)})$ for $s \in S_N$.
This is the direct sum of two defining/natural representations $V_N$ of $S_N$.
The natural representation decomposes into the trivial representation spanned by the identity matrix
\begin{equation}
    V_0 = \{ \lambda I \, \vert \, \lambda \in \mathbb{C} \}
\end{equation}
and the diagonal matrices/vectors
\begin{equation}
    V_H = \{ \diag(x_1,\dots,x_N) \, \vert \, \sum_{i=1}^N x_i = 0 \}.
\end{equation}
In terms of representation matrices
\begin{equation}
    D^{V_N} \cong D^{V_H} \oplus D^{V_0} \,.
\end{equation}
The Molien-Weyl formula in this case gives
\begin{align}
\cZ_{\mathrm{BPS}}^{u(N)}(x,y)
&= \frac{1}{N!}\sum_{g \in S_N} \frac{1}{\det(1-xD^{V_N}(g))\det(1-yD^{V_N}(g))} 
\notag \\
&= \frac{1}{N!}\sum_{g \in S_N} \frac{1}{\det(1-xD^{V_H}(g))\det(1-xD^{V_0}(g))\det(1-yD^{V_H}(g))\det(1-yD^{V_0}(g))} 
\notag \\
&= \frac{1}{(1-x)(1-y)}\cZ_{\mathrm{BPS}}^{su(N)}(x,y)
\end{align}
where
\begin{equation}
    \cZ_{\mathrm{BPS}}^{su(N)}(x,y) = \frac{1}{N!}\sum_{g \in S_N} \frac{1}{\det(1-xD^{V_H}(g))\det(1-yD^{V_H}(g))} 
\end{equation}
is the Hilbert series for invariants on pairs of commuting traceless matrices.
This is the counting equivalent of \eqref{eq:eigenspace factorisation} for the special case of $e = 0$.

\subsection{Hilbert space factorisation}
Given the intuition coming from the counting formulas, we will show that the same factorisation holds at the level of spaces of states.
\begin{mainproposition}[Hilbert space factorisation]
\label{prop:hilb factorisation}
Let $\mathcal{H}_N$ be the space of invariants in traceful matrices and $\mathcal{H}'_N$ the space of invariants in traceless matrices.
The two spaces are related by a factorisation
\begin{equation}
    \mathcal{H}_N \cong \mathcal{H}'_N \otimes \mathbb{C}[z,w],
\label{eq:poly-factorisation}
\end{equation}
where $\mathbb{C}[z,w]$ is a polynomial ring in two variables.
We will see that this ring is generated by single-letter traces.
\end{mainproposition}

We prove the proposition by giving two maps and showing that they are inverse.
Since elements $\polt \in \mathcal{H}'_N \otimes \mathbb{C}[z,w]$ are functions on the space of pairs of traceless matrices $Z',W'$ and two complex values $z,w$, we write $\polt(Z',W',z,w)$ for their value on points in this space.
\begin{definition}[Substitution maps]\label{def:substitution maps}
Using the identity \eqref{eq:traceless-variable}, we introduce two substitution maps.
The map $\Psi : \mathcal{H}_N \, \to \, \mathcal{H}'_N \otimes \mathbb{C}[z,w]$ is
\begin{equation}
{\Psi}[ \poly ] (Z',W',z,w) \equiv \poly (Z' + I z, W' + I w) \ \in \ \mathcal{H}'_N \otimes \mathbb{C}[z,w] 
\label{def:Psi map}
\end{equation}
and the map $\Phi : \mathcal{H}'_N \otimes \mathbb{C}[z,w] \, \to \, \mathcal{H}_N$ is
\begin{equation}
\begin{aligned}
{\Phi}[\poly' \otimes \polt ] (Z, W) 
&\equiv (\poly' \otimes \polt) \Big(Z - I \, \frac{\tr(Z)}{N}, W - I \, \frac{\tr(W)}{N} , \frac{\tr(Z)}{N}, \frac{\tr(W)}{N} \Big) 
\\[1mm]
&= 
\poly ' \Big(Z - I \, \frac{\tr(Z)}{N}, W - I \, \frac{\tr(W)}{N} \Big) \, 
\polt \Big( \frac{\tr(Z)}{N}, \frac{\tr(W)}{N} \Big) 
\ \in \ \mathcal{H}_N
\end{aligned}
\label{def:Phi map}
\end{equation}
together with linear extension.
\end{definition}
\begin{example}
For instance, the invariant $\tr(ZW)$ satisfies
\begin{equation}
\Psi[\tr(ZW)] = \tr(Z'W') + Nzw , \qquad
\Phi[\tr(Z'W')] = \tr(ZW) -  N z w.
\end{equation}
\end{example}

\begin{proof}[Proof of Proposition~\ref{prop:hilb factorisation}]
To show that $\Phi$ is a left inverse of $\Psi$ we compute
\begin{equation}
\begin{aligned}
\Phi[\Psi[\poly]](Z,W) &= \Psi[\poly] \Big(Z - I \, \frac{\tr(Z)}{N}, W - I \, \frac{\tr(W)}{N} , 
\frac{\tr(Z)}{N} , \frac{\tr(W)}{N} \Big)
\\[1mm]
&= \poly \Big( Z - I \, \frac{\tr(Z)}{N} + I \, \frac{\tr(Z)}{N} ,
W - I \, \frac{\tr(W)}{N} + I \, \frac{\tr(W)}{N} \Big) = \poly(Z,W).
\end{aligned}
\end{equation}

To show that $\Phi$ is a right inverse of $\Psi$ we compute
\begin{equation}
\begin{aligned}
&\Psi[\Phi[\poly' \otimes \polt]](Z',W',z,w) \\
&= \Phi[\poly' \otimes \polt](Z'+Iz,W'+Iw)  \\
&= (\poly' \otimes \polt) \Big(Z'+Iz-I \, \frac{\tr (Z'+Iz)}{N}, W'+Iw-I \, \frac{\tr(W'+Iw)}{N} , 
\frac{\tr(Z'+Iz)}{N}, \frac{\tr(W'+Iw)}{N} \Big) \\
&= (\poly' \otimes \polt)(Z',W',z,w)
\end{aligned}
\end{equation}
where we used $\tr (X') = 0, \ \tr(I) = N$ and
\begin{equation}
\frac{\tr(X' + Ix)}{N} = \frac{\tr(X') + x \, \tr (I)}{N} = x, \qquad (X,x) \in \{  (Z,z),  (W,w) \}
\end{equation}
in the last equality.
\end{proof}

\noindent
Notably, $\mathcal{H}'_N$ corresponds to the subspace of $\mathcal{H}_N$ where the degree of $z,w$ is zero.
This will be a guiding fact when we construct finite-$N$ bases for invariants of traceless matrices.

\subsection{Eigenspace factorisation}\label{sec:eigenspace factorisation}

We first define derivative operators in traceless directions.
\begin{definition}[Projected traceless derivatives]
For $(X,x) \in \{  (Z,z),  (W,w) \}$, we introduce projected traceless derivatives by
\begin{equation}
\projD^{X'}_{ji} \equiv \pdv{}{X_{ij}}-\frac{\delta_{ij}}{N}\sum_{k=1}^N\pdv{}{X_{kk}} 
\label{def:proj-tr-der}
\end{equation}
which satisfy
\begin{equation}
\projD^{X'}_{ji} ( x ) = 0, \qquad
\projD^{X'}_{ji} ( X'_{kl} ) =\delta_{ik}\delta_{jl}-\frac{1}{N}\delta_{ij}\delta_{kl} \,.
\label{eq:proj-tr-der}
\end{equation}
\end{definition}

The factorisation in Proposition~\ref{prop:hilb factorisation} can be refined to eigenspaces of the dilatation operator.
\begin{maintheorem}[Eigenspace factorisation]\label{prop:eigenspace factorisation}
The two eigenspaces of the one-loop dilatation operator in \eqref{def:eigen fD1} are related by the factorisation
\begin{equation}
    \mathcal{H}_N(e) \cong \mathcal{H}'_N(e) \otimes \mathbb{C}[z,w]
\end{equation}
where $z,w$ are single-letter traces as in \eqref{def:sl-trace}.
\end{maintheorem}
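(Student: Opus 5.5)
The plan is to transport $\fD^{(1)}_N$ through the substitution isomorphism $\Psi$ of Proposition~\ref{prop:hilb factorisation} and show that it becomes $\fD^{\prime(1)}_N\otimes 1$ on $\mathcal{H}'_N\otimes\mathbb{C}[z,w]$. Once the intertwining relation
\begin{equation*}
\Psi\circ\fD^{(1)}_N=(\fD^{\prime(1)}_N\otimes 1)\circ\Psi
\end{equation*}
holds, the theorem is immediate. An eigenvector of $\fD^{(1)}_N$ with eigenvalue $e$ maps to an eigenvector of $\fD^{\prime(1)}_N\otimes1$ with the same eigenvalue. Conversely, $\fD^{\prime(1)}_N\otimes 1$ acts diagonally in any basis $\poly'_\alpha\otimes z^aw^b$ built from an eigenbasis $\poly'_\alpha$ of $\fD^{\prime(1)}_N$, which is Hermitian and hence diagonalisable. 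Its $e$-eigenspace is therefore exactly $\mathcal{H}'_N(e)\otimes\mathbb{C}[z,w]$, and pulling back by $\Psi^{-1}=\Phi$ gives $\mathcal{H}_N(e)\cong\mathcal{H}'_N(e)\otimes\mathbb{C}[z,w]$.

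To prove the intertwining relation, I rewrite the derivatives $\partial/\partial X_{ij}$ in the variables $(X',x)$ using the chain rule for $X_{ij}=X'_{ij}+\delta_{ij}x$, with $X'$ constrained to be traceless. Acting on functions of $(X',x)$, this gives
\begin{equation*}
\pdv{}{X_{ij}}=\projD^{X'}_{ji}+\frac{\delta_{ij}}{N}\pdv{}{x}.
\end{equation*}
This is consistent with \eqref{eq:proj-tr-der}, since $\partial x/\partial X_{ij}=\delta_{ij}/N$ and the traceless part is differentiated by $\projD^{X'}$. Substituting into $\fD^{(1)}_N=-\frac1N:\tr([Z,W][\partial_Z,\partial_W]):$, the pieces proportional to the identity matrix drop out of both commutators. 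We have $[Z,W]=[Z',W']$ because $Iz$ is central. In matrix form $\partial_Z=\projD^{Z'}+\frac{I}{N}\partial_z$, and the identity part commutes with everything in the matrix commutator, so $[\partial_Z,\partial_W]=[\projD^{Z'},\projD^{W'}]$ as matrices. Normal ordering is unaffected, since the coefficients $\partial_z,\partial_w$ never appear. Hence
\begin{equation*}
\Psi\circ\fD^{(1)}_N\circ\Phi=-\frac1N:\tr\!\left([Z',W'][\projD^{Z'},\projD^{W'}]\right):,
\end{equation*}
which contains no $z,w$ or $\partial_z,\partial_w$ and so acts as $\fD^{\prime(1)}_N\otimes 1$. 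This also shows that $\fD^{(1)}_N$ commutes with multiplication by single-letter traces, as anticipated in the text.

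The main obstacle is to check that this operator really is the operator $\fD^{\prime(1)}_N$ defining $\mathcal{H}'_N(e)$ and that it preserves $\mathcal{H}'_N$. If $\fD^{\prime(1)}_N$ is defined as the $SU(N)$ one-loop operator with traceless Wick contractions, which is what the projected derivatives implement, then the two agree by definition. I would check that the extra $-\frac1N\delta_{ij}\delta_{kl}$ terms in \eqref{eq:proj-tr-der} cancel inside the commutator structure. Invariance of $\mathcal{H}'_N$ follows because the operator is built from $SU(N)$-covariant traceless objects, and its output is traceless-polynomial, being obtained by restricting $\Psi\fD^{(1)}_N\Phi$ to the degree-zero part in $(z,w)$. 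Finally, bidegree is preserved throughout, so the isomorphism can be restricted to each $(m,n)$.
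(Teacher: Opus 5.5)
Your proposal is correct and follows essentially the paper's route. The key step is the same: rewrite the one-loop operator with projected traceless derivatives (Lemma~\ref{lemma:oneloop factorisation}), so that $z,w$ become spectators. The difference is only in the last step. You read off the $e$-eigenspace from the $\Psi$-conjugate $\fD^{\prime(1)}_N\otimes 1$, whereas the paper extracts each $z^aw^b$ coefficient with the operators built from $z\partial_z$ and $\partial_z$ in Lemma~\ref{lemma: coefficient eigenspace factorisation}. Yours is the more direct linear-algebra packaging of the same fact, and the appeal to Hermiticity is unnecessary, since $\ker\bigl((A-e)\otimes 1\bigr)=\ker(A-e)\otimes\mathbb{C}[z,w]$ for any linear operator $A$.
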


\begin{proof}
In Lemma~\ref{lemma:oneloop factorisation} (see Appendix~\ref{app:trace factorisation}), we show that the one-loop dilatation operator can be rewritten using the projected traceless derivatives defined in \eqref{def:proj-tr-der}:
\begin{equation}
\fD_N^{\prime(1)} = -\frac{1}{N}:\tr\qty(\comm{Z'}{W'}\comm{\projD^{Z'}}{\projD^{W'}}): \,.
\label{eq: SUN deriv dil}
\end{equation}

As a consequence (Lemma \ref{lemma: weyl dil commute}), the one-loop dilatation operator commutes with $z,w, \pdv{z}, \pdv{w}$, where
\begin{equation}
    \pdv{x} = \sum_{i=1}^N \pdv{X_{ii}}, \quad (X,x) \in \{(Z,z),(W,w)\}.
\end{equation}
By Proposition~\ref{prop:hilb factorisation}, every invariant has a unique finite expansion
\begin{equation}
\poly(Z,W)=\sum_{a,b\geq0}z^aw^b\poly_{a,b}(Z',W'),
\end{equation}
and Lemma \ref{lemma: coefficient eigenspace factorisation} shows that $\fD_N^{(1)}\poly=e\poly$ holds if and only if
$\fD_N^{\prime(1)}\poly_{a,b}=e\poly_{a,b}$ for every $a,b \geq 0$.
\end{proof}

\begin{example}
The traceless invariant $\tr(Z'^2)$ is a half-BPS operator with $\fD_N^{\prime(1)}$ eigenvalue zero. The following operators
\begin{equation}
z^a w^b\tr(Z'^2) = z^a w^b \Big( \tr (ZZ) - N \, z^2 \Big), \qquad a,b\in\mathbb{Z}_{\geq 0}
\end{equation}
are $U(N)$ eigenstates with the same eigenvalue zero.
\end{example}

Theorem \ref{prop:eigenspace factorisation} has several important consequences.
The first consequence involves dimensions of eigenspaces.
Define the fixed degree eigenspaces
\begin{equation}
\begin{aligned}
\mathcal{H}_N(m,n,e) &= \{ \poly \in \mathcal{H}_N(m,n) \, \vert \,  \fD^{(1)}_N \poly = e \poly \} ,\\
\mathcal{H}'_N(m,n,e) &= \{ \poly \in \mathcal{H}_N'(m,n) \, \vert \,  \fD^{\prime(1)}_N \poly = e \poly \} \, .
\end{aligned}
\end{equation}
\begin{lemma}\label{lemma: dims eigenspaces relations}
The dimensions of fixed degree eigenspaces are related by
\begin{equation}
    \label{eq: m n e factorisation dim}
    \dim \mathcal{H}'_N(m,n,e) =
    \begin{aligned}[t]
    &\dim \mathcal{H}_N(m,n,e)
    +\dim \mathcal{H}_N(m-1,n-1,e) \\
    &-\dim \mathcal{H}_N(m-1,n,e)
    -\dim \mathcal{H}_N(m,n-1,e) \,.
    \end{aligned}
\end{equation}
\end{lemma}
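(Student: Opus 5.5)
The plan is to refine Theorem~\ref{prop:eigenspace factorisation} to fixed bidegree and then invert the resulting generating-function identity, as was done for the Hilbert series in \eqref{eq:H' dim diff formula}. First we check that the isomorphism $\Psi$ of Definition~\ref{def:substitution maps} respects the bigrading. Here $z$ has degree $(1,0)$, $w$ has degree $(0,1)$, and $Z'=Z-Iz$ is homogeneous of degree $(1,0)$ in $Z$, and similarly for $W'$. Hence $\Psi$ maps $\mathcal{H}_N(m,n)$ onto $\bigoplus_{a,b\geq 0} z^a w^b\,\mathcal{H}'_N(m-a,n-b)$, and $\Phi$ maps back. Combined with the coefficient characterisation used in the proof of Theorem~\ref{prop:eigenspace factorisation} (Lemma~\ref{lemma: coefficient eigenspace factorisation}), namely that $\fD^{(1)}_N \poly = e\poly$ if and only if each coefficient $\poly_{a,b}$ satisfies $\fD^{\prime(1)}_N\poly_{a,b}=e\poly_{a,b}$, this gives the graded eigenspace decomposition
\begin{equation*}
\mathcal{H}_N(m,n,e)\cong\bigoplus_{a=0}^{m}\bigoplus_{b=0}^{n} z^a w^b\,\mathcal{H}'_N(m-a,n-b,e).
\end{equation*}
The sum is direct because the expansion in powers of $z,w$ is unique by Proposition~\ref{prop:hilb factorisation}.

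Taking dimensions yields $d(m,n)=\sum_{a\le m,\,b\le n} d'(m-a,n-b)$, where $d=\dim\mathcal{H}_N(\cdot,\cdot,e)$ and $d'=\dim\mathcal{H}'_N(\cdot,\cdot,e)$. Equivalently, at the level of generating functions,
\begin{equation*}
\sum_{m,n} d(m,n)\,x^my^n=\frac{1}{(1-x)(1-y)}\sum_{m,n} d'(m,n)\,x^my^n .
\end{equation*}
Multiplying both sides by $(1-x)(1-y)$ and reading off the coefficient of $x^my^n$ gives \eqref{eq: m n e factorisation dim}. Here we adopt the convention that dimensions vanish when $m<0$ or $n<0$. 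One can equally obtain the result directly by double differencing the partial sums: the combination $d(m,n)-d(m-1,n)-d(m,n-1)+d(m-1,n-1)$ telescopes to the single term $d'(m,n)$.

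The only delicate point is the first step. We must check that the eigenvalue condition is compatible with the bidegree decomposition, so that the eigenspace of the $U(N)$ operator in degree $(m,n)$ is exactly the direct sum of the shifted $SU(N)$ eigenspaces and not merely contained in it. This follows because $\fD^{(1)}_N$ preserves bidegree and commutes with multiplication by $z$ and $w$ (Lemma~\ref{lemma: weyl dil commute}). Consequently, an element $\sum z^aw^b\poly_{a,b}$ with each $\poly_{a,b}\in\mathcal{H}'_N(m-a,n-b,e)$ is indeed an eigenvector, and the converse is the content of the coefficient lemma. The remaining steps are routine bookkeeping.
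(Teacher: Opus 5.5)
Your proposal is correct and follows essentially the same route as the paper: the eigenspace factorisation gives $\sum_{m,n} \dim\mathcal{H}'_N(m,n,e)\,x^my^n = (1-x)(1-y)\sum_{m,n}\dim\mathcal{H}_N(m,n,e)\,x^my^n$, and the lemma is read off coefficientwise. Your explicit check that $\Psi$ and $\Phi$ respect the bigrading, so that the factorisation holds degree by degree, fills in a step the paper's proof uses only implicitly when it passes from Theorem~\ref{prop:eigenspace factorisation} to the generating-function identity.
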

\begin{proof}
Define the generating functions
\begin{equation}
    Z^{u(N)}_e (x,y) = \sum_{m,n \geq 0} x^m y^n \dim \mathcal{H}_N(m,n,e), \quad 
    Z^{su(N)}_e (x,y) = \sum_{m,n \geq 0} x^m y^n \dim \mathcal{H}'_N(m,n,e)\,.
\end{equation}
From Theorem \ref{prop:eigenspace factorisation} we have
\begin{equation}
    Z_e^{su(N)}(x,y) =  (1-x)(1-y)Z^{u(N)}_e(x,y) \,
\end{equation}
and extracting the coefficients of $x^{m}y^{n}$ gives the lemma.
\end{proof}

From the above lemma, we get a simple relationship between the characteristic polynomials in $SU(N)$ gauge theory and $U(N)$ gauge theory.
\begin{corollary}\label{cor:suN charpoly}
The $SU(N)$ characteristic polynomials are computable in terms of $U(N)$ characteristic polynomials:
\begin{equation}
    \begin{aligned}
    \det_{\mathcal{H}_N'(m,n)}(x -  \fD^{\prime(1)}_N) 
    &=
    \frac{
    \det_{\mathcal{H}_N(m,n)}(x- \fD^{(1)}_N)
    \det_{\mathcal{H}_N(m-1,n-1)}(x- \fD^{(1)}_N)
    }
    {
    \det_{\mathcal{H}_N(m-1,n)}(x- \fD^{(1)}_N)
    \det_{\mathcal{H}_N(m,n-1)}(x- \fD^{(1)}_N)
    }
    \end{aligned}
\end{equation}
where we adopt the convention
\begin{equation}
\cH_N(a,b)=0 \qquad {\rm and} \qquad
\det_{\mathcal H_N(a,b)} (x- \fD^{(1)}_N)=1 \qquad
{\rm if} \ a<0 \ {\rm or} \ b<0.
\end{equation}
\end{corollary}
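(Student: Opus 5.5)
The plan is to pass from eigenspace dimensions to characteristic polynomials. The key input is Lemma~\ref{lemma: dims eigenspaces relations}, which gives the inclusion--exclusion relation for $\dim\mathcal{H}'_N(m,n,e)$ at each fixed eigenvalue $e$.

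First I would record that $\fD^{(1)}_N$ is Hermitian with respect to the free two-point function, as noted earlier. Hence it is diagonalisable on each finite-dimensional space $\mathcal{H}_N(a,b)$. The same holds for $\fD^{\prime(1)}_N$ on $\mathcal{H}'_N(m,n)$: by Lemma~\ref{lemma:oneloop factorisation} it is the restriction of $\fD^{(1)}_N$ to the invariant subspace $\Phi[\mathcal{H}'_N(m,n)\otimes 1]\subset\mathcal{H}_N(m,n)$, and the restriction of a diagonalisable operator to an invariant subspace is diagonalisable. Consequently each characteristic polynomial factorises as
\begin{equation}
\det_{\mathcal{H}_N(a,b)}(x-\fD^{(1)}_N)=\prod_{e}(x-e)^{\dim\mathcal{H}_N(a,b,e)},\qquad \det_{\mathcal{H}'_N(m,n)}(x-\fD^{\prime(1)}_N)=\prod_{e}(x-e)^{\dim\mathcal{H}'_N(m,n,e)}.
\end{equation}
In both products $e$ ranges over the finitely many real eigenvalues that occur. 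I would use the convention that an empty space, i.e.\ $a<0$ or $b<0$, contributes exponent zero. This convention is consistent with extracting coefficients from the generating functions in the proof of Lemma~\ref{lemma: dims eigenspaces relations}.

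Next I would substitute the dimension identity into the exponents. The product over $e$ of $(x-e)$ raised to the alternating sum of the four dimensions splits into the product of the four characteristic polynomials, with exponents $+1$ for $(m,n)$ and $(m-1,n-1)$ and $-1$ for $(m-1,n)$ and $(m,n-1)$. This gives exactly the stated ratio.

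The one point needing care is that the identity is an equality of rational functions, which I would then read as a polynomial identity. The left side is a polynomial, so the denominator divides the numerator. This also follows directly because every exponent $\dim\mathcal{H}'_N(m,n,e)$ is nonnegative.

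I expect the only real obstacle to be justifying diagonalisability, so that algebraic multiplicities equal the eigenspace dimensions appearing in Lemma~\ref{lemma: dims eigenspaces relations}. Hermiticity with respect to the positive-definite two-point function settles this on $\mathcal{H}_N$. For $\mathcal{H}'_N$, the restriction argument above handles it. Alternatively, one could note that $\mathcal{H}'_N(m,n)$ corresponds to the degree-zero component in $z,w$, which is preserved by $\fD^{(1)}_N$ because the operator commutes with $\pdv{z},\pdv{w}$ (Lemma~\ref{lemma: weyl dil commute}).
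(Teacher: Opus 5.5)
Your proposal is correct and follows the paper's proof. Like the paper, you write each characteristic polynomial as a product of $(x-e)$ raised to the eigenspace dimensions, then substitute the inclusion--exclusion identity of Lemma~\ref{lemma: dims eigenspaces relations}. Your explicit diagonalisability argument (Hermiticity on $\mathcal{H}_N$, and restriction to the invariant subspace $\Phi[\mathcal{H}'_N(m,n)\otimes 1]$ for $\mathcal{H}'_N$) supplies a step the paper uses only implicitly when it equates algebraic multiplicities with eigenspace dimensions.
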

\begin{proof}
Let $E$ be the union of the spectra of $\fD_N^{(1)}$ on the four spaces $\mathcal{H}_N(m,n)$, $\mathcal{H}_N(m-1,n-1)$, $\mathcal{H}_N(m-1,n)$ and $\mathcal{H}_N(m,n-1)$.
Lemma~\ref{lemma: dims eigenspaces relations} implies that $E$ contains the spectrum of $\fD_N^{\prime(1)}$ on $\mathcal{H}'_N(m,n)$.
The characteristic polynomial therefore factors over the real numbers as
\begin{equation}
    \det_{\mathcal{H}_N'(m,n)}(x -  \fD^{\prime(1)}_N) = \prod_{e \in E} (x-e)^{\dim \mathcal{H}'_N(m,n,e)} .
\end{equation}
From Lemma \ref{lemma: dims eigenspaces relations} we get
\begin{equation}
    \begin{aligned}
    \prod_{e \in E} (x-e)^{\dim \mathcal{H}'_N(m,n,e)}
    &=
    \prod_{e \in E} 
    \frac{
    (x-e)^{\dim \mathcal{H}_N(m,n,e)}
    (x-e)^{\dim \mathcal{H}_N(m-1,n-1,e)}
    }
    {
    (x-e)^{\dim \mathcal{H}_N(m-1,n,e)}
    (x-e)^{\dim \mathcal{H}_N(m,n-1,e)}
    } \\
    &=
    \frac{
    \det_{\mathcal{H}_N(m,n)}(x- \fD^{(1)}_N)
    \det_{\mathcal{H}_N(m-1,n-1)}(x- \fD^{(1)}_N)
    }
    {
    \det_{\mathcal{H}_N(m-1,n)}(x- \fD^{(1)}_N)
    \det_{\mathcal{H}_N(m,n-1)}(x- \fD^{(1)}_N) 
    } \,.
    \end{aligned}
\end{equation}
In particular, for any $e\in E$ absent from the spectrum of $\mathcal{H}_N'(m,n)$, Lemma~\ref{lemma: dims eigenspaces relations} ensures that the corresponding factors cancel between the numerator and denominator.
\end{proof}

The above results generalise to the highest-weight space of a fixed irreducible representation $\Lambda = [\lambda_1, \lambda_2]$ of $U(2)$.
Let $\hat C_2$ be the quadratic Casimir, we define the highest-weight eigenspaces of the one-loop dilatation operator
\begin{equation}
    \begin{aligned}
        \mathcal{H}_N^\mathrm{HWS}(m,n,e)      & = \{\poly(Z,W) \in \mathcal{H}_N(m,n,e) \, \vert \, \hat C_2 \poly(Z,W) = C_2([m,n])\poly(Z,W)\} \\
        \mathcal{H}_N^{'\mathrm{HWS}}(m,n,e)   & = \{\poly(Z',W') \in \mathcal{H}'_N(m,n,e) \, \vert \, \hat C_2 \poly(Z',W') = C_2([m,n])\poly(Z',W')  \}.
    \end{aligned}
\end{equation}
\begin{lemma}\label{prop:HWS dim comp}
The dimensions of highest-weight eigenspaces are related by
\begin{equation}
    \dim \mathcal{H}_N'^{\mathrm{HWS}}(m,n,e) = 
    \begin{aligned}[t]
        &\dim \mathcal{H}_N^{\mathrm{HWS}}(m,n,e) -
        (1-\delta_{m,n})\dim \mathcal{H}_N^{\mathrm{HWS}}(m-1,n,e) \\
        &-\dim \mathcal{H}_N^{\mathrm{HWS}}(m,n-1,e)+
        \dim \mathcal{H}_N^{\mathrm{HWS}}(m-1,n-1,e)
    \end{aligned}
\end{equation}
where the factor $1-\delta_{m,n}$ removes the $(m-1,m)$ term, which is not a valid partition.
\end{lemma}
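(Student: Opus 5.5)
The plan is to reduce the highest-weight statement to the full-space statement, Lemma~\ref{lemma: dims eigenspaces relations}, applied separately to each $U(2)$ isotypic component. Since $\fD^{(1)}_N$ and $\fD^{\prime(1)}_N$ commute with the $U(2)$ action, each fixed-degree eigenspace $\mathcal{H}_N(m,n,e)$ is a $U(2)$-module, and so is $\mathcal{H}'_N(m,n,e)$. I would also check that the isomorphisms $\Psi,\Phi$ of Definition~\ref{def:substitution maps} are $U(2)$-equivariant, because $U(2)$ rotates $(Z,W)$, hence $(Z',W')$ and $(z,w)$, linearly. Then Theorem~\ref{prop:eigenspace factorisation} upgrades to an isomorphism of $U(2)\times$bigrading modules
\begin{equation}
\mathcal{H}_N(e)\cong\mathcal{H}'_N(e)\otimes\mathbb{C}[z,w],
\end{equation}
with $(z,w)$ transforming as the fundamental $[1,0]$.

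Next I would pass from dimensions to weight-space characters. Let $\mu_\Lambda(e)$ and $\mu'_\Lambda(e)$ denote the multiplicities of the irrep $\Lambda=[\lambda_1,\lambda_2]$ in $\mathcal{H}_N(e)$ and $\mathcal{H}'_N(e)$ respectively. The HWS of $\Lambda$ has weight $(\lambda_1,\lambda_2)$, so $\dim\mathcal{H}_N^{\mathrm{HWS}}(m,n,e)=\mu_{[m,n]}(e)$ for $m\ge n$. Every vector of weight $(m,n)$ lies in a unique isotypic component, so a standard inversion expresses the highest-weight multiplicity as
\begin{equation}
\mu_{[m,n]}=\dim V(m,n)-\dim V(m+1,n-1),
\end{equation}
where $V(a,b)$ denotes the weight-$(a,b)$ subspace, i.e.\ the degree-$(a,b)$ subspace. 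Substituting Lemma~\ref{lemma: dims eigenspaces relations} for each of the two weight spaces and recombining the eight terms into highest-weight differences should produce the claimed formula. Care is needed with terms where the second index exceeds the first.

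A cleaner route, which I would actually follow, is to decompose via the Pieri rule. Since $\mathbb{C}[z,w]=\bigoplus_k \mathrm{Sym}^k$ and $\mathrm{Sym}^k=[k,0]$, we have
\begin{equation}
\mu_{[m,n]}=\sum_{\Lambda'}\mu'_{\Lambda'},
\end{equation}
where the sum runs over $\Lambda'=[a,b]$ with $[m,n]/[a,b]$ a horizontal strip, i.e.\ $b\le n\le a\le m$. I would then apply the inclusion–exclusion in $(m,n)$. The combination
\begin{equation}
\mu_{[m,n]}-\mu_{[m-1,n]}-\mu_{[m,n-1]}+\mu_{[m-1,n-1]}
\end{equation}
telescopes to $\mu'_{[m,n]}$. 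This works because the horizontal-strip region $\{b\le n\le a\le m\}$ is a product of intervals $a\in[n,m]$, $b\in[0,n]$, so the two-dimensional difference isolates the corner $(a,b)=(m,n)$.

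The main obstacle is the boundary case $m=n$. There the term $[m-1,n]$ is not a partition, and the region for $[m-1,n-1]$ has its $a$-interval starting at $n-1$ rather than $n$, so the telescoping fails unless that term is dropped. I will verify directly that for $m=n$, $\mu_{[n,n]}-\mu_{[n,n-1]}+\mu_{[n-1,n-1]}=\mu'_{[n,n]}$ using the interval description, which yields exactly the factor $1-\delta_{m,n}$.
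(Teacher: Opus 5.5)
Your proof is correct, but it runs in the opposite direction from the paper's.

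The paper works purely with characters. It reads the bidegree identity $Z^{su(N)}_e=(1-x)(1-y)Z^{u(N)}_e$ from Lemma~\ref{lemma: dims eigenspaces relations} as an identity of $U(2)$ characters, writes $(1-x)(1-y)=1-\chi^{[1]}+\chi^{[1,1]}$, and applies the one-box Pieri rule. The $SU(N)$ multiplicities then come out directly by comparing coefficients, and $1-\delta_{m,n}$ arises because $[n-1,n]$ is not a partition.

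You instead tensor with $\mathbb{C}[z,w]=\bigoplus_k[k,0]$, use the horizontal-strip Pieri rule to get $\mu_{[m,n]}=\sum_{a=n}^{m}\sum_{b=0}^{n}\mu'_{[a,b]}$, and invert by finite differences. This costs an extra inversion and a boundary check. In exchange it gives an explicit formula showing which $SU(N)$ highest weights feed each $U(N)$ highest weight, which refines the tower picture.

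A few small points:
\begin{itemize}
\item The $U(2)$-equivariance of $\Psi,\Phi$ is true but unnecessary. The bidegree dimensions already fix the characters, and that is all either route uses.
\item $\mathcal{H}_N(m,n,e)$ is not itself a $U(2)$-module. It is only a weight space of the $U(2)$-module $\bigoplus_{m+n=L}\mathcal{H}_N(m,n,e)$.
\item Your ``product of intervals'' picture is slightly off, because the lower end of the $a$-range moves with $n$. The telescoping works if you difference in $m$ first: each such difference is the slice $a=m$. Done this way, $m=n$ needs no special treatment once the sum for $[n-1,n]$ is read as empty.
\end{itemize}

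Your first sketched route, via $\mu_{[m,n]}=\dim V(m,n)-\dim V(m+1,n-1)$, also works. There the $m=n$ boundary is handled by the Weyl symmetry $\dim V(a,b)=\dim V(b,a)$.
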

\begin{proof}
Expanding the generating functions in terms of irreducible characters of $U(2)$ gives
\begin{equation}
    \sum_{L=0}^\infty \sum_{\substack{\Lambda \vdash L \\ \ell(\Lambda) \leq 2}} m^{su(N),e}_\Lambda \chi^\Lambda(x,y) = (1-x)(1-y)\sum_{L=0}^\infty \sum_{\substack{\Lambda \vdash L \\ \ell(\Lambda) \leq 2}} m^{u(N),e}_\Lambda \chi^\Lambda(x,y) .
    \label{eq: cor proof Z fixed e1}
\end{equation}
Here $m^{u(N),e}_\Lambda$ and $m^{su(N),e}_\Lambda$ are the multiplicities of the representation $\Lambda$ in $\mathcal{H}_N(e)$ and $\mathcal{H}'_N(e)$, respectively, and
\begin{equation}
\chi^{\Lambda } ( x , y ) = \frac{x^{\lambda_1+1} y^{\lambda_2} - x^{\lambda_2} y^{\lambda_1+1}}{x-y} \,.
\label{def:U2 char}
\end{equation}
Now use the fact that
\begin{equation}
    (1-x)(1-y) = 1 - \chi^{[1]}(x,y) + \chi^{[1,1]}(x,y) 
\end{equation}
and Pieri's formula which gives
\begin{equation}
    \begin{aligned}
    \chi^{[1]}(x,y)\chi^{[\lambda_1,\lambda_2]}(x,y) &= 
    \chi^{[\lambda_1+1,\lambda_2]}(x,y) + 
    \chi^{[\lambda_1,\lambda_2+1]}(x,y), \\
    \chi^{[1,1]}(x,y)\chi^{[\lambda_1,\lambda_2]}(x,y) &= 
    \chi^{[\lambda_1+1,\lambda_2+1]}(x,y)\,
    \end{aligned}
\end{equation}
where a character vanishes, by definition, if $\lambda_2 > \lambda_1$.
Substituting this into \eqref{eq: cor proof Z fixed e1} with $\Lambda=[\lambda_1,\lambda_2]$, and suppressing the $x,y$ dependence, gives
\begin{equation}
\sum_{L=0}^\infty \sum_{\substack{\Lambda \vdash L \\ \ell(\Lambda) \leq 2}} m^{su(N),e}_\Lambda \chi^\Lambda 
= \sum_{L=0}^\infty \sum_{\substack{\Lambda \vdash L \\ \ell(\Lambda) \leq 2}} 
m^{u(N),e}_{[\lambda_1,\lambda_2]} \, \Big( 
\chi^{[\lambda_1,\lambda_2]} - \chi^{[\lambda_1+1,\lambda_2]} - \chi^{[\lambda_1,\lambda_2+1]} + \chi^{[\lambda_1+1,\lambda_2+1]} 
\Big) .
\end{equation}
Here multiplicities with negative entries or entries that do not form a partition are understood to vanish.
Comparing the coefficients of $\chi^\Lambda$ gives
\begin{equation}
    m^{su(N),e}_{[\lambda_1,\lambda_2]} = 
    m^{u(N),e}_{[\lambda_1,\lambda_2]} -
    (1-\delta_{\lambda_1,\lambda_2})m^{u(N),e}_{[\lambda_1-1,\lambda_2]} -
    m^{u(N),e}_{[\lambda_1,\lambda_2-1]} +
    m^{u(N),e}_{[\lambda_1-1,\lambda_2-1]} \, .
\end{equation}
Since each irreducible representation associated with $\Lambda$ has a one-dimensional highest-weight subspace, the multiplicity of the character $\chi^\Lambda$ is equal to the dimension of the highest-weight space.
In other words, we can identify
\begin{equation}
    m^{su(N),e}_{[m,n]} = 
    \dim \mathcal{H}_N'^{\mathrm{HWS}}(m,n,e), \quad 
    m^{u(N),e}_{[m,n]} =
    \dim \mathcal{H}_N^{\mathrm{HWS}}(m,n,e).
\end{equation}
\end{proof}
Using the same proof strategy as in Corollary \ref{cor:suN charpoly}, the above lemma implies the following.
\begin{mainproposition}[One-loop factorisation formula]\label{cor:suN vs uN HWS fixed e}
The characteristic polynomials in the highest-weight subspaces factorise as
\begin{equation}
    \begin{aligned}
    \det_{\mathcal{H}_N'^{\mathrm{HWS}}(m,n)}(x -  \fD^{\prime(1)}_N) 
    =
    \frac{
    \det_{\mathcal{H}_N^\mathrm{HWS}(m,n)}(x- \fD^{(1)}_N)
    \det_{\mathcal{H}_N^\mathrm{HWS}(m-1,n-1)}(x- \fD^{(1)}_N)
    }
    {
        [\det_{\mathcal{H}_N^\mathrm{HWS}(m-1,n)}(x- \fD^{(1)}_N)]^{(1-\delta_{m,n})}
    \det_{\mathcal{H}_N^\mathrm{HWS}(m,n-1)}(x- \fD^{(1)}_N)
    } 
    \end{aligned}
\end{equation}
where the Kronecker delta is used to set first factor in the denominator equal to $1$ for $m=n$, where $(m-1,m)$ is not a valid integer partition.
\end{mainproposition}

\section{Permutation description of $SU(N)$ theory}\label{sec:derangements}

We now address the central problem of describing $SU(N)$ gauge-invariant operators and their one-loop dilatation within the PCA framework at any $N$.
In Section \ref{sec:SU(N)} we determined the $SU(N)$ characteristic polynomial from $U(N)$ data; here we seek a direct permutation description of the states and the one-loop dilatation operator.
We will see that traceless invariants in the stable regime are described by a derangement subspace $\cD(m,n) \subseteq \cA(m,n)$, namely the subspace spanned by permutations without fixed points.

Define the derangement projection $\pi_D$ on $\mathbb{C}[S_{m+n}]$ by linear extension of
\begin{equation}
\pi_D(g)=\begin{cases}
g, & (g\in D_{m+n}) \\
0, & (g\notin D_{m+n}).
\end{cases}
\end{equation}
This projection does not commute in general with the left central actions $m^{\mathfrak{L}}[\centz]$ that determine the finite-$N$ cutoff $\ell(R)\leq N$ in the $U(N)$ theory,
\begin{equation}
    \Big[ m^{\mathfrak{L}}[\centz], \pi_D \Big] \neq 0 .
\end{equation}
We propose a description of traceless invariants as the intersection of the derangement and finite-$N$ subspaces,
\begin{equation}
\cD_N(m,n) = \cA_N(m,n) \cap \cD(m,n) .
\label{DN proposal}
\end{equation}
We also propose that the $SU(N)$ one-loop dilatation operator in \eqref{eq: SUN deriv dil} is determined by the composition of an orthogonal projector $P^{\cD_N(m,n)}$ from $\cA(m,n)$ to $\cD_N(m,n)$ with the one-loop permutation dilatation operator $H_{\gss}^{(1)}$ defined in Definition \ref{def: 1LPDO}.
This proposal constitutes a PCA framework for the $SU(N)$ dilatation operator problem at finite $N$.

In the first subsection, we introduce derangement descriptions for the space of $SU(N)$-invariant polynomials $\cH'_N(m,n)$.
In the second subsection, we conjecture that the intersection \eqref{DN proposal} is mapped isomorphically to $\cH'_N(m,n)$ in the unstable regime.
We give numerical checks of the conjecture by comparing the dimensions of the Hilbert space, the dimensions of the intersection, and the rank of the Gram matrix. 
In the third subsection, we explain how to define the one-loop dilatation operator on the intersection.
In the final subsection, we present numerical checks of this proposal.
We check that the characteristic polynomials of the one-loop dilatation operator in the $SU(N)$ theory at finite $N$ agree with the factorisation formula in Proposition~\ref{cor:suN vs uN HWS fixed e}.

\subsection{Derangement orbit subspace}

Permutations also form a useful description of the space of invariants of traceless matrices. 
For $g \in S_{m+n}$, consider the map
\begin{equation}
    g \mapsto 
    \sum_{i_1,\dots,i_L=1}^N
    \prod_{p=1}^{m} Z'_{i_p i_{g(p)}}
    \prod_{q=m+1}^L W'_{i_q i_{g(q)}} \in \mathcal{H}'_N(m,n).
\end{equation}
If $g$ has fixed points, this map produces $\tr(Z')$ or $\tr(W')$ which vanishes.
Clearly, we should restrict ourselves to permutations without fixed points.
\begin{definition}[Derangements]
A permutation $g \in S_{m+n}$ is called a derangement if it has no fixed points. The subset
\begin{equation}
D_{m+n} = \{ g \in S_{m+n} \, \vert \, g(i) \neq i \quad \forall i \in \{1,\dots,m+n\} \}
\end{equation}
is called the set of derangements in $S_{m+n}$\,. 
Note that $D_{m+n}$ does not form a subgroup, since it does not contain the identity permutation.
\end{definition}

The permutation centraliser algebra $\cA(m,n)$ has a subspace spanned by orbits labelled by derangements.
\begin{definition}[Derangement orbit space]
    Let $g_1,\dots,g_{r} \in D_{m+n}$ be a complete set of representatives of orbits of $S_m \times S_n$ acting on $D_{m+n}$ by conjugation.
    Since cycle structure is preserved by conjugation, every element in the orbit of a derangement is a derangement.
    The representatives label an orbit basis for the \emph{derangement orbit space} (DOS)
\begin{equation}
\cD(m,n) = \Span_{\mathbb{C}}( P_{m,n}(g_p) \, \vert\, 1 \leq p \leq r  ) = \cA(m,n) \cap \mathbb{C}[D_{m+n}] \,.
\end{equation}
Here $\mathbb{C}[D_{m+n}]$ denotes the vector subspace of $\mathbb{C}[S_{m+n}]$ spanned by $D_{m+n}$, and the inclusion $D_{m+n} \hookrightarrow S_{m+n}$ induces the natural embedding $\cD(m,n) \hookrightarrow \cA(m,n)$.
\end{definition}
\begin{example}
    The DOS $\cD(2,2)$ has dimension $r = 4$. A complete set of representatives is
    \begin{equation}
        \begin{aligned}
            &&g_1 = (12)(34),   \,
            &&g_2 = (13)(24),   \,
            &&g_3 = (1234),     \,
            &&g_{4} = (1324)\, .
        \end{aligned}
        \label{eq: cD22 orbits}
    \end{equation}
    This should be compared to $\cA(2,2)$ in Example \ref{ex:cA22} which has dimension ten.
    A subset of its orbit basis elements corresponds to Equation \eqref{eq: cD22 orbits}.
\end{example}

The DOS $\cD(m,n)$ maps onto $\mathcal{H}'_N(m,n)$ through the derangements-to-invariants map.
\begin{definition}[Derangements-to-invariants map]
\label{eq:Osigma-traceless-definition}
    Let $g\in D_{m+n}$ and fix $m,n \in \mathbb{Z}_{\geq 0}$ and $N\geq 1$, with $L=m+n$. Define the multi-trace invariant
\begin{equation}
    \begin{aligned}
    &\swmap'^{(N)}_{m,n}: \cD(m,n) \rightarrow \mathcal{H}'_N(m,n)\\
    &\swmap'^{(N)}_{m,n}(g)
\equiv
\sum_{i_1,\dots,i_L=1}^N
\prod_{p=1}^m Z'_{i_p i_{g(p)}}
\prod_{q=m+1}^L W'_{i_q i_{g(q)}} \,.
    \end{aligned}
\end{equation}
For an element $\sigma=\sum_{g\in D_{m+n}}c_g(\sigma)g \in \cD(m,n)$ we linearly extend the above map,
\begin{equation}
    \swmap'^{(N)}_{m,n}(\sigma)
\equiv
\sum_{g\in D_{m+n}}c_g(\sigma)\,\swmap'^{(N)}_{m,n}(g)
\end{equation}
where $c_g(\sigma) \in \mathbb{C}$.
\end{definition}
\noindent

The derangements-to-invariants map is surjective for all $N$ from the following argument. For any $\poly' (Z', W') \in \cH'_N(m,n)$ we write $\poly' (Z', W') = \poly' (Z-I z, W- I w) \in \cH_N(m,n)$, which is labelled by a linear combination of permutations by the Schur-Weyl duality for $U(N)$. The permutations with fixed points correspond to single-letter traces, which vanish by $\tr(Z')=\tr(W')=0$. Thus $\swmap'^{(N)}_{m,n}$ on derangements is surjective to $\mathcal{H}'_N(m,n)$; see also \cite{PROCESI1976306,BenkartDoty2002,Corley:2002mj} for related discussions. 
In short, the DOS gives a faithful description of the space of traceless invariants in the stable regime.

\begin{mainproposition}
    In the stable regime $N \geq m+n$, the derangements-to-invariants map $\swmap'^{(N)}_{m,n}$ is bijective
    \begin{equation}
        \dim \cD(m,n) = 
        \dim \cH_N'(m,n)  \, .
    \end{equation}
    This is the analogue of the bijective correspondence between the PCA $\cA(m,n)$ and $\mathcal{H}_N(m,n)$ in the stable regime, for traceless matrices and derangement orbit spaces.
\end{mainproposition}
\begin{proof}
We give a simple argument based on counting using the inclusion-exclusion principle.
Note that $\dim \cD(m,n)$ is equal to the number of orbit basis elements of $\cA(m,n)$ with no fixed points.
The number of orbit basis elements of $\cA(m,n)$ with at least one fixed point in $\{1,\dots,m\}$ is given by $\dim \cA(m-1,n)$.
Similarly the number of orbits with at least one fixed point in $\{m+1,\dots,m+n\}$ is given by $\dim \cA(m,n-1)$.
However, this overcounts by the overlap -- the orbits of permutations with at least one fixed point in each set -- which have cardinality $\dim \cA(m-1,n-1)$.
The inclusion-exclusion principle gives
\begin{equation}
    \dim \cD(m,n) = \dim \cA(m,n) - \dim \cA(m,n-1) - \dim \cA(m-1,n) + \dim \cA(m-1,n-1) \,.
\end{equation}
For $N \geq m+n$ we have $\dim \cA(m,n) = \dim \mathcal{H}_N(m,n)$ and together with the factorisation in Proposition~\ref{prop:hilb factorisation} we get
\begin{equation}
    \begin{aligned}
        \dim \cD(m,n) &=
        \dim \cH_N(m,n) - \dim \cH_N(m,n-1) - \dim \cH_N(m-1,n) + \dim \cH_N(m-1,n-1) \\
        &= \dim \cH_N'(m,n) \,.
    \end{aligned}
\end{equation}
Since the map is surjective and the dimensions agree, the kernel is trivial.
\end{proof}
\noindent

The DOS inherits the orbit-basis inner product from the PCA (Definition~\ref{def:orbit-basis inner prod}).
\begin{definition}[Orbit-basis inner product for DOS]\label{def:orbit-basis inner prod DOS}
The orbit-basis inner product in $\cD(m,n)$ is given by
\begin{equation}
\langle \sigma, \sigma' \rangle = \delta(\sigma \cS(\sigma')) .
\label{eq: inner prod DOS}
\end{equation}
The Gram matrix in the orbit basis for $\cD(m,n)$ is defined by
\begin{equation}
    ( {G}_{\dos} (m,n) )_{pq} = \langle \orbitP{m,n}(g_p) , \orbitP{m,n}(g_q) \rangle , \qquad (p, q \in \{ 1, \dots, \dim \cD(m,n) \}).
\label{eq: Gram matrix orbit-DOS}
\end{equation}
\end{definition}

\subsection{Derangement framework of $SU(N)$ Hilbert space at finite $N$}\label{subsec:derangement hilb}
In this subsection we will propose a construction of finite-$N$ bases for $\mathcal{H}'_N(m,n)$ using derangements.
\begin{mainconjecture}[$SU(N)$ finite-$N$ bases]
    \label{conj:suN bases}
    Define the intersection of two subspaces $\cA_N(m,n)$ and $\cD(m,n)$ of the PCA $\cA(m,n)$ by
    \begin{equation}
        \cD_N(m,n) = \cA_N(m,n) \cap \cD(m,n)
    \end{equation}
    and denote the restriction of the derangements-to-invariants map to this intersection by $\hat\swmap'^{(N)}_{m,n} \equiv \swmap'^{(N)}_{m,n} \big|_{\cD_N(m,n)}$\,. 
    We conjecture that this map gives an isomorphism
    \begin{equation}
        \hat \swmap'^{(N)}_{m,n}: \cD_N(m,n) \ \overset{\sim}{\rightarrow} \ \mathcal{H}'_N(m,n), \qquad \forall N \in \bb{N} .
    \label{def:DI map finite-N}
    \end{equation}
\end{mainconjecture}

We test the conjecture by comparing the dimension of the intersection with the dimension of the physical space, and by checking that the image of the restricted map fills the physical space:
\begin{equation}
\begin{aligned}
\dim\cD_N(m,n)&=\dim\mathcal{H}'_N(m,n),\\
\dim\im\hat\swmap'^{(N)}_{m,n}&=\dim\mathcal{H}'_N(m,n) \qquad \text{(Surjective)} .
\end{aligned}
\label{eq:consistency check}
\end{equation}
The second condition establishes surjectivity. Together with the rank-nullity theorem
\begin{equation}
\dim\im\hat\swmap'^{(N)}_{m,n}+\dim\ker\hat\swmap'^{(N)}_{m,n}=\dim\cD_N(m,n)
\label{eq:ranknullity}
\end{equation}
the two conditions establish injectivity by
\begin{equation}
\dim \ker \hat\swmap'^{(N)}_{m,n} = 0 \qquad \text{(Injective)}.
\end{equation}
We implement these two checks using an intersection basis and its $SU(N)$ two-point functions, as follows.

\noindent\textbf{Computational Check 1: Dimension of the intersection.}\par
The dimension $\dim\mathcal{H}'_N(m,n)$ is obtained from \eqref{eq:H' dim diff formula}. 
The dimension $\dim\cD_N(m,n)$ is obtained by constructing a basis of $\cD_N(m,n)$ following  standard methods of  linear algebra. 

The construction of a basis of $\cD_N(m,n)$ is described as follows.
Let $\{ e_1,\dots, e_l \}$ be the orbit basis elements of the PCA with $l=\dim\cA(m,n)$, and $\{ b_1,\dots, b_r \}$ be those of the DOS with $r=\dim\cD(m,n)$.
Let $\{\hat e_1,\dots,\hat e_{l_N}\}$ be a basis of the finite-$N$ PCA subspace with $l_N=\dim\cA_N(m,n)$, and $\{\hat b_1,\dots,\hat b_{r_N}\}$ a basis of the intersection, with $r_N=\dim\cD_N(m,n)$. 
The four bases are related by rectangular matrices $B$, $B_N$ and ${B}_{\dos}$ as
\begin{equation}
\hat b_q = \sum_{a=1}^l B_{a q} \, e_a 
= \sum_{i=1}^{l_N} (B_N)_{i q} \, \hat e_i 
= \sum_{p=1}^r ({B}_{\dos})_{p q} \, b_p \,, \qquad q \in\{1,\dots, r_N\}.
\label{def:BtB}
\end{equation}
Thus, $B$, $B_N$ and $B_{\dos}$ embed the coordinates of the intersection into those of the PCA, the finite\,-$N$ PCA subspace and the DOS, respectively.
The construction of these rectangular matrices is reviewed in Appendix \ref{apx: null space algo}.

\noindent\textbf{Computational Check 2: Dimension of the image.}\par
To test the second condition in \eqref{eq:consistency check}, we evaluate the free two-point functions of the polynomials $\swmap'^{(N)}_{m,n}(\hat b_q)(Z',W')$ using the $SU(N)$ Wick contractions with the spacetime dependence removed:
\begin{equation}
\langle Z'_{ij}Z'^\dagger_{kl}\rangle_{SU(N)} =\langle W'_{ij}W'^\dagger_{kl}\rangle_{SU(N)} 
=\delta_{il}\delta_{jk}-\frac{1}{N}\delta_{ij}\delta_{kl} \,.
\label{eq:suN wick}
\end{equation}
They define a positive-definite inner product on the space $\mathcal{H}'_N(m,n)$, generalising Definition \ref{prop:UN inner}. 
The Gram matrix of the images of our intersection basis is
\begin{equation}
    (\mathcal{G}_{SU(N)}(m,n))_{pq}=\left\langle\swmap'^{(N)}_{m,n}(\hat b_p)(Z',W')\,\swmap'^{(N)}_{m,n}(\cS(\hat b_q))(Z'^\dagger,W'^\dagger)\right\rangle_{SU(N)}, \qquad p,q\in\{1,\dots,r_N\}.
\label{def:SUN Gram bb}
\end{equation}
This Gram matrix is computed using the permutation formula in Appendix \ref{app:Gram diagram}. 
Since the physical inner product on $\mathcal{H}'_N(m,n)$ is positive definite, a linear combination of $\swmap'^{(N)}_{m,n}(\hat b)$ has zero norm if and only if it is the zero polynomial. 
Consequently,
\begin{equation}
    \rank \mathcal{G}_{SU(N)}(m,n)=\dim\im\hat\swmap'^{(N)}_{m,n}.
\end{equation}
The second line of \eqref{eq:consistency check} becomes
\begin{equation}
    \rank \mathcal{G}_{SU(N)}(m,n)=\dim\mathcal{H}'_N(m,n)
\end{equation}
which is our second check.

Table \ref{tbl:intsec dims} is a summary of the dimension of $\mathcal{H}'_N(m,n)$, the dimension of $\cD_N(m,n)$ and the rank of the Gram matrix $\mathcal{G}_{SU(N)}(m,n)$. Their agreement verifies both equations in \eqref{eq:consistency check}.

\begin{table}[ht!]
    \centering
    $
    \begin{array}{l  l | l  l l l }
        m & n &  N &   \dim \cH'_N(m,n) & \dim \cD_N(m,n) &\rank\mathcal{G}_{SU(N)}(m,n) \\
        \hline
        2 & 2 &  >3 & 4 &4 &4 \\
          &   &  3  & 3 &3 &3 \\
          &   &  2  & 2 &2 &2 \\
          \hline
        3 & 2 &  >4 & 5 &5 &5 \\
          &   &  4  & 4 &4 &4 \\
          &   &  3  & 3 &3 &3 \\
          &   &  2  & 0 &0 &0 \\ 
          \hline
        4 & 2 &  >5 & 11&11&11\\ 
          &   &  5  & 10&10&10\\ 
          &   &  4  & 9 &9 &9 \\ 
          &   &  3  & 5 &5 &5 \\ 
          &   &  2  & 2 &2 &2 \\
          \hline
        3 & 3 &  >5 & 12&12&12\\ 
          &   &  5  & 11&11&11\\ 
          &   &  4  & 10&10&10\\ 
          &   &  3  & 6 &6 &6 \\ 
          &   &  2  & 2 &2 &2 \\
          \hline
        4 & 3 &  >6 & 20&20&20\\
          &   &  6  & 19&19&19\\ 
          &   &  5  & 18&18&18\\
          &   &  4  & 13&13&13\\ 
          &   &  3  & 6 &6 &6 \\
          &   &  2  & 0 &0 &0   
    \end{array}
    $
\caption{A table of (1) dimensions of the space of finite-$N$ traceless matrix invariants computed using \eqref{eq:H' dim diff formula} 
(2) dimensions of $\cD_N(m,n)$ computed using the intersection algorithm outlined in Appendix \ref{apx: null space algo} 
(3) rank of the Gram matrix $\mathcal{G}_{SU(N)}(m,n)$ computed using Appendix~\ref{app:Gram diagram}. The three quantities agree in every displayed case, supporting Conjecture~\ref{conj:suN bases}.
In each $(m,n)$ block, the first row is in the stable regime and the remaining rows are in the unstable regime.}
\label{tbl:intsec dims}
\end{table}

\subsection{One-loop derangement dilatation operator at finite $N$}

Conjecture \ref{conj:suN bases} identifies the intersection $\cD_N(m,n)$ as a candidate permutation description of the physical $SU(N)$ state space. We now propose an operator on this intersection that represents the physical one-loop dilatation operator. Its characteristic polynomial will be tested in the next subsection against the factorisation formula in Proposition~\ref{cor:suN vs uN HWS fixed e}.

Our proposal is closely related to the $U(N)$ case in Corollary \ref{prop: unstable 1LPDO}.
The restricted permutations-to-invariants map intertwines $\fD^{(1)}_N$ with the permutation dilatation operator $H^{(1)}_{\gss = 1/N}$ followed by an orthogonal projection onto $\cA_N(m,n)$ with respect to the orbit-basis inner product. This projection removes elements of $\ker \swmap^{(N)}_{m,n}$ generated by the action of the permutation dilatation operator.

For $SU(N)$, the physical one-loop dilatation operator acting on $\mathcal{H}'_N$ is given by $\fD_N^{\prime(1)}$ in \eqref{eq: SUN deriv dil}.
Then, using the restricted derangements-to-invariants map $\hat\swmap'^{(N)}_{m,n}$ in \eqref{def:DI map finite-N}, we conjecture that the action of the physical one-loop dilatation is represented by the permutation action followed by orthogonal projection onto $\cD_N(m,n)$. 

\begin{mainconjecture}[One-loop derangement dilatation operator]
Let $P^{\cD_N(m,n)}:\cA(m,n)\rightarrow\cD_N(m,n)$ be the orthogonal projector with respect to the orbit-basis inner product in Definition \ref{def:orbit-basis inner prod}. Define the one-loop derangement dilatation operator by the composition
\begin{equation}
P^{\cD_N(m,n)}\circ\left.H^{(1)}_{\gss}\right|_{\cD_N(m,n)}:\cD_N(m,n)\rightarrow\cD_N(m,n).
\label{def:PDNH}
\end{equation}
Assuming Conjecture~\ref{conj:suN bases}, we propose that $\hat\swmap'^{(N)}_{m,n}$ intertwines this operator at $\gss=1/N$
with $\fD_N^{\prime(1)}$ on $\mathcal{H}'_N(m,n)$:
\begin{equation}
P^{\cD_N(m,n)} \circ\left.H^{(1)}_{\gss=1/N}\right|_{\cD_N(m,n)}
= (\hat\swmap'^{(N)}_{m,n})^{-1} \circ\fD_N^{\prime(1)} \circ\hat\swmap'^{(N)}_{m,n} \,.
\end{equation}
\end{mainconjecture}

It is useful to construct another family of Gram matrices from the orbit-basis inner product.
Let $G(m,n)_{kk'} = \langle \orbitP{m,n}(g_k) , \orbitP{m,n}(g_{k'}) \rangle$ be the orbit-basis Gram matrix on the PCA $\cA(m,n)$ from Definition \ref{def:orbit-basis inner prod}.
Its restriction to the DOS is $G_{\dos}(m,n)$ in \eqref{eq: Gram matrix orbit-DOS}. The orbit-basis Gram matrix on the intersection $\cD_N(m,n)$ is therefore
\begin{equation}
    {G}_{\intr}(m,n) \equiv B^tG(m,n)B = B_{\dos}^tG_{\dos}(m,n){B}_{\dos}
\label{def:mathcal GramN}
\end{equation}
where $B^t$ is the transpose of $B$, assuming that the matrices have real coefficients in the ambient orbit basis.
Since the orbit-basis inner product is positive definite, and since the columns of $B$ are independent, ${G}_{\intr}(m,n)$ is positive definite and has rank $r_N$.
In particular, ${G}_{\intr}(m,n)$ is an invertible matrix.

The matrix elements of the one-loop derangement dilatation operator in a basis of $\cD_N(m,n)$ can be computed from the matrices $B$ and $G(m,n)$ introduced in the previous subsection. Recall that $B$ has $l$ rows and $r_N$ independent columns, and that $G(m,n)$ represents the positive-definite orbit inner product on $\cA(m,n)$. From \eqref{def:mathcal GramN}, the orbit-basis Gram matrix on the intersection ${G}_{\intr}(m,n) = B^tG(m,n)B$ is invertible.
We then obtain an $l \times l$ matrix representation of the orthogonal projector $P^{\cD_N(m,n)}$ in the orbit basis as
\begin{equation}
    P=B[{G}_{\intr}(m,n)]^{-1}B^tG(m,n)=B(B^tG(m,n)B)^{-1}B^tG(m,n).
\label{eq:cDN def}
\end{equation}
This projector maps ambient coordinate vectors orthogonally onto the column space of $B$, which represents $\cD_N(m,n)$.

Let $H$ be the $l\times l$ matrix representation of $H^{(1)}_{\gss}$ at $\gss=1/N$ in the same ambient basis. 
The matrix $(PHB)$ has $l$ rows and $r_N$ columns representing the image of $P^{\cD_N(m,n)}$.
Since $P$ and $B$ have the same column space,  there exists a unique $r_N\times r_N$ matrix $X$ satisfying
\begin{equation}
BX=PHB.
\label{eq:tilde H in B}
\end{equation}
The matrix $X$ represents the derangement dilatation operator \eqref{def:PDNH} in the basis $\{\hat b_1,\dots,\hat b_{r_N}\}$.

Multiplying \eqref{eq:tilde H in B} by $B^tG(m,n)$ gives
\begin{equation}
(B^tG(m,n)B)X = B^t G(m,n) B (B^tG(m,n)B)^{-1}B^tG(m,n) HB = B^tG(m,n)HB.
\end{equation}
Therefore,
\begin{equation}
X=(B^tG(m,n)B)^{-1}B^tG(m,n)HB.
\end{equation}
This equation proves the uniqueness of $X$.

\subsection{Examples of characteristic polynomials}
\label{sec: ex charpoly UN}
In this subsection we will show explicit examples of characteristic polynomials of the one-loop derangement dilatation operator.
To make the formulas more readable we will use the notation 
\begin{equation}
        \widetilde{H}^{(1)}_{\gss=\frac{1}{N}}  = \left. P^{\cD_N(m,n)}\circ H^{(1)}_{\gss=\frac{1}{N}}\right \vert_{\cD_N(m,n)} \,.
\end{equation}
We give evidence that this operator correctly produces the characteristic polynomial of the $SU(N)$ one-loop dilatation operator by comparing to Proposition~\ref{cor:suN vs uN HWS fixed e}:
\begin{equation}
    \det_{\cD_N^{\mathrm{HWS}}(m,n)}(x- \widetilde{H}^{(1)}_{\gss=\frac{1}{N}} )
\stackrel{?}{=}
\frac{\det_{\mathcal{H}_N^\mathrm{HWS}(m,n)}(x- \fD^{(1)}_N)
\det_{\mathcal{H}_N^\mathrm{HWS}(m-1,n-1)}(x- \fD^{(1)}_N)
}{
[\det_{\mathcal{H}_N^\mathrm{HWS}(m-1,n)}(x- \fD^{(1)}_N)]^{(1-\delta_{m,n})}
\det_{\mathcal{H}_N^\mathrm{HWS}(m,n-1)}(x- \fD^{(1)}_N)
} \,.
\label{eq:check SUN spec}
\end{equation}
As in Subsection \ref{subsec:ex charpoly} on $U(N)$ characteristic polynomials, we have restricted ourselves to the subspaces
\begin{equation}
    \cD^{\mathrm{HWS}}(m,n) = \cA^{\mathrm{HWS}}(m,n) \cap \mathbb{C}[D_{m+n}], \qquad 
    \cD_N^{\mathrm{HWS}}(m,n)  = \cA_N^{\mathrm{HWS}}(m,n) \cap \mathbb{C}[D_{m+n}]
\end{equation}
of highest-weight states under the $U(2)$ symmetry.
We give explicit data for the bidegrees $(2,2)$, $(3,2)$, $(4,2)$, and $(4,3)$. The spectral comparison has been checked for every bidegree with $m+n\leq 14$ and every relevant finite rank $N\geq 2$ including the stable regime.
The displayed results are summarised in Tables \ref{tbl: D22 char polys}, \ref{tbl: D32 char polys}, \ref{tbl: D42 char polys}, and \ref{tbl: D43 char polys}. 
We will return to physical aspects of these results in \cite{PRS_code_2026}.

\subsubsection{$\cD(2,2)$}
The characteristic polynomials of the one-loop derangement dilatation operator acting on ${\cD_N^{\mathrm{HWS}}(2,2)}$ are given by
\begin{align}
    \label{eq:cD222 char poly}
        N=2: \det_{\cD_2^{\mathrm{HWS}}(2,2)}(x- \widetilde H^{(1)}_{\gss = \frac{1}{2}}) &= (x-6)\\
    \label{eq:cD322 char poly}
        N=3: \det_{\cD_3^{\mathrm{HWS}}(2,2)}(x- \widetilde H^{(1)}_{\gss=\frac{1}{3}}) &= x(x-6)
\end{align}
in the unstable regimes.
In the stable regime $N \geq 4$, the characteristic polynomial can be computed for all $\gss$, giving
\begin{equation}
    \det_{\cD^{\mathrm{HWS}}(2,2)}(x- \widetilde H^{(1)}_{\gss}) = x(x-6) \,.
\end{equation}
Note that the unstable polynomials divide the stable polynomial.

The consistency check involves comparing these results to Proposition~\ref{cor:suN vs uN HWS fixed e}.
The characteristic polynomials entering the ratio in Proposition~\ref{cor:suN vs uN HWS fixed e}, at $N=2$, are
\begin{equation}
    \begin{aligned}
        &\det_{\mathcal{H}^{\mathrm{HWS}}_{2}(2,2)}(x-  \fD^{(1)}_{2}) = x(x-6), \quad
        &&\det_{\mathcal{H}^{\mathrm{HWS}}_{2}(1,1)}(x-  \fD^{(1)}_{2}) = 1 \\
        &\det_{\mathcal{H}^{\mathrm{HWS}}_{2}(2,1)}(x-  \fD^{(1)}_{2}) = x, \quad 
        &&
    \end{aligned}
\end{equation}
which gives
\begin{equation}
    \det_{\mathcal{H}'^{\mathrm{HWS}}_2(2,2)}(x-  \fD^{(1)}_{2}) = \frac{x(x-6) \times 1 }{x \times 1} = (x-6)
\end{equation}
and agrees with \eqref{eq:cD222 char poly}.
Similarly, for $N=3$ the relevant characteristic polynomials are
\begin{equation}
    \begin{aligned}
        &\det_{\mathcal{H}^{\mathrm{HWS}}_{3}(2,2)}(x-  \fD^{(1)}_{3}) = x^2(x-6), \quad
        &&\det_{\mathcal{H}^{\mathrm{HWS}}_{3}(1,1)}(x-  \fD^{(1)}_{3}) = 1 \\
        &\det_{\mathcal{H}^{\mathrm{HWS}}_{3}(2,1)}(x-  \fD^{(1)}_{3}) = x, \quad 
        &&
    \end{aligned}
\end{equation}
which gives
\begin{equation}
    \det_{\mathcal{H}'^{\mathrm{HWS}}_3(2,2)}(x-  \fD^{(1)}_{3}) = \frac{x^2(x-6) \times 1 }{x \times 1} = x(x-6)
\end{equation}
and agrees with \eqref{eq:cD322 char poly}.
We summarise these results in Table \ref{tbl: D22 char polys}.
\begin{table}[ht!]
    \centering
    $
    \renewcommand{\arraystretch}{2}
    \begin{array}{c | c | c | c}
        {\displaystyle N} & { \displaystyle \det_{\cD_N^{\mathrm{HWS}}(2,2)}(x- \widetilde H^{(1)}_{\gss}) }& { \displaystyle \det_{\cA_N^{\mathrm{HWS}}(2,2)}(x-  H^{(1)}_{\gss})\det_{\cA_N^{\mathrm{HWS}}(1,1)}(x-  H^{(1)}_{\gss}) } & { \displaystyle \det_{\cA_N^{\mathrm{HWS}}(2,1)}(x-  H^{(1)}_{\gss}) }\\
        \hline
        2  & (x-6) & x(x-6) & x \\
        >2  & x(x-6) & x^2(x-6) & x
    \end{array}
    $
    \caption{The finite-$N$ characteristic polynomials of the $SU(N)$ one-loop dilatation operator in the highest-weight space of the $\Lambda=[2,2]$ representation of $U(2)$, computed using the PCA framework. The second column is computed using derangement orbit spaces $\cD_N^{\mathrm{HWS}}(2,2)$. The third and fourth columns are computed using permutation centraliser spaces $\cA_N^{\mathrm{HWS}}(m,n)$. The second column is equal to the ratio of the third and fourth columns as predicted by Proposition~\ref{cor:suN vs uN HWS fixed e}.}
    \label{tbl: D22 char polys}
\end{table}

\subsubsection{$\cD(3,2)$}
The characteristic polynomials of the one-loop derangement dilatation operator acting on $\cD_N^{\mathrm{HWS}}(3,2)$ in the highest-weight space of $\Lambda=[3,2]$ are given by
\begin{align}
    \label{eq:cD232 char poly}
        N=2: \det_{\cD_2^{\mathrm{HWS}}(3,2)}(x- \widetilde H^{(1)}_{\gss = \frac{1}{2}}) &= 1\\
    \label{eq:cD332 char poly}
        N=3: \det_{\cD_3^{\mathrm{HWS}}(3,2)}(x- \widetilde H^{(1)}_{\gss=\frac{1}{3}}) &= (x-4) \\
    \label{eq:cD432 char poly}
        N=4: \det_{\cD_4^{\mathrm{HWS}}(3,2)}(x- \widetilde H^{(1)}_{\gss=\frac{1}{4}}) &= x(x-4)
\end{align}
in the unstable regimes.
In the stable regime $N \geq 5$, the characteristic polynomial can be computed for all $\gss$, giving
\begin{equation}
    \det_{\cD^{\mathrm{HWS}}(3,2)}(x- \widetilde H^{(1)}_{\gss}) = x(x-4) \,.
\end{equation}
The unstable polynomials divide the stable polynomial here as well.

To compare with Proposition~\ref{cor:suN vs uN HWS fixed e} at $N=2$, the relevant characteristic polynomials are
\begin{equation}
    \begin{aligned}
        &\det_{\mathcal{H}^{\mathrm{HWS}}_{2}(3,2)}(x-  \fD^{(1)}_{2}) = x(x-6), \quad
        &&\det_{\mathcal{H}^{\mathrm{HWS}}_{2}(2,1)}(x-  \fD^{(1)}_{2}) = x \\
        &\det_{\mathcal{H}^{\mathrm{HWS}}_{2}(2,2)}(x-  \fD^{(1)}_{2}) = x(x-6), \quad 
        &&\det_{\mathcal{H}^{\mathrm{HWS}}_{2}(3,1)}(x-  \fD^{(1)}_{2}) = x
    \end{aligned}
\end{equation}
which agrees with \eqref{eq:cD232 char poly} since then
\begin{equation}
    \det_{\mathcal{H}'^{\mathrm{HWS}}_2(3,2)}(x-  \fD^{(1)}_{2}) = 1 \,.
\end{equation}
For $N=3$ the polynomials are
\begin{equation}
    \begin{aligned}
        &\det_{\mathcal{H}^{\mathrm{HWS}}_{3}(3,2)}(x-  \fD^{(1)}_{3}) = x^3(x-4)(x-6), \quad
        &&\det_{\mathcal{H}^{\mathrm{HWS}}_{3}(2,1)}(x-  \fD^{(1)}_{3}) = x \\
        &\det_{\mathcal{H}^{\mathrm{HWS}}_{3}(2,2)}(x-  \fD^{(1)}_{3}) = x^2(x-6), \quad 
        &&\det_{\mathcal{H}^{\mathrm{HWS}}_{3}(3,1)}(x-  \fD^{(1)}_{3}) = x^2
    \end{aligned}
\end{equation}
which agrees with \eqref{eq:cD332 char poly} since then
\begin{equation}
    \det_{\mathcal{H}'^{\mathrm{HWS}}_3(3,2)}(x-  \fD^{(1)}_{3}) =  \frac{x^3(x-4)(x-6) \times x}{x^2(x-6)\times x^2} = (x-4)\,.
\end{equation}
Lastly, for $N=4$ we have
\begin{equation}
    \begin{aligned}
        &\det_{\mathcal{H}^{\mathrm{HWS}}_{4}(3,2)}(x-  \fD^{(1)}_{4}) = x^4(x-4)(x-6), \quad
        &&\det_{\mathcal{H}^{\mathrm{HWS}}_{4}(2,1)}(x-  \fD^{(1)}_{4}) = x \\
        &\det_{\mathcal{H}^{\mathrm{HWS}}_{4}(2,2)}(x-  \fD^{(1)}_{4}) = x^2(x-6), \quad 
        &&\det_{\mathcal{H}^{\mathrm{HWS}}_{4}(3,1)}(x-  \fD^{(1)}_{4}) = x^2
    \end{aligned}
\end{equation}
which agrees with \eqref{eq:cD432 char poly} since
\begin{equation}
    \det_{\mathcal{H}'^{\mathrm{HWS}}_4(3,2)}(x-  \fD^{(1)}_{4}) =  \frac{x^4(x-4)(x-6) \times x}{x^2(x-6)\times x^2} = x(x-4)\,.
\end{equation}
We summarise these results in Table \ref{tbl: D32 char polys}.
\begin{table}[ht!]
    \centering
    $
    \renewcommand{\arraystretch}{2}
    \begin{array}{c | c | c | c}
N & \displaystyle \det_{\cD_N^{\mathrm{HWS}}(3,2)} (x-\widetilde H^{(1)}_{\gss}) 
& \begin{aligned}[t]
  &\det_{\cA_N^{\mathrm{HWS}}(3,2)} (x-H^{(1)}_{\gss})\times \\ &\qquad \det_{\cA_N^{\mathrm{HWS}}(2,1)} (x-H^{(1)}_{\gss}) \end{aligned}
& \begin{aligned}[t] &\det_{\cA_N^{\mathrm{HWS}}(2,2)} (x-H^{(1)}_{\gss})\times \\ &\qquad \det_{\cA_N^{\mathrm{HWS}}(3,1)} (x-H^{(1)}_{\gss}) \end{aligned}
\\
        \hline
        2  & 1 & x^2(x-6) & x^2(x-6) \\
        3  & (x-4) & x^4(x-4)(x-6) & x^4(x-6) \\
        >3  & x(x-4) & x^5(x-4)(x-6) & x^4(x-6) \\
    \end{array}
    $
    \caption{The finite-$N$ characteristic polynomials of the $SU(N)$ one-loop dilatation operator in the highest-weight space of the $\Lambda=[3,2]$ representation of $U(2)$, computed using the PCA framework. The second column is computed using derangement orbit space $\cD_N^{\mathrm{HWS}}(3,2)$. The third and fourth columns are computed using permutation centraliser spaces $\cA_N^{\mathrm{HWS}}(m,n)$. The second column is equal to the ratio of the third and fourth columns as predicted by Proposition~\ref{cor:suN vs uN HWS fixed e}.}
    \label{tbl: D32 char polys}
\end{table}

\subsubsection{$\cD(4,2)$}
The characteristic polynomials of the one-loop derangement dilatation operator acting on ${\cD_N^{\mathrm{HWS}}(4,2)}$ are given by
\begin{align}
    \label{eq:cD242 char poly}
        N=2: \det_{\cD_2^{\mathrm{HWS}}(4,2)}(x- \widetilde H^{(1)}_{\gss = \frac{1}{2}}) &= (x-10)
        \\
    \label{eq:cD342 char poly}
        N=3: \det_{\cD_3^{\mathrm{HWS}}(4,2)}(x- \widetilde H^{(1)}_{\gss=\frac{1}{3}}) &= x(x^2-\frac{38}{3}x+\frac{100}{3}) 
        \\
    \label{eq:cD442 char poly}
        N=4: \det_{\cD_4^{\mathrm{HWS}}(4,2)}(x- \widetilde H^{(1)}_{\gss=\frac{1}{4}}) &= x^2(x^3 - 16x^2 + \frac{155}{2}x - 115)
        \\
    \label{eq:cD542 char poly}
        N=5: \det_{\cD_5^{\mathrm{HWS}}(4,2)}(x- \widetilde H^{(1)}_{\gss=\frac{1}{5}}) &= x^3(x^3-16x^2+\frac{392}{5}x-\frac{584}{5})
\end{align}
in the unstable regimes.
In the stable regime $N \geq 6$, the characteristic polynomial can be computed for all $\gss$, giving
\begin{equation}
    \det_{\cD^{\mathrm{HWS}}(4,2)}(x- \widetilde H^{(1)}_{\gss}) 
    = x^3(x^3 - 16x^2 +(80-40 \, \gss^2) x+  80 \, \gss^2 -120  ) .
\end{equation}
The same elementary computations as in Section \ref{subsec:ex charpoly} show that the polynomials in the unstable regime divide the polynomial in the stable regime at $\gss = N^{-1}$.

To compare with Proposition~\ref{cor:suN vs uN HWS fixed e}, the relevant characteristic polynomials at $N=2$ are
\begin{equation}
    \begin{aligned}
        &\det_{\mathcal{H}^{\mathrm{HWS}}_{2}(4,2)}(x-  \fD^{(1)}_{2}) = 
        x^2(x-6)(x-10), \quad
        &&\det_{\mathcal{H}^{\mathrm{HWS}}_{2}(3,1)}(x-  \fD^{(1)}_{2}) =
        x \\
        &\det_{\mathcal{H}^{\mathrm{HWS}}_{2}(3,2)}(x-  \fD^{(1)}_{2}) =
        x(x-6), \quad 
        &&\det_{\mathcal{H}^{\mathrm{HWS}}_{2}(4,1)}(x-  \fD^{(1)}_{2}) =
        x^2
    \end{aligned}
\end{equation}
which agrees with \eqref{eq:cD242 char poly} since then
\begin{equation}
    \det_{\mathcal{H}'^{\mathrm{HWS}}_2(4,2)}(x-  \fD^{(1)}_{2}) = (x-10)\,.
\end{equation}
For $N=3$ the polynomials are
\begin{equation}
    \begin{aligned}
        &\det_{\mathcal{H}^{\mathrm{HWS}}_{3}(4,2)}(x-  \fD^{(1)}_{3}) =
        x^6(x-4)(x-6)(x^2-\frac{38}{3}x+\frac{100}{3}), \quad
        &&\det_{\mathcal{H}^{\mathrm{HWS}}_{3}(3,1)}(x-  \fD^{(1)}_{3}) =
        x^2 \\
        &\det_{\mathcal{H}^{\mathrm{HWS}}_{3}(3,2)}(x-  \fD^{(1)}_{3}) =
        x^3(x-4)(x-6), \quad 
        &&\det_{\mathcal{H}^{\mathrm{HWS}}_{3}(4,1)}(x-  \fD^{(1)}_{3}) =
        x^4
    \end{aligned}
\end{equation}
which agrees with \eqref{eq:cD342 char poly} since then
\begin{equation}
    \det_{\mathcal{H}'^{\mathrm{HWS}}_3(4,2)}(x-  \fD^{(1)}_{3}) =  \frac{x^6(x-4)(x-6)(x^2-\frac{38}{3}x+\frac{100}{3}) \times x^2}{x^3(x-4)(x-6)\times x^4} = x(x^2-\frac{38}{3}x+\frac{100}{3})\,.
\end{equation}
For $N=4$ we have
\begin{equation}
    \begin{aligned}
        &\det_{\mathcal{H}^{\mathrm{HWS}}_{4}(4,2)}(x-  \fD^{(1)}_{4}) =
        x^9(x-4)(x-6)(x^3-16x^2 + \frac{155}{2}x - 115), \quad
        &&\det_{\mathcal{H}^{\mathrm{HWS}}_{4}(3,1)}(x-  \fD^{(1)}_{4}) =
        x^2 \\
        &\det_{\mathcal{H}^{\mathrm{HWS}}_{4}(3,2)}(x-  \fD^{(1)}_{4}) =
        x^4(x-4)(x-6), \quad 
        &&\det_{\mathcal{H}^{\mathrm{HWS}}_{4}(4,1)}(x-  \fD^{(1)}_{4}) =
        x^5
    \end{aligned}
\end{equation}
which agrees with \eqref{eq:cD442 char poly} since
\begin{equation}
    \det_{\mathcal{H}'^{\mathrm{HWS}}_4(4,2)}(x-  \fD^{(1)}_{4}) =
    \frac{x^9(x-4)(x-6)(x^3-16x^2 + \frac{155}{2}x - 115)  x^2}{x^4(x-4)(x-6)x^5} =
    x^2(x^3-16x^2 + \frac{155}{2}x - 115)
    \,.
\end{equation}
For $N=5$ we have
\begin{equation}
    \begin{aligned}
        &\det_{\mathcal{H}^{\mathrm{HWS}}_{5}(4,2)}(x-  \fD^{(1)}_{5}) =
        x^{10}(x-4)(x-6)(x^3-16x^2 + \frac{392}{5}x - \frac{584}{5}), \quad
        &&\det_{\mathcal{H}^{\mathrm{HWS}}_{5}(3,1)}(x-  \fD^{(1)}_{5}) =
        x^2 \\
        &\det_{\mathcal{H}^{\mathrm{HWS}}_{5}(3,2)}(x-  \fD^{(1)}_{5}) =
        x^4(x-4)(x-6), \quad 
        &&\det_{\mathcal{H}^{\mathrm{HWS}}_{5}(4,1)}(x-  \fD^{(1)}_{5}) =
        x^5
    \end{aligned}
\end{equation}
which agrees with \eqref{eq:cD542 char poly} since it gives

\begin{equation}
    \det_{\mathcal{H}'^{\mathrm{HWS}}_5(4,2)}(x-  \fD^{(1)}_{5}) =
    x^3(x^3-16x^2 + \frac{392}{5}x - \frac{584}{5})
    \,.
\end{equation}
We summarise these results in Table \ref{tbl: D42 char polys}.
\begin{table}[ht!]
    \centering
    \renewcommand{\arraystretch}{1.5}
$\begin{array}{c | c | c | c}
        N & \displaystyle \det_{\cD_N^{\mathrm{HWS}}(4,2)}(x-\widetilde H^{(1)}_{\gss}) & \begin{aligned}[t] &\det_{\cA_N^{\mathrm{HWS}}(4,2)}(x- H^{(1)}_{\gss})\times \\
        &\qquad \det_{\cA_N^{\mathrm{HWS}}(3,1)}(x- H^{(1)}_{\gss}) \end{aligned} & \begin{aligned}[t] &\det_{\cA_N^{\mathrm{HWS}}(3,2)}(x- H^{(1)}_{\gss})\times \\
        &\qquad \det_{\cA_N^{\mathrm{HWS}}(4,1)}(x- H^{(1)}_{\gss}) \end{aligned} \\
        \hline
        2 & (x-10) & x^3(x-6)(x-10) & x^3(x-6) \\
        3 & x (x^2-\frac{38}{3}x+\frac{100}{3}) & \begin{aligned}[t] &x^8(x-4)(x-6)\times \\
        &\quad (x^2-\frac{38}{3}x+\frac{100}{3}) \end{aligned} & x^7(x-4)(x-6) \\
        4 & \begin{aligned}[t] &x^2\times \\
        &\quad (x^3-16x^2+\frac{155x}{2}-115) \end{aligned} & \begin{aligned}[t] &x^{11}(x-4)(x-6)\times \\
        &\quad (x^3-16x^2+\frac{155x}{2}-115) \end{aligned} & x^9(x-4)(x-6) \\
        5 & \begin{aligned}[t] &x^3\times \\
        &\quad (x^3-16x^2+\frac{392x}{5}-\frac{584}{5}) \end{aligned} & \begin{aligned}[t] &x^{12}(x-4)(x-6)\times \\
        &\quad (x^3-16x^2+\frac{392x}{5}-\frac{584}{5}) \end{aligned} & x^9(x-4)(x-6) \\
        >5 & x^3 q_{\gss}(x) &x^{12}(x-4)(x-6) q_{\gss}(x) & x^9(x-4)(x-6) \\
\end{array}$
    \caption{The finite-$N$ characteristic polynomials of the $SU(N)$ one-loop dilatation operator in the highest-weight space of the $\Lambda=[4,2]$ representation of $U(2)$, computed using the PCA framework. The second column is computed using derangement orbit space $\cD_N^{\mathrm{HWS}}(4,2)$. The third and fourth columns are computed using permutation centraliser spaces $\cA_N^{\mathrm{HWS}}(m,n)$. The second column is equal to the ratio of the third and fourth columns as predicted by Proposition~\ref{cor:suN vs uN HWS fixed e}. The polynomial $q_{\gss}(x) = (x^3-16x^2+(80-40\gss^2)x+80\gss^2-120)$ with $\xi=1/N$ is defined to save space.}
    \label{tbl: D42 char polys}
\end{table}

\begin{table}[ht!]
    \centering
    \renewcommand{\arraystretch}{2}
    $
    \begin{array}{c | c | c | c}
        N & \displaystyle \det_{\cD_N^{\mathrm{HWS}}(4,3)}(x-\widetilde H^{(1)}_{\gss}) & \begin{aligned}[t] &\det_{\cA_N^{\mathrm{HWS}}(4,3)}(x- H^{(1)}_{\gss})\times \\
        &\qquad \det_{\cA_N^{\mathrm{HWS}}(3,2)}(x- H^{(1)}_{\gss}) \end{aligned} & \begin{aligned}[t] &\det_{\cA_N^{\mathrm{HWS}}(3,3)}(x- H^{(1)}_{\gss})\times \\
        &\qquad \det_{\cA_N^{\mathrm{HWS}}(4,2)}(x- H^{(1)}_{\gss}) \end{aligned} \\
        \hline
        2 & 1 & x^2(x-6)(x-10) & x^2(x-6)(x-10) \\
        3 & (x-\frac{20}{3}) & \begin{aligned}[t] &x^7(x-4)^2(x-6)^2 (x-\frac{20}{3}) \times \\
        &\quad (x^2-\frac{38}{3}x+\frac{100}{3}) \end{aligned} & \begin{aligned}[t] &x^7(x-4)^2(x-6)^2\times \\
        &\quad (x^2-\frac{38}{3}x+\frac{100}{3}) \end{aligned} \\
        4 & \begin{aligned}[t] &x(x-5)\times \\
        &\quad (x^2-9x+\frac{35}{2}) \end{aligned} & \begin{aligned}[t] &x^{12}(x-4)^2(x-6)^2\times \\
        &\quad (x-5)(x^2-9x+\frac{35}{2})\times \\
        &\quad (x^3-16x^2+\frac{155}{2}x-115) \end{aligned} & \begin{aligned}[t] &x^{11}(x-4)^2(x-6)^2\times \\
        &\quad (x^3-16x^2+\frac{155}{2}x-115) \end{aligned} \\
        5 & \begin{aligned}[t] &x^2(x-5)\times \\
        &\quad (x^2-9x+\frac{92}{5}) \end{aligned} & \begin{aligned}[t] &x^{14}(x-4)^2 (x-5) (x-6)^2\times \\
        &\quad (x^2-9x+\frac{92}{5})\times \\
        &\quad (x^3-16x^2+\frac{392}{5}x-\frac{584}{5}) \end{aligned} & \begin{aligned}[t] &x^{12}(x-4)^2(x-6)^2\times \\
        &\quad (x^3-16x^2+\frac{392}{5}x-\frac{584}{5}) \end{aligned} \\
        >5 & \begin{aligned}[t] &x^2(x-5)\times \\
        &\quad (x^2-9x-40\gss^2+20) \end{aligned} & \begin{aligned}[t] &x^{14}(x-4)^2 (x-5) (x-6)^2 \, q_{\gss}(x)\times \\
        &\quad (x^2-9x-40\gss^2+20) \end{aligned} & x^{12}(x-4)^2(x-6)^2 \, q_{\gss}(x) 
    \end{array}
    $
\caption{The finite-$N$ characteristic polynomials of the $SU(N)$ one-loop dilatation operator in the highest-weight space of the $\Lambda=[4,3]$ representation of $U(2)$, computed using the PCA framework. The second column is computed using derangement orbit space $\cD_N^{\mathrm{HWS}}(4,3)$. The third and fourth columns are computed using permutation centraliser spaces $\cA_N^{\mathrm{HWS}}(m,n)$. The second column is equal to the ratio of the third and fourth columns as predicted by Proposition~\ref{cor:suN vs uN HWS fixed e}.
The polynomial $q_{\gss}(x) = (x^3-16x^2+(80-40\gss^2)x+80\gss^2-120)$ with $\xi=1/N$ is defined to save space.
Although $N=6$ lies in the unstable regime $N < 7$, the result coincides with the stable-regime polynomial after setting $\gss=1/6$.}
    \label{tbl: D43 char polys}
\end{table}

\section{Conclusion}\label{sec:conclusion}

We summarise the content of the paper and outline some future research directions. 

\subsection{Summary}

In this paper, we addressed the problem of calculating the exact finite $N$ dependence of the spectrum of the one-loop dilatation operator in the $\su2$ sector of $\cN=4$ SYM with gauge groups $U(N)$ and $SU(N)$, for fixed classical dimension of the gauge invariant operators. Multi-trace operators of bidegree $(m,n)$ are linearly independent in the stable regime $N\geq L=m+n$. In this regime, the dimension of the space $\cH_N(m,n)$ of $U(N)$-invariant states is independent of $N$, and is described by the permutation centraliser algebra $\cA(m,n)$ \cite{Mattioli:2016eyp}. In the unstable regime $N<L$, finite-$N$ trace relations remove states in a non-trivial way. This is the field-theory manifestation of the stringy exclusion principle, which is non-perturbative in the $1/N$ expansion. We now summarise our main results.

In Section \ref{sec:UN factorisation}, we defined the permutations-to-invariants map
\begin{equation}
\swmap^{(N)}_{m,n}:\cA(m,n)\longrightarrow\cH_N(m,n),
\end{equation}
which is an isomorphism in the stable regime. 
In the unstable region, the permutations-to-invariants map $\swmap^{(N)}_{m,n}$ has a non-trivial kernel.
Next, we defined bases for $\cA(m,n)$, labelled by Young diagrams $R\vdash L$, which we called SEP bases. Such bases diagonalise the left action of the centre of the group algebra $\mathbb{C}[S_{m+n}]$; the restricted Schur and covariant bases give two realisations \cite{Brown:2007xh,Bhattacharyya:2008rb,Bhattacharyya:2008xy,Brown:2008ij,Collins:2008gc}.
A SEP basis splits $\cA(m,n)$ into two subspaces. The sector with $\ell(R)>N$ is the kernel of the permutations-to-invariants map (Proposition~\ref{thm:UN-inv-fin}), whereas the complementary sector with $\ell(R)\leq N$, denoted by $\cA_N(m,n)$, is isomorphic to the space of $U(N)$ invariants for every $N$.

We then introduced a one-parameter family of one-loop permutation dilatation operators $ H^{(1)}_{\gss}$ \cite{Bellucci:2004ru,Pasukonis:2010rv}. 
We constructed $ H^{(1)}_{\gss}$ so that its action on $\cA_N(m,n)$ evaluated at $\gss = N^{-1}$ reproduces the action of the physical one-loop dilatation operator $ \fD^{(1)}_N$ on $\cH_N(m,n)$ for any positive integer $N$. This statement is expressed by the intertwining relation with $\swmap^{(N)}_{m,n}$ (Proposition~\ref{prop:D1 intertwining}).
It follows that the kernel of $\swmap^{(N)}_{m,n}$ is invariant under the one-loop permutation dilatation operator, and hence the matrix of $ H^{(1)}_{\gss = 1/N}$ is block triangular (Lemma \ref{cor:finite-N factorisation}). Our first main theorem then states that the characteristic polynomial computed from $ H^{(1)}_{\gss} $, evaluated at $\gss=1/N$, contains the exact finite-$N$ characteristic polynomial as a factor (Theorem \ref{prop:finite-N factorisation}):
\begin{equation}
 \left.  \det_{\cA(m,n)}\!\left(x- H^{(1)}_{\gss}\right)  \right|_{\gss=1/N}
 = Q^{(N)}_{m,n}(x)\, \det_{\cH_N(m,n)}\!\left(x-\fD_N^{(1)}\right).
\label{eq:conclusion-UN-factorisation}
\end{equation}
Thus, the physical spectrum in the unstable regime is a factor of a characteristic polynomial computed on the stable, $N$-independent space.

In Section \ref{sec:SU(N)}, we studied a second factorisation, associated with the Lie-algebra decomposition of $u(N)$ into $su(N) \oplus u(1)$. Decomposing the matrices into traceless and trace parts, $Z=Z'+zI$ and $W=W'+wI$, gives a direct proof of
$\cH_N\cong\cH'_N\otimes\mathbb{C}[z,w]$, where $\cH'_N$ is the space of $SU(N)$-invariant polynomials in $Z'$ and $W'$ (Proposition~\ref{prop:hilb factorisation}; see \cite{PROCESI1976306}).
We then proved that this factorisation holds for each eigenvalue of the one-loop dilatation operator (Theorem \ref{prop:eigenspace factorisation}). 
Therefore every $SU(N)$ eigenstate of eigenvalue $e$ generates an infinite tower of $U(N)$ eigenstates with the same anomalous dimension:
\begin{equation}
 \cH_N(e) \cong \cH'_N(e)\otimes\mathbb{C}[z,w].
\label{eq:conclusion-eigenspace-factorisation}
\end{equation}
This is stronger than a relation between the Hilbert series of the $SU(N)$ and $U(N)$ theories in the $\su2$ sector, and exhibits the decoupling of the free $U(1)$ multiplet at one loop. We expect the same factorisation to hold for the anomalous part of the perturbative dilatation operator in the $\su2$ sector, since trace parts do not enter commutator interactions. As a consequence, for every $N$, the $SU(N)$ characteristic polynomial is a ratio of $U(N)$ characteristic polynomials at neighbouring bidegrees (Corollary \ref{cor:suN charpoly} and Proposition~\ref{cor:suN vs uN HWS fixed e}). 
The same argument applies to the zero-eigenspaces, and hence to the counting of quarter-BPS states.

In Section \ref{sec:derangements}, we gave an intrinsic permutation description of the $SU(N)$ theory that does not rely on the $U(N)$ factorisation. For traceless matrices, a permutation with a fixed point produces a factor of $\tr(Z')$ or $\tr(W')$, and hence a vanishing invariant. This motivates labelling $SU(N)$ invariants by derangements, i.e. permutations with no fixed points. 
We proved that, in the stable regime, the derangement orbit space $\cD(m,n)$ is isomorphic to $\cH'_N(m,n)$ through the derangements-to-invariants map. In the unstable regime, we conjectured that the restriction of this map to $\cD_N(m,n)=\cA_N(m,n)\cap\cD(m,n)$ is an isomorphism onto $\cH'_N(m,n)$ for every $N$ (Conjecture \ref{conj:suN bases}):
\begin{equation}
\hat \swmap'^{(N)}_{m,n} \, : \, \cD_N(m,n) \ \stackrel{\sim}{\longrightarrow} \ \mathcal{H}'_N(m,n).
\label{eq: conj iso}
\end{equation}
This conjecture is supported by three independently computed quantities: the dimension of $\mathcal{H}'_N(m,n)$, the dimension of $\cD_N(m,n)$,
and the rank of the $SU(N)$ two-point-function Gram matrix. We verified their agreement for $L\leq7$.
We further proposed that the one-loop derangement dilatation operator
\begin{equation}
    P^{\cD_N(m,n)}\circ  H^{(1)}_{\gss = \frac{1}{N}}\big|_{\cD_N(m,n)}
\label{eq:conclusion-derangement-H}
\end{equation}
has the characteristic polynomial predicted by Proposition~\ref{cor:suN vs uN HWS fixed e}.
Therefore, the derangement construction also reproduces the correct eigenvalues, and the proposal was checked up to $L\leq 14$. These agreements provide evidence for both the intersection conjecture and the proposed dynamics. 
Assuming these conjectures, we obtain a novel self-contained permutation framework for $SU(N)$ invariants, free correlators and one-loop mixing in the $\su2$ sector at finite $N$.

We expect that the methods developed in this paper suffice to prove the intersection conjecture (Equation \eqref{eq: conj iso}) and 
to establish the corresponding intertwining property of $\widetilde H^{(1)}_{\gss=1/N}$.
Since this is of less immediate interest for physical applications, we will return to it in the future.
We will investigate the physical aspects of the spectral data in upcoming work \cite{PRS_code_2026}.
The present methods may also provide an avenue for constructing an $SU(N)$ analogue of a SEP basis, with $\cD_N(m,n)$ arising as the finite-$N$ truncation of a basis of $\cD(m,n)$.

\subsection{Open problems and related directions}

We close with a discussion  of a few related topics and future directions.

An important open problem is to identify the physical factors in the polynomial factorisation \eqref{eq:conclusion-UN-factorisation} directly from the $N$-independent spectral data. Theorem \ref{prop:finite-N factorisation} does not identify which roots survive when we specialise $ \xi $ to $1/N$. A selection rule that identifies the physical roots using only the $N$-independent data could reveal additional representation-theoretic structure in the one-loop dilatation operator, associated to the chain of subgroups  $U(1) \subset U(2) \subset \dots \subset U(N)$.

The relations among the gauge invariants at finite $N$ are the subject of the elegant second fundamental
theorem for matrix invariants, formulated in terms of traces and permutations
by Procesi \cite{PROCESI1976306} and obtained independently by Razmyslov
\cite{Razmyslov1974}. A cutoff on Young diagrams with more than $N$ rows plays a role there as well.
These mathematical works offer a complementary perspective on the description of invariants by focusing on the generators and
relations of the invariant ring.   In the classical invariant-ring approach, one fixes $N$ and studies generators of the trace relations together with their higher syzygies. 
Closed-form Hilbert series for two-matrix invariants — which capture the relations, and the higher-order relations between relations — are known only for small $N$ \cite{djokovic2006poincare}, in part because the degree of the Hilbert-series numerator grows with $N$. 
In the permutation approach used here and in \cite{Padellaro:2024rld}, one instead fixes $(m,n)$, works in the semisimple algebra $\cA(m,n)$, and removes the Young-diagram sectors with $\ell(R)>N$ which are closed under the one-loop dilatation operator. This description is uniform in $N$. Connecting it more directly to classical invariant theory should clarify how generators and syzygies are encoded by the Young-diagram cutoff. Related stable-regime counting results arise from symmetric pairs \cite{Willenbring2007stable} and from Hilbert series for invariants of several matrices under simultaneous conjugation \cite{HarrisWillenbring2014}.

The single-letter traces $\tr Z$ and $\tr W$ are scalar components of the free multiplet associated with the central $U(1)$ of the $U(N)$ gauge group. This multiplet describes the centre-of-mass motion of the coincident D3-branes, and is known as the singleton or doubleton representation of $psu(2,2|4)$ \cite{Gunaydin:1998sw}. The singleton appears as pure gauge modes in the bulk spectrum of type IIB supergravity on \AdSxS\ \cite{Kim:1985ez}, while its boundary degrees of freedom have been studied using topological field theory and its extensions \cite{Witten:1998wy,Maldacena:2001ss,Belov:2004ht,Bergman:2025isp}.

Theorem \ref{prop:eigenspace factorisation} states that, at every $N$, each one-loop eigenspace in the $\su2$ sector of the $U(N)$ theory is the corresponding $SU(N)$ eigenspace tensored with $\mathbb{C}[z,w]$. In the planar regime, the distinction is readily implemented by separating the single-letter traces from the single-trace operators dual to propagating string states.
For giant gravitons and LLM geometries, the familiar constructions use the $U(N)$ half-BPS sector \cite{Corley:2001zk,Lin:2004nb}, while large dimension  operators and correlators have also been constructed directly in the $SU(N)$ theory \cite{Corley:2002mj,deMelloKoch:2004crq,Brown:2007bb}.
Thus, the gauge theory provides both $U(N)$ and $SU(N)$ descriptions for large operators, but how these two descriptions are related on the gravity side remains to be understood.
A very interesting question motivated by our work is  whether two formulations of string theory on \AdSxS, each complete in its own terms, can reproduce the $SU(N)$ and $U(N)$ spectra and exhibit the corresponding factorisation of the state space at every level and any $N$.
Such a construction would clarify how the boundary singleton sector remains decoupled from the degrees of freedom describing branes and geometries, even when the number of excitations in the singleton sector is comparable to or larger than $N$.

Finally, the framework is algorithmic and suggests two concrete extensions. 
The first is computational: one can push to larger $L$ (and general $N$), where $N$ is much smaller than $L$, giving access to a strongly unstable regime in which quantum corrections to anomalous dimensions interplay with finite-$N$ trace relations. 
 The second is to enlarge the field content. Incorporating anticommuting, Grassmann-valued matrices into the PCA construction would extend the method beyond the $\su2$ sector toward the $\tfrac{1}{16}$-BPS sector.
 This sector includes microstates of supersymmetric AdS$_5$ black holes, whose SUSY-protected counting is encoded in the superconformal index \cite{Kinney:2005ej,Dolan:2007rq}.
 Finite-$N$ invariant rings have also attracted renewed attention in connection with non-perturbative states \cite{deMelloKoch:2025ngs} and the fortuity mechanism for BPS black-hole microstates \cite{Chang:2022mjp,Chang:2024zqi}. Relating the Young-diagram description of the kernel and its one-loop dynamics to these developments would  provide spectral information precisely in the non-planar regime where the bulk description is least understood.

\subsubsection*{Acknowledgements}
We thank Joseph Ben Geloun, Robert de Mello Koch, Yang Lei, Hai Lin, Adolfo Holguin, Sam van Leuven, Denjoe O'Connor, Pranav Pandit and Micha\l{} Studzi\'nski for useful discussions.
AP is supported by the Department of Atomic Energy, Government of India, under project no. RTI14019.
SR is supported by the Science and Technology Facilities Council (STFC) Consolidated Grant ST/T000686/1, Amplitudes, Strings and Duality and the Royal Society International Exchanges Grant \verb|IEC\NSFC\242376|. 
The ideas, arguments and writing in this paper came from the authors. AI models were used for assistance with aspects of the analysis and editing.

\appendix

\section{Matrix elements of one-loop dilatation operator}\label{app:induced action D1}

A linear map $H$ on the gauge-invariant operators in $\mathcal{H}_{N}(m,n)$ in the stable regime induces a matrix $H_{k'k}$ in the orbit basis of $\cA(m,n)$, denoted by
\begin{equation}
H \cdot \orbitP{m,n} (g_k) \equiv \sum_{k'=1}^{\dim \cA(m,n)} H_{k'k} \, \orbitP{m,n} (g_{k'}) , \qquad N \ge L .
\label{app:induced action F2}
\end{equation}
In this appendix, we compute the matrix elements of the one-loop dilatation operator in terms of permutations \cite{Bellucci:2004ru,Pasukonis:2010rv}.
Colour contractions produce a factor of $N$, and we can treat $N$ as an arbitrary parameter.
This calculation is used in the proof of Proposition~\ref{prop:D1 intertwining}.

Let $(\partial/\partial Z)$ act on the $p$-th position and $(\partial/\partial W)$ act on the $q$-th position, where $p \in \{1,\dots,m\}$ and $q \in \{m+1,\dots,L\}$.
For convenience, we write
\begin{equation}
X_a=\begin{cases}
Z & (1\leq a\leq m),\\
W & (m+1\leq a\leq L).
\end{cases}
\end{equation}
A direct differentiation gives
\begin{multline}
    N\, \fD^{(1)}_N \cdot \swmap^{(N)}_{m,n}(g)
= \sum_{i_1,\dots,i_L=1}^N
\sum_{p=1}^m\sum_{q=m+1}^L
\prod_{\substack{a=1\\a\neq p,q}}^L (X_a)_{i_a i_{g(a)}}
\times
\\
\sum_{\nu=1}^N
\Bigl\{
-\delta_{i_p i_{g(q)}} Z_{i_q\nu}W_{\nu i_{g(p)}}
+\delta_{i_p i_{g(q)}} W_{i_q\nu}Z_{\nu i_{g(p)}}
+Z_{i_p\nu}W_{\nu i_{g(q)}}\delta_{i_q i_{g(p)}}
-W_{i_p\nu}Z_{\nu i_{g(q)}}\delta_{i_q i_{g(p)}}
\Bigr\} .
\label{app:D1 four terms SL}
\end{multline}
The four terms come respectively from $-ZW$, $+WZ$, $+ZW$ and $-WZ$ in the two possible contractions of the commutator. We now identify the permutation associated with each term.

\paragraph{1) First term.}

We begin with the non-planar case $p\neq g(q)$. The delta function sets $i_p=i_{g(q)}$. We eliminate $i_p$ and define a new set of summation indices by
\begin{equation}
j_p=i_q , \qquad j_q=\nu , \qquad
j_a=i_a \quad \bigl(a\notin\{p,q\}\bigr) .
\label{app:D1 relabel first}
\end{equation}
The two inserted fields are then placed at the $p$-th and $q$-th positions,
\begin{equation}
Z_{i_q\nu}=Z_{j_pj_q} , \qquad
W_{\nu i_{g(p)}}=W_{j_qj_{h_1(q)}} .
\end{equation}
Together with
\begin{equation}
(X_a)_{i_ai_{g(a)}}=(X_a)_{j_aj_{h_1(a)}}
\qquad \bigl(a\notin\{p,q\}\bigr) ,
\end{equation}
after the substitution $i_p=i_{g(q)}$, these equations uniquely determine
\begin{equation}
h_1=(p,q)\,g\,(p,g(q))\,(p,q) .
\label{app:D1 h1}
\end{equation}
Indeed, $h_1(p)=q$ and $j_{h_1(q)}=i_{g(p)}$, which verify the two inserted factors above; its action on every $a\notin\{p,q\}$ verifies the remaining factors.
Thus, without including its minus sign in \eqref{app:D1 four terms SL}, the first term is
\begin{equation}
    \swmap^{(N)}_{m,n}\Bigl((p,q)\,g\,(p,g(q))\,(p,q)\Bigr) .
\end{equation}

Consider next the planar case $p=g(q)$. The delta function becomes $\sum_{i_p} \delta_{i_pi_p} = N$. The relabelling \eqref{app:D1 relabel first} then gives
\begin{equation}
    N\,\swmap^{(N)}_{m,n}\Bigl((p,q)g(p,q)\Bigr) .
\label{app:D1 first planar}
\end{equation}

\paragraph{2) Second term.}

We first consider $p\neq g(q)$. The same delta function again sets $i_p=i_{g(q)}$, but the order of the two inserted fields is reversed. We now relabel
\begin{equation}
j_p=\nu , \qquad j_q=i_q , \qquad
j_a=i_a \quad \bigl(a\notin\{p,q\}\bigr) .
\label{app:D1 relabel second}
\end{equation}
The inserted fields satisfy
\begin{equation}
Z_{\nu i_{g(p)}}=Z_{j_pj_{h_2(p)}} , \qquad
W_{i_q\nu}=W_{j_qj_p} .
\end{equation}
Matching the remaining factors in the same way gives
\begin{equation}
h_2=g\,(p,g(q)) .
\label{app:D1 h2}
\end{equation}
In particular, $h_2(q)=p$ and $j_{h_2(p)}=i_{g(p)}$, as required by the two inserted factors.
The second term is therefore
\begin{equation}
    \swmap^{(N)}_{m,n}\Bigl(g\,(p,g(q))\Bigr) .
\end{equation}

For the planar case $p=g(q)$, the sum over the free index $i_p$ again gives $N$. Using \eqref{app:D1 relabel second}, the second term becomes
\begin{equation}
    N\,\swmap^{(N)}_{m,n}(g) .
\label{app:D1 second planar}
\end{equation}
Combining the first two terms with their signs in \eqref{app:D1 four terms SL}, their non-planar and planar results can be written together as
\begin{equation}
    \swmap^{(N)}_{m,n}\Bigl(g\,\spar{p,g(q)}
-(p,q)\,g\,\spar{p,g(q)}\,(p,q)
\Bigr) 
\label{app:D1 p double bracket}
\end{equation}
where the double bracket is defined by
\begin{equation}
\spar{i,j} \equiv N \, \delta_{ij} + (i,j) \bigl(1-\delta_{ij}\bigr) .
\label{def:double bracket}
\end{equation}

\paragraph{3) Third term.}

We begin with the non-planar case $q\neq g(p)$. In this case the delta function sets $i_q=i_{g(p)}$. We eliminate $i_q$ and relabel
\begin{equation}
j_p=i_p , \qquad j_q=\nu , \qquad
j_a=i_a \quad \bigl(a\notin\{p,q\}\bigr) .
\label{app:D1 relabel third}
\end{equation}
The two inserted fields become
\begin{equation}
Z_{i_p\nu}=Z_{j_pj_q} , \qquad
W_{\nu i_{g(q)}}=W_{j_qj_{h_3(q)}} .
\end{equation}
Matching all the factors gives
\begin{equation}
h_3=g\,(q,g(p)) .
\label{app:D1 h3}
\end{equation}
Here $h_3(p)=q$ and $j_{h_3(q)}=i_{g(q)}$.
Hence, the third term is
\begin{equation}
    \swmap^{(N)}_{m,n}\Bigl(g\,(q,g(p))\Bigr) .
\end{equation}

For the planar case $q=g(p)$, the delta function becomes $\delta_{i_qi_q}$. Since $p$ is the unique preimage of $q$ under $g$, the index $i_q$ appears nowhere else in the third term and its sum gives $N$. The relabelling \eqref{app:D1 relabel third} gives
\begin{equation}
    N\,\swmap^{(N)}_{m,n}(g) .
\label{app:D1 third planar}
\end{equation}

\paragraph{4) Fourth term.}

We first consider $q\neq g(p)$. The delta function is the same as in the third term, while the order of the inserted fields is reversed. We relabel
\begin{equation}
j_p=\nu , \qquad j_q=i_p , \qquad
j_a=i_a \quad \bigl(a\notin\{p,q\}\bigr) .
\label{app:D1 relabel fourth}
\end{equation}
Now
\begin{equation}
Z_{\nu i_{g(q)}}=Z_{j_pj_{h_4(p)}} , \qquad
W_{i_p\nu}=W_{j_qj_p} .
\end{equation}
The solution of the index-matching equations is
\begin{equation}
h_4=(p,q)\,g\,(q,g(p))\,(p,q) .
\label{app:D1 h4}
\end{equation}
Indeed, $h_4(q)=p$ and $j_{h_4(p)}=i_{g(q)}$.
The fourth term, without its minus sign, is therefore
\begin{equation}
    \swmap^{(N)}_{m,n}\Bigl((p,q)\,g\,(q,g(p))\,(p,q)\Bigr) .
\end{equation}

For the planar case $q=g(p)$, the sum over the free index $i_q$ gives $N$. The relabelling \eqref{app:D1 relabel fourth} gives
\begin{equation}
    N\,\swmap^{(N)}_{m,n}\Bigl((p,q)g(p,q)\Bigr) .
\label{app:D1 fourth planar}
\end{equation}
Combining the third and fourth terms with their signs, their non-planar and planar results are
\begin{equation}
    \swmap^{(N)}_{m,n}\Bigl(
g\,\spar{q,g(p)}
-(p,q)\,g\,\spar{q,g(p)}\,(p,q)
\Bigr) .
\label{app:D1 q double bracket}
\end{equation}

\paragraph{5) Summary.}

For each $(p,q)$, the first two terms give \eqref{app:D1 p double bracket}, while the last two terms independently give \eqref{app:D1 q double bracket}. By summing over $p, q$, we obtain
\begin{equation}
\begin{aligned}
    N\, \fD^{(1)}_N \cdot \swmap^{(N)}_{m,n}(g)
&=\sum_{p=1}^m\sum_{q=m+1}^L
\swmap^{(N)}_{m,n}\Bigl(
g\,\spar{p,g(q)}-(p,q)g\,\spar{p,g(q)}(p,q)
\\
&\hspace{50mm}
+g\,\spar{q,g(p)}-(p,q)g\,\spar{q,g(p)}(p,q) \Bigr) 
\end{aligned}
\label{app:D1 pq double bracket}
\end{equation}
We can rewrite this equation as
\begin{equation}
    N\, \fD^{(1)}_N \cdot \swmap^{(N)}_{m,n}(g) = \swmap^{(N)}_{m,n}\bigl( {\sf H} (g)\bigr) , \qquad
    {\sf H} (g) = \sum_{i,j=1}^L \Bigl\{ g\,\spar{i,g(j)} - (i,j)\,g\,\spar{i,g(j)}\,(i,j) \Bigr\} 
\label{app:D1 induced SL}
\end{equation}
in agreement with \cite{Bellucci:2004ru,Pasukonis:2010rv}.

This equation \eqref{app:D1 induced SL} is derived as follows.
The first line of \eqref{app:D1 pq double bracket} corresponds to $(i,j)=(p,q)$, while the second corresponds to $(i,j)=(q,p)$. 
If both $i$ and $j$ are in the $Z$ position, or in the $W$ position, the two terms in \eqref{app:D1 induced SL} are conjugate under $S_m\times S_n$\,, and the sum vanishes after application of $\swmap^{(N)}_{m,n}$ or $\orbitP{m,n}$.
For example
\begin{equation}
    \sum_{i,j=1}^m \orbitP{m,n}\Big(g\,\spar{i,g(j)}\Big) - \orbitP{m,n}\Big((i,j)\,g\,\spar{i,g(j)}\,(i,j)\Big) =\sum_{i,j=1}^m \orbitP{m,n}\Big(g\,\spar{i,g(j)}\Big) - \orbitP{m,n}\Big(g\,\spar{i,g(j)}\Big) = 0
\end{equation}
and similarly when $i,j \in \{m+1,\dots,L\}$.
Thus, the sum over $p,q$ extends to the sums over $i,j\in\{1,\dots,L\}$.

The extension from $(p,q)$ to $(i,j)$ makes the induced action \eqref{app:D1 induced SL} manifestly covariant under $S_L$-conjugation. 
Indeed, for any $\gamma\in S_L$, the double bracket satisfies\footnote{Our convention is $(g_1 \, g_2 ) \, (i) = g_2 (g_1(i))$ as in \cite{Padellaro:2024rld}.}
\begin{equation}
\gamma^{-1} \,\spar{i,g(j)}\,\gamma
=\spar{\gamma(i),(\gamma^{-1} g\gamma)(\gamma(j))} .
\end{equation}
Conjugating \eqref{app:D1 induced SL} by $\gamma$ and relabelling the two summation indices therefore gives
\begin{equation}
{\sf H}(\gamma^{-1} g\gamma)=\gamma^{-1} {\sf H}(g)\,\gamma , \qquad \gamma\in S_L 
\label{app:D1 SL covariance}
\end{equation}
which proves \eqref{eq:H1-conjugation} in the main text.

This conjugation property guarantees that $\fD^{(1)}_N$ commutes with the global $\su2$ Casimir operator through its relation to the adjoint $T_2$ action \eqref{eq:adjoint-T2}.
We extend ${\sf H}$ linearly to $\mathbb C[S_L]$ and sum \eqref{app:D1 SL covariance} over transpositions to obtain
\begin{equation}
{\sf H}\!\left(m^{\mathfrak{ad}}[T_2](\sigma)\right) = \sum_{i<j}(i,j)\,{\sf H}(\sigma)\,(i,j)
= m^{\mathfrak{ad}}[T_2]\!\left({\sf H}(\sigma)\right).
\label{app:D1 adjoint commutation}
\end{equation}
Thus ${\sf H}$ commutes with $m^{\mathfrak{ad}}[T_2]$ on $\mathbb C[S_L]$, and hence on $\cA(m,n)$.
Together with \eqref{app:D1 induced SL}, this establishes the claimed commutation on $\mathcal H_N(m,n)$.

Finally, expanding the double bracket in \eqref{app:D1 induced SL} using \eqref{def:double bracket}, we obtain
\begin{equation}
{\sf H} (g) = N \, \sum_{\substack{i=1\\ i\ne g(i)}}^L 
\Big\{ g-(i,g(i))\,g\,(i,g(i)) \Big\} 
+ \sum_{i=1}^L\sum_{\substack{j=1\\ g(j)\ne i}}^L 
\Big\{ g\,(i,g(j))-(i,j)\,g\,(i,g(j))\,(i,j) \Big\}
\label{app:eq:D1-orbit-basis}
\end{equation}
which agrees with the intermediate map \eqref{eq:D1-orbit-basis} in Definition~\ref{def: 1LPDO} through
\begin{equation}
\left.\bar H^{(1)}_{\gss}(g)\right|_{\gss=1/N} = \frac1N\,{\sf H}(g).
\end{equation}

\section{Trace derivatives and eigenspace factorisation}\label{app:trace factorisation}
In this appendix we prove several derivative identities underlying the eigenspace factorisation in Section \ref{sec:eigenspace factorisation}.
We will show that multiplication by $z, w$ and the action of $\pdv{z},\pdv{w}$ commute with the action of the one-loop dilatation operator.
This involves re-writing the one-loop dilatation operator in terms of derivatives of traceless matrices.

The decomposition of traceful matrices defined in \eqref{eq:traceless-variable} gives a corresponding decomposition of matrix derivatives.
For $(X,x)\in\{(Z,z),(W,w)\}$, we first compute the trace derivative at fixed $X'$.
Since varying $x$ shifts all diagonal entries of $X$ equally, this derivative takes the form
\begin{equation}
    \pdv{}{x} = c \sum_{a=1}^N \pdv{}{X_{aa}}.
\end{equation}
From $x=\tr(X)/N$ and the condition $\pdv{}{x}x=1$, we obtain $c=1$, giving
\begin{equation}
    \label{eq:sl-deriv}
    \pdv{}{x} = \sum_{a=1}^N \pdv{}{X_{aa}}.
\end{equation}
Next we define the traceless derivative by subtracting the trace component from $\pdv{}{X_{ij}}$:
\begin{equation}
\projD^{X'}_{ji} = \pdv{}{X_{ij}} - c'\,\frac{\delta_{ij}}{N} \sum_{a=1}^N \pdv{}{X_{aa}},
\end{equation}
with an unknown constant $c'$. This derivative satisfies
\begin{alignat}{9}
\projD^{X'}_{ji} (x)
&= \left( \pdv{}{X_{ij}} - c'\,\frac{\delta_{ij}}{N} \sum_{a=1}^N \pdv{}{X_{aa}} \right) \frac{\tr(X)}{N} 
& &= \frac{1}{N} \, \delta_{ij}(1-c')
\label{eq: traceless D on trace} \\
\projD^{X'}_{ji} ( X'_{kl} ) 
&= \Big( \pdv{}{X_{ij}} - c' \, \frac{\delta_{ij}}{N}  \sum_{a=1}^N \pdv{}{X_{aa}} \Big) \Big( X_{kl} - \delta_{kl} \, \frac{\tr(X)}{N} \Big)
& &= \delta_{ik}\delta_{jl}-\frac{1}{N}\delta_{ij}\delta_{kl} \,.
\label{eq: traceless D on traceless} 
\end{alignat}
By requiring that the derivative preserves the surface $\tr(X)=0$, we obtain $c'=1$:
\begin{equation}
\label{eq: Lprime deriv}
\projD^{X'}_{ji} \equiv \pdv{}{X_{ij}} - \frac{\delta_{ij}}{N}\pdv{}{x}
\end{equation}
which we call the projected traceless derivative.

Using the projected traceless derivative, we can formulate a rewriting lemma for the one-loop dilatation operator.
\begin{lemma}\label{lemma:oneloop factorisation}
The derivatives in the one-loop dilatation operator on $\mathcal{H}_N$ can be replaced with projected traceless derivatives
\begin{equation}
\fD^{(1)}_N
= -\frac{1}{N}:\tr \qty(\comm{Z}{W} \comm{\pdv{}{Z}}{\pdv{}{W}}): \ \ 
= -\frac{1}{N}:\tr \qty(\comm{Z'}{W'} \comm{ \projD^{Z'} }{ \projD^{W'} }): \,.
\end{equation}
\end{lemma}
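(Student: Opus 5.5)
The plan is to work at the level of first-order differential operators and use the chain rule for the change of variables $(Z,W)\mapsto(Z',W',z,w)$ from \eqref{eq:traceless-variable}. For $(X,x)\in\{(Z,z),(W,w)\}$ I will first express the original derivative as
\begin{equation}
\pdv{}{X_{ij}} = \projD^{X'}_{ji} + \frac{\delta_{ij}}{N}\pdv{}{x},
\end{equation}
which is just \eqref{eq: Lprime deriv} rearranged. In matrix form this reads $\pdv{}{X} = \projD^{X'} + \frac{1}{N} I\,\pdv{}{x}$, where $\pdv{}{X}$ denotes the matrix with entries $(\pdv{}{X})_{ji}=\pdv{}{X_{ij}}$. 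In parallel, $X = X' + I x$ holds at the level of multiplication operators.

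The key algebraic step is that both commutators only see the traceless parts. For the fields, $[Z,W] = [Z'+Iz, W'+Iw] = [Z',W']$, since $I z$ and $I w$ are central matrices and $z,w$ are commuting scalars. For the derivatives, substituting the decomposition above gives
\begin{equation}
\Big[\pdv{}{Z},\pdv{}{W}\Big] = \Big[\projD^{Z'} + \tfrac{1}{N}I\pdv{}{z},\ \projD^{W'} + \tfrac{1}{N}I\pdv{}{w}\Big].
\end{equation}
Here the matrix commutator is taken with the entries treated as formally commuting symbols, which is what normal ordering permits: all derivatives act to the right on the function, and all fields sit to the left. The cross terms have the form $\frac1N(\projD^{Z'} I\pdv{}{w} - I\pdv{}{w}\projD^{Z'})$. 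Since $I$ is central in matrix index space and the entries of the derivative operators commute among themselves as differential operators, these cross terms vanish. The pure-trace term $\frac{1}{N^2}[I,I]\pdv{}{z}\pdv{}{w}$ also vanishes. This leaves $[\projD^{Z'},\projD^{W'}]$.

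The step that needs care is justifying that the entries commute and that normal ordering is compatible with the change of variables. I would handle it by noting that $\pdv{}{X_{ij}}$ for all $i,j$ commute mutually, and that $\projD^{X'}_{ji}$ and $\pdv{}{x}$ are constant-coefficient linear combinations of these, so they commute as well. Normal ordering then means the expression is a sum of terms of the form (product of field entries)$\times$(product of derivative entries), with the index contractions dictated by the matrix products. The substitution is therefore linear and termwise, and matrix-level identities such as centrality of $I$ apply directly.

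Combining the two simplifications yields $-\frac1N:\tr([Z',W'][\projD^{Z'},\projD^{W'}]):$, which is the claim. As a consistency check, I would verify that the resulting operator annihilates $z$ and $w$ and preserves tracelessness, using \eqref{eq:proj-tr-der}.
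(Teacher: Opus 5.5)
Your proposal is correct and follows essentially the same route as the paper. You split $\partial/\partial X = \mathcal{L}^{X'} + \frac{1}{N} I\,\partial/\partial x$, use centrality of $I$ to get $[Z,W]=[Z',W']$, and discard the cross terms in the derivative commutator because constant-coefficient derivative operators commute; this matches the paper's entrywise check that the three mixed commutators vanish. The extra remarks on normal ordering and the closing consistency check are harmless but unnecessary, since the derivative entries commute among themselves regardless of ordering.
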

\begin{proof}
    Consider the first commutator
    \begin{equation}
        \comm{Z}{W}_{ij} = \comm{(Z'+zI)}{(W'+wI)}_{ij} = \comm{Z'}{W'}_{ij}
    \end{equation}
    the last equality uses the fact that the identity matrix commutes with all matrices.
    For the second commutator
\begin{equation}
\comm{\pdv{}{Z}}{\pdv{}{W}}_{ij} 
= \comm{ \Big( \projD^{Z'} + \frac{I}{N}\pdv{}{z} \Big) }{ \Big( \projD^{W'} + \frac{I}{N}\pdv{}{w} \Big)}_{ij}
\end{equation}
    all but the first term vanish: 
\begin{align}
\comm{ \projD^{Z'} }{\frac{I}{N}\pdv{w}}_{ij}
&=\sum_{k=1}^N \qty( \projD^{Z'}_{ik} \, \frac{\delta_{kj}}{N} \pdv{}{w} - \frac{\delta_{ik}}{N}\pdv{}{w} \projD^{Z'}_{kj} ) 
= \frac{1}{N} \qty( \projD^{Z'}_{ij} \pdv{}{w} - \pdv{}{w} \projD^{Z'}_{ij} ) =0 , \\
\comm{\frac{I}{N}\pdv{z}}{ \projD^{W'} }_{ij}
&=\sum_{k=1}^N \qty(\frac{\delta_{ik}}{N} \pdv{}{z} \projD^{W'}_{kj} - \projD^{W'}_{ik} \frac{\delta_{kj}}{N}\pdv{}{z}) 
= \frac{1}{N} \qty(\pdv{z} \projD^{W'}_{ij} - \projD^{W'}_{ij} \pdv{}{z} ) =0 , \\
\comm{\frac{I}{N}\pdv{z}}{\frac{I}{N}\pdv{w}}_{ij}
&=\sum_{k=1}^N \qty(\frac{\delta_{ik}}{N} \pdv{z} \frac{\delta_{kj}}{N} \pdv{w} -\frac{\delta_{ik}}{N} \pdv{w} \frac{\delta_{kj}}{N} \pdv{z} ) 
= \frac{1}{N^2} \delta_{ij} \qty(\pdv{z}\pdv{w} - \pdv{w}\pdv{z}) = 0.
\end{align}
\end{proof}

Thus $\fD_N^{(1)}$ is $\fD_N^{\prime(1)}$ acting on the traceless variables, with the trace variables $z,w$ treated
as spectators. Explicitly, for $f\in\mathcal H'_N$ and $q\in\mathbb C[z,w]$, we have
\begin{equation}
\fD_N^{(1)} \bigl[f(Z',W')\,q(z,w)\bigr] = \bigl[ \fD_N^{\prime(1)}f(Z',W')\bigr]\,q(z,w),
\end{equation}
where the substitution maps in Definition \ref{def:substitution maps} are understood.

\begin{definition}[Single-letter-trace counting operators]
\label{def:single-letter-trace-counting}
Define the single-letter-trace counting operators by
\begin{equation}
\hN{(s)} \equiv \hN{(s)}_z + \hN{(s)}_w , \qquad
\hN{(s)}_x \equiv \frac{1}{N} \sum_{a,b=1}^N X_{aa} \, \frac{\partial}{\partial X_{bb}} = x \frac{\partial}{\partial x} \,.
\label{def:slcnt}
\end{equation}
The equality follows from \eqref{def:sl-trace} and \eqref{eq:sl-deriv}. 
The operators $\hN{(s)}_z$ and $\hN{(s)}_w$ count the degrees in $z$ and $w$ at fixed $Z'$ and $W'$, respectively.
\end{definition}

Using the lemma above and two facts about the traceless and trace derivatives, we can prove an important commutativity lemma.
\begin{lemma}
    \label{lemma: weyl dil commute}
    The one-loop dilatation operator commutes with the operators $x, \pdv{x}$ and $\hN{(s)}_{x}$ 
\begin{equation}
    \comm{ \fD^{(1)}_{N}}{\hN{(s)}_x} =
    \comm{ \fD^{(1)}_{N}}{x} =
    \comm{ \fD^{(1)}_{N}}{\pdv{x}} =
     0 \,,
\end{equation}
where $x \in \{z,w\}$.
\end{lemma}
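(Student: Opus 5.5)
The plan is to work in the coordinates $(Z',W',z,w)$ supplied by the substitution maps of Proposition~\ref{prop:hilb factorisation}. There the operators $x$, $\pdv{x}$ and $\hN{(s)}_x = x\pdv{x}$ act only on the trace variables. Lemma~\ref{lemma:oneloop factorisation} shows that $\fD^{(1)}_N$ acts only on the traceless variables, through $\comm{Z'}{W'}$ and $\comm{\projD^{Z'}}{\projD^{W'}}$. Commutativity should then reduce to two elementary facts about how the traceless and trace derivatives interact with each other's variables.

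First, I will check that the change of variables $X \mapsto (X', x)$ is a linear change of coordinates on $\mathbb{C}[Z,W]$. In these coordinates $\pdv{x} = \sum_a \pdv{X_{aa}}$ by \eqref{eq:sl-deriv}. The projected derivatives $\projD^{X'}_{ji}$ annihilate $x$ by \eqref{eq:proj-tr-der}. Second, I record the basic commutators. Since $\projD^{X'}_{ji}(x)=0$, we have $[\projD^{X'}_{ji}, x]=0$ as operators. Since $\pdv{x} X'_{kl} = \sum_a(\delta_{ak}\delta_{al} - \delta_{kl}/N) = \delta_{kl} - \delta_{kl} = 0$, we have $[\pdv{x}, X'_{kl}]=0$. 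Constant-coefficient derivatives always commute, so $[\pdv{x}, \projD^{X'}_{ji}]=0$. Trivially, $x$ commutes with $X'_{kl}$. Together these give that every building block of $\comm{Z'}{W'}$ and $\comm{\projD^{Z'}}{\projD^{W'}}$ commutes with both $x$ and $\pdv{x}$, for either choice $x\in\{z,w\}$.

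Next I need to handle the normal ordering. The normal-ordered operator $-\frac{1}{N}:\tr(\comm{Z'}{W'}\comm{\projD^{Z'}}{\projD^{W'}}):$ is a finite sum of products of the form (multiplication by $X'$-entries) times (projected derivatives). Each factor commutes with $x$ and with $\pdv{x}$, so the whole operator does too. This relies on Lemma~\ref{lemma:oneloop factorisation} holding as an operator identity with the normal ordering preserved. The one thing I must verify is that the rewriting there is compatible with the normal-ordering prescription. This is the step I expect to need the most care: the normal ordering is defined with respect to $Z,W$ and $\pdv{Z},\pdv{W}$, and one has to confirm that it is carried to the normal ordering with respect to $Z',\projD^{Z'}$. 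I would argue this by noting that the substitution is linear in both fields and derivatives, with $X = X' + Ix$ and $\pdv{X} = \projD^{X'} + \frac{I}{N}\pdv{x}$. Putting all derivatives to the right is then preserved term by term, and the extra trace pieces drop out of the commutators exactly as in the proof of Lemma~\ref{lemma:oneloop factorisation}.

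Finally, the statement for $\hN{(s)}_x = x\pdv{x}$ follows from the Leibniz rule for commutators:
\begin{equation}
[\fD^{(1)}_N, x\pdv{x}] = [\fD^{(1)}_N, x]\pdv{x} + x[\fD^{(1)}_N, \pdv{x}] = 0.
\end{equation}
Since $\mathcal{H}_N$ is preserved by all of these operators, the identities hold on $\mathcal{H}_N$. This in particular justifies the spectator statement used after Lemma~\ref{lemma:oneloop factorisation}.
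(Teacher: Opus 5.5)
Your proposal is correct and follows essentially the same route as the paper: rewrite $\fD^{(1)}_N$ via Lemma~\ref{lemma:oneloop factorisation}, use $\projD^{X'}_{ji}(x)=0$ and $\pdv{x}X'_{kl}=0$ to pass $x$ and $\pdv{x}$ through, and combine the two to get the result for $\hN{(s)}_x = x\pdv{x}$. The paper argues on monomials $\poly'(Z',W')z^aw^b$ rather than on operator building blocks, and the normal-ordering step you flag is immediate: $\comm{Z}{W}=\comm{Z'}{W'}$ is an identity of multiplication operators and $\comm{\pdv{}{Z}}{\pdv{}{W}}=\comm{\projD^{Z'}}{\projD^{W'}}$ is one of constant-coefficient derivatives, so the derivatives stay to the right.
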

\begin{proof}
Since $\poly(Z,W) \in \cH_N$ has a finite expansion into elements of $\mathcal{H}'_N \otimes \mathbb{C}[z,w]$,
\begin{equation}
    \label{eq: exp Fe}
    \poly(Z,W) = \sum_{a,b = 0}^\infty \poly_{a,b}(Z',W')z^a w^b,
\end{equation}
it suffices to prove this for polynomials of the form $\poly'(Z',W')z^a w^b$. 
The substitution maps are omitted below to simplify the notation.

First, for $(X,x) \in \{(Z,z),(W,w)\}$, the traceless derivatives satisfy \eqref{eq: traceless D on trace}:
\begin{equation}
   \label{eq:traceless deriv orthogonality}
\projD^{X'}_{ji} ( x ) =  0 \, .
\end{equation}
It follows that for $\poly'(Z',W') \in \cH_N'$
\begin{equation}
   \projD^{X'}_{ji}  \qty( z^{a} w^{b}\poly'(Z',W') ) =
   z^{a}w^{b}
   \projD^{X'}_{ji} \Big( \poly'(Z',W') \Big),  \quad \forall a,b \in \mathbb{Z}_{\geq 0} \,
\end{equation}
and using Lemma \ref{lemma:oneloop factorisation}
\begin{equation}
    \comm{ \fD^{(1)}_{N}}{x} =  0 \,.
\end{equation}
Secondly, using 
\begin{equation}
\label{eq: trace deriv orthogonality}
\pdv{}{x} X'_{ij} = \sum_{k=1}^N \pdv{}{X_{kk}}   \qty(X_{ij} - \delta_{ij}x) = \sum_{k=1}^N \qty(\delta_{ki} \delta_{kj} - \frac{\delta_{ij}}{N}) = 0
\end{equation}
we have
\begin{equation}
\begin{aligned}
 \fD^{(1)}_N \pdv[c]{z} \pdv[d]{w} \poly'(Z',W')z^a w^b 
&=\Big\{  \fD^{\prime(1)}_N \poly'(Z',W') \Big\} \, \Big\{ \pdv[c]{z} \pdv[d]{w} z^{a} w^b \Big\} \\
&=\pdv[c]{z} \pdv[d]{w} \,  \fD^{(1)}_N \poly'(Z',W')z^a w^b  
\quad \forall c,d \in \mathbb{Z}_{\geq 0}.
\end{aligned}
\end{equation}
Combining the two results we get
\begin{equation}
    \comm{ \fD^{(1)}_{N}}{\hN{(s)}_x} = 0 \,.
\end{equation}
\end{proof}

\begin{lemma}
    \label{lemma: coefficient eigenspace factorisation}
    Let $\poly^{(e)}(Z,W) \in \mathcal{H}_N$ be a non-zero polynomial with expansion
\begin{equation}
\poly^{(e)}(Z,W) = \sum_{a,b = 0}^\infty \poly^{(e)}_{a,b}(Z',W')z^a w^b.
\label{eq: exp Fe2}
\end{equation}
This is an eigenvector with eigenvalue $e$ if and only if every nonzero coefficient $\poly^{(e)}_{a,b}(Z',W')$ is an eigenvector with eigenvalue $e$.
    \begin{equation}
        \fD^{(1)}_N \poly^{(e)}(Z,W) =e \poly^{(e)}(Z,W) 
        \Longleftrightarrow 
        \fD^{\prime(1)}_N \poly^{(e)}_{a,b}(Z',W') =e \poly^{(e)}_{a,b}(Z',W') , \quad \forall a,b \geq 0.
    \end{equation}
\end{lemma}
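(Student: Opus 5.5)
The plan is to use the fact, established just above in Lemma~\ref{lemma:oneloop factorisation} and Lemma~\ref{lemma: weyl dil commute}, that $\fD^{(1)}_N$ acts only on the traceless variables and treats $z,w$ as spectators:
\begin{equation}
\fD^{(1)}_N\bigl[\poly'(Z',W')\,z^a w^b\bigr]=\bigl[\fD^{\prime(1)}_N\poly'(Z',W')\bigr]\,z^a w^b .
\end{equation}
Applying this termwise to the finite expansion \eqref{eq: exp Fe2} gives
\begin{equation}
\fD^{(1)}_N\poly^{(e)}(Z,W)=\sum_{a,b\geq0}\bigl[\fD^{\prime(1)}_N\poly^{(e)}_{a,b}(Z',W')\bigr]z^aw^b .
\end{equation}
For this to make sense we need each $\fD^{\prime(1)}_N\poly^{(e)}_{a,b}$ to lie again in $\mathcal{H}'_N$. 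This holds because $\fD^{\prime(1)}_N$ preserves $SU(N)$ invariance: it is built from the $SU(N)$-covariant combinations $[Z',W']$ and $[\projD^{Z'},\projD^{W'}]$ contracted in a trace. It also preserves the traceless sector, by \eqref{eq: trace deriv orthogonality} and \eqref{eq:traceless deriv orthogonality}.

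The eigenvalue equation $\fD^{(1)}_N\poly^{(e)}=e\,\poly^{(e)}$ then becomes
\begin{equation}
\sum_{a,b}\bigl[(\fD^{\prime(1)}_N-e)\poly^{(e)}_{a,b}\bigr]z^aw^b=0 .
\end{equation}
The key step is to extract coefficients. By Proposition~\ref{prop:hilb factorisation}, $\Psi$ is an isomorphism $\mathcal{H}_N\to\mathcal{H}'_N\otimes\mathbb{C}[z,w]$. Hence an element of $\mathcal{H}_N$ written as $\sum_{a,b}f_{a,b}(Z',W')z^aw^b$ with $f_{a,b}\in\mathcal{H}'_N$ vanishes if and only if every $f_{a,b}=0$. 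This is just the statement that the monomials $z^aw^b$ are linearly independent over $\mathcal{H}'_N$ in the tensor product.

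\begin{itemize}
\item[($\Rightarrow$)] Setting $f_{a,b}=(\fD^{\prime(1)}_N-e)\poly^{(e)}_{a,b}$ and using this uniqueness gives $\fD^{\prime(1)}_N\poly^{(e)}_{a,b}=e\,\poly^{(e)}_{a,b}$ for all $a,b$. Zero coefficients satisfy the relation trivially.
\item[($\Leftarrow$)] Substitute the eigenvalue relation for each coefficient back into the displayed sum; it collapses to $e\,\poly^{(e)}$.
\end{itemize}

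Concretely, the uniqueness can also be seen directly. The variables $(Z'_{ij},z)$ are obtained from $Z_{ij}$ by an invertible linear change of coordinates, with $Z'$ constrained to be traceless. So $\mathbb{C}[Z,W]\cong\mathbb{C}[Z',W']\otimes\mathbb{C}[z,w]$ as polynomial rings, and coefficients in $z,w$ are unique.

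The main obstacle is ensuring the spectator identity is applied consistently through the substitution maps $\Psi,\Phi$. One must check that $\fD^{(1)}_N$ on $\mathcal{H}_N$ corresponds, under $\Psi$, to $\fD^{\prime(1)}_N\otimes 1$ on $\mathcal{H}'_N\otimes\mathbb{C}[z,w]$. This is exactly Lemma~\ref{lemma:oneloop factorisation} together with $\projD^{X'}_{ji}(x)=0$. The only subtlety is that normal ordering places all derivatives to the right. As a result, no derivative of $\projD$ acting on the prefactors $[Z',W']$ arises, and $z,w$ pass through unchanged.
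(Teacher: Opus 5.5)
Your proof is correct. Your ($\Leftarrow$) direction matches the paper's: apply the spectator identity termwise, then use linearity. Your ($\Rightarrow$) direction takes a different route.

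\textbf{The paper's route.} The paper never compares coefficients directly. It builds coefficient-extraction operators $\coeffN_z(a)=\frac{1}{a!}\partial_z^{a}\prod_{c\neq a}\frac{\hN{(s)}_z-c}{a-c}$ and the $w$ analogue. These are polynomials in $\hN{(s)}_x$ and $\partial_x$, so they commute with $\fD^{(1)}_N$ by Lemma~\ref{lemma: weyl dil commute}. The paper then applies $\coeffN_z(a)\coeffN_w(b)$ to both sides of $\fD^{(1)}_N\poly^{(e)}=e\,\poly^{(e)}$.

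\textbf{Your route.} You observe that $\Psi\circ\fD^{(1)}_N\circ\Phi=\fD^{\prime(1)}_N\otimes 1$ on $\mathcal{H}'_N\otimes\mathbb{C}[z,w]$. You then read off the result from the linear independence of the monomials $z^aw^b$.

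\textbf{What each buys.} Your argument is more elementary. It needs neither the commutation of $\fD^{(1)}_N$ with $\partial_x$ and $\hN{(s)}_x$ nor the Lagrange-interpolation projectors. If you run the uniqueness step in the full coordinate ring after the linear change of variables, as in your last paragraph, you do not even need the $SU(N)$-invariance of $\fD^{\prime(1)}_N\poly^{(e)}_{a,b}$. You only need that it is a polynomial in $Z',W'$ alone. The paper's route instead realises coefficient extraction through operators commuting with the dilatation operator. The same commutation property is reused right after the lemma to extract the lowest traceless component of each tower. It also connects to the permutation realisation of $\hN{(s)}$ in Remark~\ref{app:single-letter-trace SL}.

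\textbf{A small correction.} The fact that $\fD^{\prime(1)}_N\poly^{(e)}_{a,b}$ contains no $z,w$ follows from $\projD^{X'}_{ji}(X'_{kl})$ being a constant, eq.~\eqref{eq:proj-tr-der}. It does not follow from the two orthogonality relations you cite.
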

\begin{proof}
The easy direction assumes $\poly^{(e)}_{a,b}(Z',W') \in \mathcal{H}'_N(e)$ are eigenvectors of the dilatation operator with eigenvalue $e$
\begin{equation}
    \fD^{\prime(1)}_N \poly_{a,b}^{(e)}(Z',W') = e \poly_{a,b}^{(e)}(Z',W'), \quad \forall a,b \geq 0.
\end{equation}
Lemma \ref{lemma: weyl dil commute} gives 
\begin{equation}
    \fD^{(1)}_N \qty(z^{a}w^{b}\poly_{a,b}^{(e)}(Z',W')) 
    = z^{a}w^{b} \,  \fD^{\prime(1)}_N \poly_{a,b}^{(e)}(Z',W')
    = e z^{a}w^{b}\poly^{(e)}_{a,b}(Z',W').
\end{equation}
By linearity, we get that $\poly^{(e)}(Z,W)$ is an eigenvector with eigenvalue $e$.

For the other direction, we assume $\poly^{(e)}(Z,W) \in \mathcal{H}_N(e)$ and deduce that $\poly^{(e)}_{a,b}(Z',W') \in \mathcal{H}'_N(e)$.
For this, note that the single-letter-trace counting operators 
have non-negative integer eigenvalues as operators on $\mathcal{H}_N$
\begin{equation}
    \mathrm{Spectrum}(\hN{(s)}_x) = \{0,1,2,\dots \}.
\end{equation}
They can be used to construct maps extracting the coefficients $\poly^{(e)}_{a,b}(Z',W')$.
By assumption, $\poly^{(e)}(Z,W)$ has bounded degree and so $\poly^{(e)}_{a,b}(Z',W')$ all vanish for some finite upper limits $a > A, b > B$.
Using this, we define
\begin{equation}
\coeffN_z(a) = \frac{1}{a!}\pdv[a]{z}\prod_{\substack{c=0\\c \neq a}}^A \frac{\hN{(s)}_z-c}{a-c}, \qquad
\coeffN_w(b) =\frac{1}{b!}\pdv[b]{w} \prod_{\substack{d=0\\d \neq b}}^B \frac{\hN{(s)}_w-d}{b-d} \,.
\end{equation}
They satisfy 
\begin{equation}
\coeffN_z(a) \, \coeffN_w(b)\qty(\poly^{(e)}(Z,W)) = \poly^{(e)}_{a,b}(Z',W') \,,
\end{equation}
and commute with the dilatation operator since they are polynomials in $\hN{(s)}_x, \pdv{x}$.
A short computation gives 
\begin{equation}
    \begin{aligned}
     \fD^{\prime(1)}_N \poly^{(e)}_{a,b} (Z',W')
    &= \fD^{(1)}_N \qty[\coeffN_z(a) \, \coeffN_w(b) \, \poly^{(e)}(Z,W) ] \\
    &= \coeffN_z(a) \, \coeffN_w(b) \, \fD^{(1)}_N \poly^{(e)}(Z,W) \\
    &= \coeffN_z(a) \, \coeffN_w(b) \, e\poly^{(e)}(Z,W) \\
    &= e \poly^{(e)}_{a,b}(Z',W').
    \end{aligned}
\end{equation}
\end{proof}

From any non-zero eigenvector with the expansion \eqref{eq: exp Fe2}, we can construct a non-zero eigenvector with the same eigenvalue that depends only on the traceless variables $Z',W'$.
Choose $a_*$ to be the smallest value of $a$ for which $\poly^{(e)}_{a,b}(Z',W')\neq0$ for some $b$. Differentiating $a_*$ times with respect to $z$ and setting $z=0$ gives
\begin{equation}
\left.\frac{1}{a_*!}\,\partial_z^{a_*}\poly^{(e)}(Z,W)\right|_{z=0}=\sum_{b\geq0}\poly^{(e)}_{a_*,b}(Z',W')w^b\neq0.
\end{equation}
With $a_*$ fixed, choose $b_*$ to be the smallest value of $b$ for which $\poly^{(e)}_{a_*,b}(Z',W')\neq0$. Following a similar step, we find
\begin{equation}
\left.\frac{1}{a_*!b_*!}\,\partial_z^{a_*}\partial_w^{b_*}\poly^{(e)}(Z,W)\right|_{z=w=0}=\poly^{(e)}_{a_*,b_*}(Z',W')\neq0.
\end{equation}
By Lemma \ref{lemma: weyl dil commute}, these operations preserve the eigenvalue. The resulting polynomial is therefore a non-zero traceless eigenvector with eigenvalue $e$, providing a non-zero lowest component of a tower.

\begin{remark}[Permutation description of the single-letter-trace counting operator]
\label{app:single-letter-trace SL}

The single-letter-trace counting operators \eqref{def:slcnt} have a permutation description. Their action on the $U(N)$-invariant operators $\swmap^{(N)}_{m,n}(g)$ for $g \in S_L$ can be computed in a similar way to Appendix \ref{app:induced action D1} as
\begin{equation}
\hN{(s)}_z \cdot \swmap^{(N)}_{m,n}(g) = \frac{1}{N} \sum_{p=1}^m \swmap^{(N)}_{m,n} \bigl(g\,\spar{p,g(p)}\bigr) , \qquad
\hN{(s)}_w \cdot \swmap^{(N)}_{m,n}(g) = \frac{1}{N} \sum_{q=m+1}^{L} \swmap^{(N)}_{m,n} \bigl(g\,\spar{q,g(q)}\bigr) .
\end{equation}
These identities extend linearly to $\mathbb{C}[S_L]$. 
The total single-letter-trace counting operator satisfies the following intertwining relation similar to \eqref{eq:intertwine},
\begin{equation}
\hN{(s)} \cdot \swmap^{(N)}_{m,n}(g) = \swmap^{(N)}_{m,n} \Bigl( \cnN{(s)}_{\gss} (g) \Bigr) \Big|_{\gss = 1/N} \,, \qquad
\cnN{(s)}_{\gss} (g) = 
{\sf Fix} (g) \, g + \xi \sum_{\substack{i=1\\ i\ne g(i)}}^L g (i,g(i)) 
\label{app:Ns induced SL}
\end{equation}
where ${\sf Fix} (g)$ counts the number of fixed points in $g$.

On a polynomial $\poly'(Z',W')z^a w^b$, the operator $\hN{(s)}$ has eigenvalue $(a+b)$. 
The intertwining relation \eqref{app:Ns induced SL} implies that the permutation operator $\cnN{(s)}_{\gss}$ preserves $\ker\swmap^{(N)}_{m,n}$ at $\xi=1/N$.
The block-triangular argument used in Lemma \ref{cor:finite-N factorisation} gives
\begin{equation}
\left.\det_{\cA(m,n)}(x-\cnN{(s)}_{\gss})\right|_{\gss=1/N}=Q^{(s,N)}_{m,n}(x)\det_{\mathcal{H}_N(m,n)}(x-\hN{(s)}),
\end{equation}
where $Q^{(s,N)}_{m,n}(x)$ is the characteristic polynomial in the null sector.
\end{remark}

\section{Background on computational checks}
Here we give some background on the computational aspects of checking the proposal in Subsection  \ref{subsec:derangement hilb}.

\subsection{Algorithm for intersection of vector subspaces}
\label{apx: null space algo}
A basis for the intersection of two subspaces, which was used in the first computational check, can be computed using standard linear algebra; see e.g. \cite[Section 12.4]{golub1996} for the numerical treatment and \cite[{Chapter 2}]{Cohen1993} for exact (integer) arithmetic.

Let $U,V \subseteq W$ be subspaces of $W$ with bases
\begin{equation}
    W = \Span(w_a \, \vert \, 1 \leq a \leq A), \quad 
    U = \Span(u_i \, \vert \, 1 \leq i \leq I), \quad 
    V = \Span(v_p \, \vert \, 1 \leq p \leq P) \,.
\end{equation}
We introduce matrices $U,V$ for the expansion of $u_i, v_p$ in $w_a$
\begin{equation}
    \label{eq:subspace bases}
    u_i = \sum_{a=1}^A U_{ai} w_a, \quad 
    v_p = \sum_{a=1}^A V_{ap} w_a \,.
\end{equation}
By definition, elements $y \in U \cap V$ are the elements of $W$ that can be written as
\begin{equation}
    y = \sum_{i=1}^I Y^{(U)}_{i}u_i \text{ and } y = \sum_{p=1}^P Y^{(V)}_p v_p,
\end{equation}
for coefficient vectors $Y^{(U)} \in \mathbb{C}^{I}$ and $Y^{(V)} \in \mathbb{C}^{P}$.
Taking the difference of these two equalities gives
\begin{equation}
    \sum_{i=1}^I Y^{(U)}_{i}u_i - \sum_{p=1}^P Y^{(V)}_p v_p = 0
\end{equation}
or using \eqref{eq:subspace bases} and extracting the coefficients of $w_a$
\begin{equation}
    \label{eq:intsec eq}
    \sum_{i=1}^I Y^{(U)}_{i}U_{ai} - \sum_{p=1}^P Y^{(V)}_p V_{ap} = 0\,. 
\end{equation}
This is an equation for the nullspace of a matrix:
Define the vector $Y \in \mathbb{C}^{I+P}$
\begin{equation}
    Y=(Y^{(U)}_1,\dots,Y^{(U)}_I,Y^{(V)}_1,\dots,Y^{(V)}_P)^t
\end{equation}
and matrix
\begin{equation}
    K = 
    \qty(
    \begin{matrix}
        U_{11} & \dots & U_{1I} & - V_{11} & \dots & -V_{1P} \\
        \vdots & & & & & \vdots \\ 
        U_{A1} & \dots & U_{AI} & - V_{A1} & \dots & -V_{AP}
    \end{matrix}
    )
\end{equation}
then equation \eqref{eq:intsec eq} reads
\begin{equation}
    KY = 0.
\label{app:nullspace K}
\end{equation}
In other words, elements of the intersection $U \cap V$ correspond to elements of the nullspace of $K$.
Conversely, if $Y$ is a solution to the above equation, then the first $I$ entries $Y^{(U)}_i$ correspond to the coordinates in the $u_i$ basis and the last $P$ entries $Y^{(V)}_p$ correspond to the coordinates in the $v_p$ basis.
The coordinates $Y^{(W)}_a$ in the basis $w_a$ can be deduced from either of them using the $U$ or $V$ matrix:
\begin{equation}
    Y^{(W)}_a = \sum_{i=1}^I U_{ai} Y^{(U)}_{i} = \sum_{p=1}^P V_{ap} Y^{(V)}_{p}.
\end{equation}

For the intersection $\cD_N(m,n)$, take $U=\cA_N(m,n)$, $V=\cD(m,n)$ and $W=\cA(m,n)$. 
The coordinate matrices $B_N$ and ${B}_{\dos}$ are computed from a basis of $\ker K$ as follows.
Let $Y_q=(Y^{(U)}_{1q},\dots,Y^{(U)}_{Iq},Y^{(V)}_{1q},\dots,Y^{(V)}_{Pq})^t$ be a basis for $\ker K$, then
\begin{equation}
    (B_N)_{i q} = Y^{(U)}_{iq} \,,\qquad 
    (B_{\dos})_{p q} = Y^{(V)}_{pq} \,, \qquad
    q \in \{1,\dots,\dim \ker K\}.
\end{equation}
The ambient coordinate matrix is then $B=UB_N=V{B}_{\dos}$, with the notation of \eqref{def:BtB}.
For our construction of $\cD_N(m,n)$, we choose bases whose ambient coordinate matrices $U_{ai}$ and $V_{ap}$ have integer entries, so that $K$ is an integer matrix.
The equation \eqref{app:nullspace K} can be solved by using integer linear algebra, as reviewed in \cite{Padellaro:2024rld}.

\subsection{Gram matrix for derangements}\label{app:Gram diagram}

The Gram matrix $\mathcal{G}_{SU(N)}(m,n)$ in \eqref{def:SUN Gram bb}, namely the free two-point functions of $SU(N)$ theory, can be computed as a finite expansion subtracting the single-letter-trace contributions with factors of $1/N$, and the coefficient at each order represented as a sum over diagrams \cite{Corley:2002mj}. For a single complex matrix in the half-BPS sector, this expansion was organised there as a sum over the number of $1/N$ insertions, weighted by a binomial coefficient counting the contractions taken from the trace part of the propagator. With two matrix species the contractions are no longer exchangeable, and the binomial is replaced by a sum over subsets. By refining the argument in this way we obtain a closed formula for $\mathcal{G}_{SU(N)}(m,n)$ for a pair of derangements.

\begin{proposition}
    Let $g_i,g_j \in D_{m+n}$ be a pair of derangements. Define the derangement $g_i \otimes g_j \in D_{2m+2n}$ as
    \begin{equation}
        g_i \otimes g_j = g_i \prod_{l=1}^{m+n} (l,l+m+n)g_j\prod_{l=1}^{m+n} (l,l+m+n)\,.
    \end{equation}
    In other words, it is the derangement where $g_i$ is viewed as a permutation of $\{1,\dots,m+n\}$ while $g_j$ is viewed as a permutation of the numbers $\{m+n+1,\dots, 2m+2n\}$
    The number of cycles in a permutation $g$ is denoted $\mathrm{cycles}(g)$.
    The Wick contractions in the inner product are encoded in a sum over $S_m \times S_n$ and a sum over subsets $S \subseteq \{1,\dots,m+n\}$ which correspond to the choice of the first or second term in \eqref{eq:suN wick} for each contraction,
    \begin{equation}
        \begin{aligned}
            \langle \swmap'^{(N)}_{m,n}(g_i)(Z',W') &\swmap'^{(N)}_{m,n}(g_j^{-1})(Z'^\dagger,W'^\dagger) \rangle_{SU(N)} = \\
            &\sum_{\substack{h \in S_m \times S_n \\
            S \subseteq \{1,\dots,m+n\}}} (-N)^{\vert S \vert-m-n}
            N^{\mathrm{cycles}([\prod_{l \in S} (l,m+n+l)][(h g_i h^{-1}) \otimes g_j^{-1}])} 
        \end{aligned}
    \label{app:SUN-2pt der}
    \end{equation}
which generalises the free two-point function of the $U(N)$ gauge theory \eqref{eq: U(N) inner prod}.
\end{proposition}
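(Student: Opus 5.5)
The plan is to compute the left-hand side of \eqref{app:SUN-2pt der} directly by Wick's theorem with the $SU(N)$ propagator \eqref{eq:suN wick}. We expand each propagator into its two terms and identify every resulting index contraction with a cycle count of a permutation in $S_{2m+2n}$. This mirrors the derivation of the $U(N)$ formula \eqref{eq: U(N) inner prod}: there the $U(N)$ propagator \eqref{eq:UN-wick} gives $N^{\mathrm{cycles}}$ of a product of two permutations.

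First, write $\swmap'^{(N)}_{m,n}(g_i)$ with indices $i_1,\dots,i_L$ on letters $1,\dots,L$, and $\swmap'^{(N)}_{m,n}(g_j^{-1})^\dagger$ with indices $i_{L+1},\dots,i_{2L}$ on letters $L+1,\dots,2L$, where $L=m+n$. The only non-zero contractions pair $Z'$ with $Z'^\dagger$ and $W'$ with $W'^\dagger$, so the complete contractions are labelled by $h\in S_m\times S_n$. By the relabelling identity \eqref{symm cov perm}, a contraction $h$ is equivalent to contracting letter $l$ with letter $L+l$ after replacing $g_i$ by $hg_ih^{-1}$. This is the same reduction used for the $U(N)$ inner product. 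It remains to evaluate, for fixed $h$ and with $g\equiv hg_ih^{-1}$, the product over $l$ of the propagators $\delta\delta-\tfrac1N\delta\delta$.

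Expanding the product gives a sum over subsets $S\subseteq\{1,\dots,L\}$. On $S$ we take the first (gluing) term. On the complement we take the trace term, which contributes $-\tfrac1N$ each, i.e. an overall $(-N)^{|S|-L}$. Now evaluate the index sum for fixed $S$. For $l\in S$ the gluing term $\delta_{il}\delta_{jk}$ identifies the row index on one side with the column index on the other. As in the $U(N)$ case, this is implemented by composing $g\otimes g_j^{-1}$ with the transposition $(l,L+l)$. For $l\notin S$ the trace term $\delta_{ij}\delta_{kl}$ sets the two indices of the field $X'_{i_l i_{g(l)}}$ equal, and likewise on the conjugate side. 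This deletes letter $l$ from its cycle of $g$ and letter $L+l$ from its cycle of $g_j^{-1}$, without joining the two sides. Hence for $l\notin S$ no transposition is applied.

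The claim is then that the number of free index loops equals $\mathrm{cycles}\bigl([\prod_{l\in S}(l,L+l)][g\otimes g_j^{-1}]\bigr)$, and this is the step I expect to be the main subtlety. I would prove it by induction on the number of deleted letters. Deleting a point from a cycle of length $\geq2$ leaves the number of cycles unchanged. A cycle all of whose letters are deleted collapses to an empty index loop, which still contributes $\sum_i\delta_{ii}=N$, i.e. exactly one cycle. Deleting a letter $l\notin S$ commutes with the gluing transpositions at letters in $S$, since it merely removes the point $l$ from the permutation graph. With $S$ the full set, the count reduces to the standard $U(N)$ gluing formula. The derangement condition is used only to ensure that the field factors are $Z'$, $W'$ rather than traces, consistent with $\swmap'$. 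Summing over $h$ and $S$ then yields \eqref{app:SUN-2pt der}. Taking $S=\{1,\dots,L\}$ alone recovers \eqref{eq: U(N) inner prod}, which serves as a consistency check on the conventions, in particular the order of composition.
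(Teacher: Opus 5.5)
Your proposal is correct and follows the same route as the paper. In both, contractions are labelled by $h\in S_m\times S_n$ and reduced to $hg_ih^{-1}$, and the product of $SU(N)$ propagators is expanded over subsets $S$ with weight $(-N)^{|S|-m-n}$, with the gluing term inserting $(l,m+n+l)$ and the index loops counted by $\mathrm{cycles}$. The only difference is that the paper draws the trace term as straight (identity) strands, so the diagram is directly the trace of a permutation on $V^{\otimes 2(m+n)}$; your deletion-and-induction argument for the loop count, while valid, is therefore not needed.
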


The formula \eqref{app:SUN-2pt der} gives the two-point functions of the $SU(N)$-invariant polynomials labelled by derangements $D_{m+n}$\,.
For the orbit basis of DOS, we use $\swmap'^{(N)}_{m,n}(P_{m,n}(g_i))=m!n!\,\swmap'^{(N)}_{m,n}(g_i)$. 
The Gram matrix in \eqref{def:SUN Gram bb} is obtained by multiplying the matrix ${B}_{\dos}$ in \eqref{def:BtB} as
\begin{equation}
    (\mathcal{G}_{SU(N)}(m,n))_{pq}=\sum_{i,j=1}^{r}(B_{\dos})_{ip}(B_{\dos})_{jq}\left\langle\swmap'^{(N)}_{m,n}( P_{m,n}(g_i) )(Z',W')\,\swmap'^{(N)}_{m,n}(\cS( P_{m,n}(g_j) ))(Z'^\dagger,W'^\dagger)\right\rangle_{SU(N)}.
\end{equation}

\begin{proof}
    We give a diagrammatic proof.
    The Wick contraction has a diagrammatic realisation as
\begin{equation}
\tikzset{every picture/.style={line width=0.75pt}} 
\langle
\vcenter{\hbox{
\begin{tikzpicture}[x=0.75pt,y=0.75pt,yscale=-1,xscale=1]
\draw    (130,160) -- (130,180) ;
\draw    (130,120) -- (130,140) ;
\draw    (180,160) -- (180,180) ;
\draw    (180,120) -- (180,140) ;
\draw   (110,140) -- (150,140) -- (150,160) -- (110,160) -- cycle ;
\draw   (160,140) -- (200,140) -- (200,160) -- (160,160) -- cycle ;
\draw (119,143) node [anchor=north west][inner sep=0.75pt]    {$Z'$};
\draw (167,142) node [anchor=north west][inner sep=0.75pt]    {$Z^{\prime \dagger }$};
\end{tikzpicture}
}}
\rangle  
=
\vcenter{\hbox{
\begin{tikzpicture}[x=0.75pt,y=0.75pt,yscale=-1,xscale=1]
\draw    (370,170) -- (420,190) ;
\draw    (420,170) -- (370,190) ;
\draw    (370,190) -- (370,210) ;
\draw    (370,150) -- (370,170) ;
\draw    (420,190) -- (420,210) ;
\draw    (420,150) -- (420,170) ;
\draw   (360,170) -- (430,170) -- (430,190) -- (360,190) -- cycle ;
\end{tikzpicture}
}}
- \frac{1}{N}
\vcenter{\hbox{
\begin{tikzpicture}[x=0.75pt,y=0.75pt,yscale=-1,xscale=1]
\draw    (280,250) -- (280,270) ;
\draw    (280,210) -- (280,250) ;
\draw    (330,250) -- (330,270) ;
\draw    (330,210) -- (330,250) ;
\draw   (270,230) -- (340,230) -- (340,250) -- (270,250) -- cycle ;
\end{tikzpicture}
}}
    \end{equation}
    The invariant $\swmap'^{(N)}_{m,n}(g)$ corresponds to the diagram with $m$ boxes labelled $Z'$ and $n$ boxes labelled $W'$ with strands permuted by $g$
\begin{equation}
    \swmap'^{(N)}_{m,n}(g) = 
\vcenter{\hbox{
\begin{tikzpicture}[x=0.75pt,y=0.75pt,yscale=-1,xscale=1]
\draw    (270,70) -- (270,90) ;
\draw    (270,30) -- (270,50) ;
\draw   (250,50) -- (290,50) -- (290,70) -- (250,70) -- cycle ;
\draw    (350,70) -- (350,90) ;
\draw    (350,30) -- (350,50) ;
\draw   (330,50) -- (370,50) -- (370,70) -- (330,70) -- cycle ;
\draw  [dash pattern={on 0.84pt off 2.51pt}]  (300,60) -- (320,60) ;
\draw    (400,70) -- (400,90) ;
\draw    (400,30) -- (400,50) ;
\draw   (380,50) -- (420,50) -- (420,70) -- (380,70) -- cycle ;
\draw    (480,70) -- (480,90) ;
\draw    (480,30) -- (480,50) ;
\draw   (460,50) -- (500,50) -- (500,70) -- (460,70) -- cycle ;
\draw  [dash pattern={on 0.84pt off 2.51pt}]  (430,60) -- (450,60) ;
\draw    (250,30) -- (500,30) ;
\draw   (260,90) -- (490,90) -- (490,120) -- (260,120) -- cycle ;
\draw    (480,140) -- (480,120) ;
\draw    (400,140) -- (400,120) ;
\draw    (350,140) -- (350,120) ;
\draw    (270,140) -- (270,120) ;
\draw    (500,140) -- (250,140) ;
\draw (259,52) node [anchor=north west][inner sep=0.75pt]    {$Z'$};
\draw (339,52) node [anchor=north west][inner sep=0.75pt]    {$Z'$};
\draw (389,52) node [anchor=north west][inner sep=0.75pt]    {$W'$};
\draw (469,52) node [anchor=north west][inner sep=0.75pt]    {$W'$};
\draw (365,100) node [anchor=north west][inner sep=0.75pt]    {$g$};
\end{tikzpicture}
}}
\end{equation}
The horizontal lines identify the top and bottom strands, representing a trace over an $N$-dimensional vector space on which $Z'$ and $W'$ act.
We evaluate the free two-point functions of $SU(N)$ theory by summing over all possible Wick contractions, which translates into the sum over $S_m \times S_n$ as
\begin{equation}
\begin{aligned}
        &\langle \swmap'^{(N)}_{m,n}(g_i)(Z',W') \swmap'^{(N)}_{m,n}(g_j^{-1})(Z'^\dagger,W'^\dagger) \rangle_{SU(N)} = \\[3mm]
        &\Big \langle
\vcenter{\hbox{
\vspace{-3.5em}
\begin{tikzpicture}[x=0.75pt,y=0.75pt,yscale=-1,xscale=1]
\draw    (270,70) -- (270,90) ;
\draw    (270,30) -- (270,50) ;
\draw   (250,50) -- (290,50) -- (290,70) -- (250,70) -- cycle ;
\draw    (350,70) -- (350,90) ;
\draw    (350,30) -- (350,50) ;
\draw   (330,50) -- (370,50) -- (370,70) -- (330,70) -- cycle ;
\draw  [dash pattern={on 0.84pt off 2.51pt}]  (300,60) -- (320,60) ;
\draw    (400,70) -- (400,90) ;
\draw    (400,30) -- (400,50) ;
\draw   (380,50) -- (420,50) -- (420,70) -- (380,70) -- cycle ;
\draw    (480,70) -- (480,90) ;
\draw    (480,30) -- (480,50) ;
\draw   (460,50) -- (500,50) -- (500,70) -- (460,70) -- cycle ;
\draw  [dash pattern={on 0.84pt off 2.51pt}]  (430,60) -- (450,60) ;
\draw    (250,30) -- (500,30) ;
\draw   (260,90) -- (490,90) -- (490,120) -- (260,120) -- cycle ;
\draw    (480,140) -- (480,120) ;
\draw    (400,140) -- (400,120) ;
\draw    (350,140) -- (350,120) ;
\draw    (270,140) -- (270,120) ;
\draw    (500,140) -- (250,140) ;
\draw (259,52) node [anchor=north west][inner sep=0.75pt]    {$Z'$};
\draw (339,52) node [anchor=north west][inner sep=0.75pt]    {$Z'$};
\draw (389,52) node [anchor=north west][inner sep=0.75pt]    {$W'$};
\draw (469,52) node [anchor=north west][inner sep=0.75pt]    {$W'$};
\draw (365,100) node [anchor=north west][inner sep=0.75pt]    {$g_i$};
\end{tikzpicture}
\begin{tikzpicture}[x=0.75pt,y=0.75pt,yscale=-1,xscale=1]
\draw    (270,70) -- (270,90) ;
\draw    (270,30) -- (270,50) ;
\draw   (250,50) -- (290,50) -- (290,70) -- (250,70) -- cycle ;
\draw    (350,70) -- (350,90) ;
\draw    (350,30) -- (350,50) ;
\draw   (330,50) -- (370,50) -- (370,70) -- (330,70) -- cycle ;
\draw  [dash pattern={on 0.84pt off 2.51pt}]  (300,60) -- (320,60) ;
\draw    (400,70) -- (400,90) ;
\draw    (400,30) -- (400,50) ;
\draw   (380,50) -- (420,50) -- (420,70) -- (380,70) -- cycle ;
\draw    (480,70) -- (480,90) ;
\draw    (480,30) -- (480,50) ;
\draw   (460,50) -- (500,50) -- (500,70) -- (460,70) -- cycle ;
\draw  [dash pattern={on 0.84pt off 2.51pt}]  (430,60) -- (450,60) ;
\draw    (250,30) -- (500,30) ;
\draw   (260,90) -- (490,90) -- (490,120) -- (260,120) -- cycle ;
\draw    (480,140) -- (480,120) ;
\draw    (400,140) -- (400,120) ;
\draw    (350,140) -- (350,120) ;
\draw    (270,140) -- (270,120) ;
\draw    (500,140) -- (250,140) ;
\draw (259,52) node [anchor=north west][inner sep=0.75pt]    {$Z'^\dagger$};
\draw (339,52) node [anchor=north west][inner sep=0.75pt]    {$Z'^\dagger$};
\draw (389,52) node [anchor=north west][inner sep=0.75pt]    {$W'^\dagger$};
\draw (469,52) node [anchor=north west][inner sep=0.75pt]    {$W'^\dagger$};
\draw (365,95) node [anchor=north west][inner sep=0.75pt]    {$g_j^{-1}$};
\end{tikzpicture}
}}
\Big \rangle = \\[4.5em]
&\sum_{h \in S_m \times S_n}
\vcenter{\hbox{
\begin{tikzpicture}[x=0.75pt,y=0.75pt,yscale=-1,xscale=1]
\draw (40,10)   -- (40,300) ;
\draw (120,10)  -- (120,300) ;
\draw (180,10)  -- (180,300) ;
\draw (250,10)  -- (250,300) ;
\draw (310,10)  -- (310,300) ;
\draw (390,10)  -- (390,300) ;
\draw (450,10)  -- (450,300) ;
\draw (520,10)  -- (520,300) ;
\draw (20,10)  -- (270,10) ;
\draw (290,10) -- (540,10) ;
\draw (20,300)  -- (270,300) ;
\draw (290,300) -- (540,300) ;
\draw [dash pattern={on 0.84pt off 2.51pt}] (80,70)   -- (100,70) ;
\draw [dash pattern={on 0.84pt off 2.51pt}] (210,70)  -- (230,70) ;
\draw [dash pattern={on 0.84pt off 2.51pt}] (350,70)  -- (370,70) ;
\draw [dash pattern={on 0.84pt off 2.51pt}] (480,70)  -- (500,70) ;
\draw (10,70)   -- (20,50) ;
\draw (20,90)   -- (10,70) ;
\draw (90,100)  -- (100,80) ;
\draw (100,120) -- (90,100) ;
\draw (150,120) -- (160,100) ;
\draw (160,140) -- (150,120) ;
\draw (220,140) -- (230,120) ;
\draw (230,160) -- (220,140) ;
\draw (340,70)  -- (330,50) ;
\draw (330,90)  -- (340,70) ;
\draw (420,100) -- (410,80) ;
\draw (410,120) -- (420,100) ;
\draw (480,120) -- (470,100) ;
\draw (470,140) -- (480,120) ;
\draw (550,140) -- (540,120) ;
\draw (540,160) -- (550,140) ;
\filldraw[fill=white,draw=black] (20,60)   -- (60,60)   -- (60,80)   -- (20,80)   -- cycle ;
\filldraw[fill=white,draw=black] (100,90)  -- (140,90)  -- (140,110) -- (100,110) -- cycle ;
\filldraw[fill=white,draw=black] (160,110) -- (200,110) -- (200,130) -- (160,130) -- cycle ;
\filldraw[fill=white,draw=black] (230,130) -- (270,130) -- (270,150) -- (230,150) -- cycle ;
\filldraw[fill=white,draw=black] (290,60)  -- (330,60)  -- (330,80)  -- (290,80)  -- cycle ;
\filldraw[fill=white,draw=black] (370,90)  -- (410,90)  -- (410,110) -- (370,110) -- cycle ;
\filldraw[fill=white,draw=black] (430,110) -- (470,110) -- (470,130) -- (430,130) -- cycle ;
\filldraw[fill=white,draw=black] (500,130) -- (540,130) -- (540,150) -- (500,150) -- cycle ;
\filldraw[fill=white,draw=black] (20,170) -- (270,170) -- (270,200) -- (20,200) -- cycle ;
\filldraw[fill=white,draw=black] (20,210) -- (270,210) -- (270,240) -- (20,240) -- cycle ;
\filldraw[fill=white,draw=black] (20,250) -- (270,250) -- (270,280) -- (20,280) -- cycle ;
\filldraw[fill=white,draw=black] (290,210) -- (540,210) -- (540,240) -- (290,240) -- cycle ;
\draw (29,62)   node [anchor=north west][inner sep=0.75pt] {$Z'$};
\draw (109,92)  node [anchor=north west][inner sep=0.75pt] {$Z'$};
\draw (169,112) node [anchor=north west][inner sep=0.75pt] {$W'$};
\draw (239,132) node [anchor=north west][inner sep=0.75pt] {$W'$};
\draw (137,175) node [anchor=north west][inner sep=0.75pt] {$h^{-1}$};
\draw (139,220) node [anchor=north west][inner sep=0.75pt] {$g_{i}$};
\draw (142,257) node [anchor=north west][inner sep=0.75pt] {$h$};
\draw (299,60)  node [anchor=north west][inner sep=0.75pt] {$Z^{\prime \dagger }$};
\draw (379,90)  node [anchor=north west][inner sep=0.75pt] {$Z^{\prime \dagger }$};
\draw (437,110) node [anchor=north west][inner sep=0.75pt] {$W^{\prime \dagger }$};
\draw (506,130) node [anchor=north west][inner sep=0.75pt] {$W^{\prime \dagger }$};
\draw (389,213) node [anchor=north west][inner sep=0.75pt] {$g_{j}^{-1}$};
\end{tikzpicture}
}}
\end{aligned}
\end{equation}
The left diagram is equivalent to a diagram with a single permutation $h g_i h^{-1}$.
It is clear from the diagram that this can be viewed as $(2m+2n)$-stranded diagram, and the pair of permutations $hg_ih^{-1}, g_j^{-1}$ can be combined into a single permutation $hg_ih^{-1} \otimes g_j^{-1}$ of the $2m+2n$ strands.

Each contraction receives a contribution from either the first or the second term in \eqref{eq:suN wick}.
Consider, for example, the contribution from the first term \eqref{eq:suN wick} in the top contraction.
This corresponds to inserting a permutation diagram $(1,m+n+1)$.
The contribution from the second term is simply the identity permutation on $\{1,\dots,2m+2n\}$ strands weighted by $-\tfrac{1}{N}$.
We have to sum over all such choices, which corresponds to summing over subsets $S \subseteq \{1,\dots,m+n\}$ for which we pick the first contribution.
For example, a subset $S = \{a,b,c,d\} \subseteq \{1,\dots,m+n\}$ corresponds to inserting the permutation $(a,m+n+a)(b,m+n+b)(c,m+n+c)(d,m+n+d)$ into the diagram with a weight $(-\tfrac{1}{N})^{m+n-4}$ coming from the second term for all other contractions.
Therefore, the previous diagram evaluates to 
\begin{equation}
    \sum_{\substack{h \in S_m \times S_n \\ S \subseteq \{1,\dots,m+n\}}}
    \qty(\frac{-1}{N})^{m+n-|S|  }
\vcenter{\hbox{
        \resizebox{.6\linewidth}{!}{
\begin{tikzpicture}[x=0.75pt,y=0.75pt,yscale=-1,xscale=1]
\draw    (40,90) -- (40,140) ;
\draw    (120,90) -- (120,140) ;
\draw  [dash pattern={on 0.84pt off 2.51pt}]  (70,110) -- (90,110) ;
\draw    (180,90) -- (180,140) ;
\draw    (250,90) -- (250,140) ;
\draw  [dash pattern={on 0.84pt off 2.51pt}]  (200,110) -- (220,110) ;
\draw    (20,90) -- (540,90) ;
\draw   (20,210) -- (540,210) -- (540,240) -- (20,240) -- cycle ;
\draw    (250,260) -- (250,240) ;
\draw    (180,260) -- (180,240) ;
\draw    (120,260) -- (120,240) ;
\draw    (40,260) -- (40,240) ;
\draw    (540,260) -- (20,260) ;
\draw    (310,190) -- (310,210) ;
\draw    (310,90) -- (310,140) ;
\draw    (390,190) -- (390,210) ;
\draw    (390,90) -- (390,140) ;
\draw  [dash pattern={on 0.84pt off 2.51pt}]  (340,110) -- (360,110) ;
\draw    (450,190) -- (450,210) ;
\draw    (450,90) -- (450,140) ;
\draw    (520,190) -- (520,210) ;
\draw    (520,90) -- (520,140) ;
\draw  [dash pattern={on 0.84pt off 2.51pt}]  (470,110) -- (490,110) ;
\draw    (520,260) -- (520,240) ;
\draw    (450,260) -- (450,240) ;
\draw    (390,260) -- (390,240) ;
\draw    (310,260) -- (310,240) ;
\draw   (20,140) -- (540,140) -- (540,190) -- (20,190) -- cycle ;
\draw    (40,210) -- (40,190) ;
\draw    (120,210) -- (120,190) ;
\draw    (180,210) -- (180,190) ;
\draw    (250,210) -- (250,190) ;
\draw (233,212) node [anchor=north west][inner sep=0.75pt]    {$hg_{i} h^{-1} \otimes g_{j}^{-1}$};
\draw (200,160) node [anchor=north west][inner sep=0.75pt]    {$\prod_{l \in S} \,( l,m+n+l)$};
\end{tikzpicture}
}}}
\end{equation}
Lastly, each diagram evaluates to a power of $N$, with power determined by the number of cycles in
\begin{equation}
    \prod_{l \in S} (l,m+n+l) [h g_i h^{-1} \otimes g_j^{-1}].
\end{equation}
\end{proof}

\section{Mixing matrix data}\label{app:mixing matrix data}

Below, we present the one-loop mixing matrices in a SEP basis of $\cA^{\rm HWS}(m,n)$, where $F_a \equiv 1-a \gss$. 
We partition each matrix into blocks according to the value of $\ell(R)$. 
The mixing matrix $( H_{\gss}^{(1)})^{SR}$ has nonzero entries only when $\ell(S) \in \{ \ell(R)-1, \ell(R), \ell(R)+1 \}$; this is a consequence of known selection rules \cite{deMelloKoch:2007rqf,Brown:2008rs}.
All entries satisfying $\ell(R)=\ell(S)+1$ are proportional to $F_{\ell(S)}$, which vanishes when $1/\gss=N=\ell(S)$.

The space $\cA^{\rm HWS}(3,2)$ is spanned by states carrying the $R$-labels
\begin{equation}
\{ [4,1], [3,2], [3,1,1] \times 2, [2,2,1], [2,1,1,1]\}
\end{equation}
where $R=[3,1,1]$ occurs with multiplicity two \cite{Padellaro:2024rld}.
The one-loop mixing matrix in the HWS of $\cA(3,2)$, expressed in a SEP basis, is
\begin{equation}
( H_{\gss}^{(1)})^{SR}=
\begin{tikzpicture}[baseline=(M.center)]
\matrix (M) [matrix of math nodes,left delimiter={(},right delimiter={)},ampersand replacement=\&,nodes={font=\small,inner xsep=3mm,inner ysep=2mm}] {
2\&4F_{-3}\&\dfrac{F_2}{3}\&2F_2\&0\&0\\
\dfrac45\&\dfrac{8F_{-3}}5\&\dfrac{2F_2}{15}\&\dfrac{4F_2}{5}\&0\&0\\
\dfrac65\&\dfrac{12F_{-3}}5\&\dfrac25\&-\dfrac{24\gss}{5}\&\dfrac65\&\dfrac{12F_3}{5}\\
\dfrac65\&\dfrac{12F_{-3}}5\&-\dfrac{4\gss}{5}\&\dfrac{12}{5}\&-\dfrac65\&-\dfrac{12F_3}{5}\\
0\&0\&\dfrac{F_{-2}}3\&-2F_{-2}\&2\&4F_3\\
0\&0\&\dfrac{2F_{-2}}{15}\&-\dfrac{4F_{-2}}5\&\dfrac45\&\dfrac{8F_3}{5}\\
};
\node[fit=(M-1-1)(M-6-2),inner sep=0pt] (R2) {};
\node[fit=(M-1-3)(M-6-5),inner sep=0pt] (R3) {};
\node[fit=(M-1-6)(M-6-6),inner sep=0pt] (R4) {};
\node[fit=(M-1-1)(M-2-6),inner sep=0pt] (S2) {};
\node[fit=(M-3-1)(M-5-6),inner sep=0pt] (S3) {};
\node[fit=(M-6-1)(M-6-6),inner sep=0pt] (S4) {};
\node[fit=(M-1-1)(M-6-6),inner sep=0pt] (B) {};
\coordinate (V2) at ([xshift=4mm]R2.east);
\coordinate (V3) at ([xshift=2mm]R3.east);
\draw (V2|-B.north)--(V2|-B.south);
\draw (V3|-B.north)--(V3|-B.south);
\coordinate (H2) at ([yshift=-2mm]S2.south);
\coordinate (H3) at ([yshift=-2mm]S3.south);
\draw (B.west|-H2)--(B.east|-H2);
\draw (B.west|-H3)--(B.east|-H3);
\coordinate (T) at ([yshift=3mm]M.north);
\node[anchor=base,font=\small] at (T-|R2.center) {$\ell(R)=2$};
\node[anchor=base,font=\small] at (T-|R3.center) {$\ell(R)=3$};
\node[anchor=base,font=\small] at (T-|R4.center) {$\ell(R)=4$};
\coordinate (L) at ([xshift=3mm]M.east);
\node[anchor=west,xshift=2mm,font=\small] at (L|-S2.center) {$\ell(S)=2$};
\node[anchor=west,xshift=2mm,font=\small] at (L|-S3.center) {$\ell(S)=3$};
\node[anchor=west,xshift=2mm,font=\small] at (L|-S4.center) {$\ell(S)=4$};
\end{tikzpicture}
\end{equation}

The physical one-loop dilatation operator $\fD_N^{(1)}$ is Hermitian with respect to the free two-point-function inner product of $\cN=4$ SYM. 
Its matrix representation in a SEP basis need not be Hermitian in the ordinary sense, because that basis need not be orthonormal in the physical inner product. For $\ell(R)>N$, SEP basis elements have zero norm and represent null states.
As a result, $( H_{\gss}^{(1)})^{SR}$ is not a Hermitian matrix.

With the same notation, the one-loop mixing matrices in the HWS of $\cA(4,2)$ and $\cA(5,2)$ in a SEP basis are presented as follows.
For $\cA(m,2)$, the block columns and rows correspond to $\ell(R)$ and $\ell(S)$ taking values in $\{2,3,\ldots,m+1\}$, arranged in increasing order.

\begin{table}[!htbp]
\centering
\setlength{\arraycolsep}{2pt}
\setlength{\extrarowheight}{3mm}
\begin{adjustbox}{max width=\linewidth}
\tiny
${\everymath{\displaystyle}
\begin{array}{|cccc|cccccc|cccc|c|}
\hline
 \frac{6 F_{-3}}{5} & -\frac{8 F_{-3}}{5} & -6 F_{-1} & 0 & -\frac{21 F_2}{40} & \frac{3 F_2}{10} & -\frac{9 F_2}{40} & 0 & 0 & 0 & 0 & 0 & 0 & 0 & 0 \\
 -\frac{2 F_{-3}}{5} & \frac{11 F_{-\frac{8}{11}}}{5} & 2 F_{-1} & -5 F_{-4} & \frac{7 F_2}{40} & -\frac{F_2}{10} & \frac{3 F_2}{40} & -\frac{F_2}{2} & -2 F_2 & 0 & 0 & 0 & 0 & 0 & 0 \\
 -\frac{2 F_{-3}}{3} & \frac{8 F_{-3}}{9} & \frac{10 F_{-1}}{3} & 0 & \frac{7 F_2}{24} & -\frac{F_2}{6} & \frac{F_2}{8} & 0 & 0 & 0 & 0 & 0 & 0 & 0 & 0 \\
 0 & -\frac{5}{9} & 0 & \frac{5 F_{-4}}{3} & 0 & 0 & 0 & \frac{F_2}{6} & \frac{2 F_2}{3} & 0 & 0 & 0 & 0 & 0 & 0 \\[2mm]
\hline
 -\frac{32 F_{-3}}{15} & \frac{128 F_{-3}}{45} & \frac{32 F_{-1}}{3} & 0 & \frac{28}{15} & \frac{32 \gss}{15} & \frac{4}{5} & 0 & 0 & \frac{32 F_1}{3} & 0 & 0 & -\frac{32 F_3}{15} & -\frac{128 F_3}{45} & 0 \\
 \frac{8 F_{-3}}{15} & -\frac{32 F_{-3}}{45} & -\frac{8 F_{-1}}{3} & 0 & \frac{14 \gss}{15} & \frac{4}{15} & \frac{2 \gss}{5} & 0 & 0 & \frac{8 F_1}{3} & 0 & 0 & -\frac{8 F_3}{15} & -\frac{32 F_3}{45} & 0 \\
 \frac{4 F_{-3}}{15} & -\frac{16 F_{-3}}{45} & -\frac{4 F_{-1}}{3} & 0 & \frac{2}{45} & -\frac{4 \gss}{15} & \frac{58}{45} & -\frac{4 F_{-3}}{9} & \frac{8 F_{-3}}{3} & -\frac{4 F_1}{3} & -\frac{4 F_3}{9} & -\frac{8 F_3}{3} & \frac{4
   F_3}{15} & \frac{16 F_3}{45} & 0 \\
 0 & 0 & 0 & 0 & -\frac{5}{36} & 0 & -\frac{25}{36} & \frac{2 F_{-3}}{9} & -\frac{4 F_{-3}}{3} & 0 & \frac{2 F_3}{9} & \frac{4 F_3}{3} & 0 & 0 & 0 \\
 0 & -\frac{10}{9} & 0 & \frac{10 F_{-4}}{3} & \frac{5}{36} & 0 & \frac{25}{36} & \frac{F_{12}}{9} & \frac{8 F_{-\frac{1}{2}}}{3} & 0 & -\frac{2 F_3}{9} & -\frac{4 F_3}{3} & 0 & 0 & 0 \\
 0 & 0 & 0 & 0 & \frac{7 F_{-2}}{24} & \frac{F_{-2}}{6} & \frac{F_{-2}}{8} & 0 & 0 & \frac{10 F_1}{3} & 0 & 0 & -\frac{2 F_3}{3} & -\frac{8 F_3}{9} & 0 \\[2mm]
\hline
 0 & 0 & 0 & 0 & -\frac{5}{36} & 0 & -\frac{25}{36} & \frac{2 F_{-3}}{9} & -\frac{4 F_{-3}}{3} & 0 & \frac{2 F_3}{9} & \frac{4 F_3}{3} & 0 & 0 & 0 \\
 0 & 0 & 0 & 0 & -\frac{5}{36} & 0 & -\frac{25}{36} & \frac{2 F_{-3}}{9} & -\frac{4 F_{-3}}{3} & 0 & -\frac{F_{-12}}{9} & \frac{8 F_{\frac{1}{2}}}{3} & 0 & -\frac{10}{9} & -\frac{10 F_4}{3} \\
 0 & 0 & 0 & 0 & -\frac{21 F_{-2}}{40} & -\frac{3 F_{-2}}{10} & -\frac{9 F_{-2}}{40} & 0 & 0 & -6 F_1 & 0 & 0 & \frac{6 F_3}{5} & \frac{8 F_3}{5} & 0 \\
 0 & 0 & 0 & 0 & -\frac{7 F_{-2}}{40} & -\frac{F_{-2}}{10} & -\frac{3 F_{-2}}{40} & 0 & 0 & -2 F_1 & \frac{F_{-2}}{2} & -2 F_{-2} & \frac{2 F_3}{5} & \frac{11 F_{\frac{8}{11}}}{5} & 5 F_4 \\[2mm]
\hline
 0 & 0 & 0 & 0 & 0 & 0 & 0 & 0 & 0 & 0 & \frac{F_{-2}}{6} & -\frac{2 F_{-2}}{3} & 0 & \frac{5}{9} & \frac{5 F_4}{3} \\[2mm]
\hline
\end{array}
}$
\end{adjustbox}
\caption{One-loop mixing matrix in the HWS of $\cA(4,2)$ in a SEP basis.}
\end{table}

\begin{sidewaystable}[p]
\centering
\setlength{\arraycolsep}{2pt}
\setlength{\extrarowheight}{3mm}
\begin{adjustbox}{max width=\linewidth, max totalheight=0.82\textheight}
\tiny
${\everymath{\displaystyle}
\begin{array}{|cccc|cccccccccc|ccccccc|cccc|c|}
\hline
 \frac{10 F_{-4}}{9} & -\frac{41}{9} F_{-\frac{110}{41}} & 6 F_{-5} & \frac{50 F_{-1}}{9} & \frac{F_2}{5} & \frac{8 F_2}{5} & 0 & 0 & \frac{4 F_2}{9} & \frac{F_2}{9} & -\frac{F_2}{3} & -\frac{F_2}{9} & 0 & 0 & 0 & 0 & 0 & 0 & 0 &
   0 & 0 & 0 & 0 & 0 & 0 & 0 \\
 -\frac{2 F_{-4}}{9} & \frac{19 F_{-\frac{22}{19}}}{9} & -6 F_{-5} & -\frac{10 F_{-1}}{9} & -\frac{F_2}{5} & -\frac{8 F_2}{5} & 0 & 0 & -\frac{4 F_2}{45} & -\frac{F_2}{45} & \frac{F_2}{15} & \frac{F_2}{45} & 0 & 0 & 0 & 0 & 0 & 0
   & 0 & 0 & 0 & 0 & 0 & 0 & 0 & 0 \\
 0 & -\frac{3}{7} & \frac{12 F_{-5}}{7} & 0 & \frac{2 F_2}{35} & \frac{16 F_2}{35} & 0 & 0 & 0 & 0 & 0 & 0 & 0 & 0 & 0 & 0 & 0 & 0 & 0 & 0 & 0 & 0 & 0 & 0 & 0 & 0 \\
 \frac{4 F_{-4}}{9} & -\frac{11 F_{-4}}{9} & 0 & \frac{20 F_{-1}}{9} & 0 & 0 & 0 & 0 & \frac{8 F_2}{45} & \frac{2 F_2}{45} & -\frac{2 F_2}{15} & -\frac{2 F_2}{45} & 0 & 0 & 0 & 0 & 0 & 0 & 0 & 0 & 0 & 0 & 0 & 0 & 0 & 0 \\
\hline
 0 & -\frac{15}{14} & \frac{30 F_{-5}}{7} & 0 & \frac{3}{14} & -\frac{3 F_{-20}}{7} & 0 & 0 & \frac{1}{14} & 0 & 0 & \frac{3}{7} & 0 & 0 & 0 & 0 & 0 & 0 & -\frac{F_3}{14} & -\frac{15 F_3}{14} & 0 & 0 & 0 & 0 & 0 & 0 \\
 0 & -\frac{15}{14} & \frac{30 F_{-5}}{7} & 0 & \frac{F_8}{14} & \frac{19 F_{-\frac{28}{19}}}{7} & 0 & 0 & -\frac{1}{14} & 0 & 0 & -\frac{3}{7} & 0 & 0 & 0 & 0 & 0 & 0 & \frac{F_3}{14} & \frac{15 F_3}{14} & 0 & 0 & 0 & 0 & 0 & 0
   \\
 0 & 0 & 0 & 0 & 0 & 0 & \frac{19 F_{-1}}{8} & \frac{7 F_{-1}}{8} & -\frac{7 F_{-3}}{40} & \frac{3 F_{-3}}{10} & -\frac{11 F_{-3}}{40} & -\frac{17 F_{-3}}{40} & -\frac{19 F_1}{8} & \frac{7 F_1}{8} & -\frac{7 F_3}{40} & -\frac{3
   F_3}{10} & \frac{11 F_3}{40} & -\frac{17 F_3}{40} & 0 & 0 & 0 & 0 & 0 & 0 & 0 & 0 \\
 0 & 0 & 0 & 0 & 0 & 0 & \frac{F_{-1}}{2} & \frac{11 F_{-1}}{4} & -\frac{F_{-3}}{40} & \frac{21 F_{-3}}{40} & -\frac{13 F_{-3}}{40} & -\frac{31 F_{-3}}{40} & -\frac{F_1}{4} & -\frac{F_1}{2} & \frac{13 F_3}{40} & \frac{3 F_3}{40} &
   -\frac{9 F_3}{40} & \frac{3 F_3}{40} & 0 & 0 & 0 & 0 & 0 & 0 & 0 & 0 \\
 2 F_{-4} & -\frac{11 F_{-4}}{2} & 0 & 10 F_{-1} & \frac{F_{-4}}{6} & -\frac{11 F_{-4}}{3} & -\frac{5 F_{-1}}{3} & \frac{5 F_{-1}}{6} & \frac{9 F_{-1}}{4} & \frac{F_1}{4} & \frac{F_{-39}}{12} & \frac{7 F_{\frac{3}{7}}}{12} &
   \frac{7 F_1}{6} & \frac{22 F_1}{3} & \frac{7 F_3}{12} & -\frac{5 F_3}{4} & \frac{5 F_3}{12} & -\frac{23 F_3}{12} & -\frac{F_3}{6} & -\frac{5 F_3}{2} & 0 & 0 & 0 & 0 & 0 & 0 \\
 0 & 0 & 0 & 0 & 0 & 0 & \frac{15 F_{-1}}{8} & \frac{45 F_{-1}}{8} & 0 & \frac{9 F_{-3}}{8} & -\frac{2 F_{-3}}{3} & -\frac{5 F_{-3}}{3} & -\frac{35 F_1}{24} & \frac{5 F_1}{24} & \frac{7 F_3}{12} & -\frac{F_3}{8} & -\frac{F_3}{4} &
   -\frac{F_3}{4} & 0 & 0 & 0 & 0 & 0 & 0 & 0 & 0 \\
 -\frac{4 F_{-4}}{9} & \frac{11 F_{-4}}{9} & 0 & -\frac{20 F_{-1}}{9} & 0 & 0 & -\frac{5 F_{-1}}{8} & -\frac{5 F_{-1}}{8} & \frac{7 F_{-\frac{85}{7}}}{72} & -\frac{7}{36} F_{-\frac{13}{7}} & \frac{3 F_{-\frac{11}{9}}}{8} &
   \frac{17 F_{-\frac{35}{17}}}{72} & \frac{11 F_1}{24} & \frac{37 F_1}{24} & \frac{F_3}{24} & -\frac{F_3}{4} & \frac{F_3}{8} & -\frac{3 F_3}{8} & 0 & 0 & 0 & 0 & 0 & 0 & 0 & 0 \\
 -\frac{2 F_{-4}}{9} & \frac{11 F_{-4}}{18} & 0 & -\frac{10 F_{-1}}{9} & \frac{F_{-4}}{6} & -\frac{11 F_{-4}}{3} & -\frac{5 F_{-1}}{12} & -\frac{5 F_{-1}}{3} & -\frac{F_{-7}}{72} & -\frac{25}{72} F_{-\frac{67}{25}} & \frac{5
   F_{-\frac{7}{5}}}{24} & \frac{109 F_{-\frac{103}{109}}}{72} & \frac{F_1}{3} & -\frac{7 F_1}{12} & -\frac{5 F_3}{24} & \frac{F_3}{8} & \frac{F_3}{24} & \frac{5 F_3}{24} & -\frac{F_3}{6} & -\frac{5 F_3}{2} & 0 & 0 & 0 & 0 & 0 & 0
   \\
 0 & 0 & 0 & 0 & 0 & 0 & -\frac{19 F_{-1}}{8} & -\frac{7 F_{-1}}{8} & \frac{7 F_{-3}}{40} & -\frac{3 F_{-3}}{10} & \frac{11 F_{-3}}{40} & \frac{17 F_{-3}}{40} & \frac{19 F_1}{8} & -\frac{7 F_1}{8} & \frac{7 F_3}{40} & \frac{3
   F_3}{10} & -\frac{11 F_3}{40} & \frac{17 F_3}{40} & 0 & 0 & 0 & 0 & 0 & 0 & 0 & 0 \\
 0 & 0 & 0 & 0 & 0 & 0 & \frac{F_{-1}}{4} & -\frac{F_{-1}}{2} & \frac{13 F_{-3}}{40} & -\frac{3 F_{-3}}{40} & \frac{9 F_{-3}}{40} & \frac{3 F_{-3}}{40} & -\frac{F_1}{2} & \frac{11 F_1}{4} & -\frac{F_3}{40} & -\frac{21 F_3}{40} &
   \frac{13 F_3}{40} & -\frac{31 F_3}{40} & 0 & 0 & 0 & 0 & 0 & 0 & 0 & 0 \\
\hline
 0 & 0 & 0 & 0 & 0 & 0 & -\frac{7 F_{-1}}{6} & \frac{22 F_{-1}}{3} & \frac{7 F_{-3}}{12} & \frac{5 F_{-3}}{4} & -\frac{5 F_{-3}}{12} & -\frac{23 F_{-3}}{12} & \frac{5 F_1}{3} & \frac{5 F_1}{6} & \frac{9 F_1}{4} & -\frac{F_{-1}}{4}
   & -\frac{F_{39}}{12} & \frac{7 F_{-\frac{3}{7}}}{12} & \frac{F_{-3}}{6} & -\frac{5 F_{-3}}{2} & 10 F_1 & -\frac{F_4}{6} & -\frac{11 F_4}{3} & -2 F_4 & -\frac{11 F_4}{2} & 0 \\
 0 & 0 & 0 & 0 & 0 & 0 & -\frac{35 F_{-1}}{24} & -\frac{5 F_{-1}}{24} & -\frac{7 F_{-3}}{12} & -\frac{F_{-3}}{8} & -\frac{F_{-3}}{4} & \frac{F_{-3}}{4} & \frac{15 F_1}{8} & -\frac{45 F_1}{8} & 0 & \frac{9 F_3}{8} & -\frac{2
   F_3}{3} & \frac{5 F_3}{3} & 0 & 0 & 0 & 0 & 0 & 0 & 0 & 0 \\
 0 & 0 & 0 & 0 & 0 & 0 & \frac{11 F_{-1}}{24} & -\frac{37 F_{-1}}{24} & -\frac{F_{-3}}{24} & -\frac{F_{-3}}{4} & \frac{F_{-3}}{8} & \frac{3 F_{-3}}{8} & -\frac{5 F_1}{8} & \frac{5 F_1}{8} & -\frac{7 F_{\frac{85}{7}}}{72} &
   -\frac{7 F_{\frac{13}{7}}}{36} & \frac{3 F_{\frac{11}{9}}}{8} & -\frac{17 F_{\frac{35}{17}}}{72} & 0 & 0 & \frac{20 F_1}{9} & 0 & 0 & -\frac{4 F_4}{9} & -\frac{11 F_4}{9} & 0 \\
 0 & 0 & 0 & 0 & 0 & 0 & -\frac{F_{-1}}{3} & -\frac{7 F_{-1}}{12} & -\frac{5 F_{-3}}{24} & -\frac{F_{-3}}{8} & -\frac{F_{-3}}{24} & \frac{5 F_{-3}}{24} & \frac{5 F_1}{12} & -\frac{5 F_1}{3} & -\frac{F_7}{72} & \frac{25
   F_{\frac{67}{25}}}{72} & -\frac{5 F_{\frac{7}{5}}}{24} & \frac{109 F_{\frac{103}{109}}}{72} & \frac{F_{-3}}{6} & -\frac{5 F_{-3}}{2} & -\frac{10 F_1}{9} & -\frac{F_4}{6} & -\frac{11 F_4}{3} & \frac{2 F_4}{9} & \frac{11 F_4}{18}
   & 0 \\
 0 & 0 & 0 & 0 & -\frac{2 F_{-4}}{21} & \frac{44 F_{-4}}{21} & 0 & 0 & -\frac{2}{21} & 0 & 0 & -\frac{4}{7} & 0 & 0 & \frac{2}{21} & 0 & 0 & \frac{4}{7} & \frac{4}{21} & -\frac{60 \gss}{7} & 0 & -\frac{2 F_4}{21} & -\frac{44 F_4}{21}
   & 0 & 0 & 0 \\
 0 & 0 & 0 & 0 & -\frac{2 F_{-4}}{21} & \frac{44 F_{-4}}{21} & 0 & 0 & -\frac{2}{21} & 0 & 0 & -\frac{4}{7} & 0 & 0 & -\frac{2}{21} & 0 & 0 & -\frac{4}{7} & -\frac{4 \gss}{7} & \frac{20}{7} & 0 & \frac{2 F_4}{21} & \frac{44 F_4}{21}
   & 0 & 0 & 0 \\
 0 & 0 & 0 & 0 & 0 & 0 & 0 & 0 & 0 & 0 & 0 & 0 & 0 & 0 & \frac{8 F_{-2}}{45} & -\frac{2 F_{-2}}{45} & \frac{2 F_{-2}}{15} & -\frac{2 F_{-2}}{45} & 0 & 0 & \frac{20 F_1}{9} & 0 & 0 & -\frac{4 F_4}{9} & -\frac{11 F_4}{9} & 0 \\
\hline
 0 & 0 & 0 & 0 & 0 & 0 & 0 & 0 & 0 & 0 & 0 & 0 & 0 & 0 & -\frac{1}{14} & 0 & 0 & -\frac{3}{7} & -\frac{F_{-3}}{14} & \frac{15 F_{-3}}{14} & 0 & \frac{3}{14} & \frac{3 F_{20}}{7} & 0 & \frac{15}{14} & -\frac{30 F_5}{7} \\
 0 & 0 & 0 & 0 & 0 & 0 & 0 & 0 & 0 & 0 & 0 & 0 & 0 & 0 & -\frac{1}{14} & 0 & 0 & -\frac{3}{7} & -\frac{F_{-3}}{14} & \frac{15 F_{-3}}{14} & 0 & -\frac{F_{-8}}{14} & \frac{19 F_{\frac{28}{19}}}{7} & 0 & -\frac{15}{14} & \frac{30
   F_5}{7} \\
 0 & 0 & 0 & 0 & 0 & 0 & 0 & 0 & 0 & 0 & 0 & 0 & 0 & 0 & -\frac{4 F_{-2}}{9} & \frac{F_{-2}}{9} & -\frac{F_{-2}}{3} & \frac{F_{-2}}{9} & 0 & 0 & -\frac{50 F_1}{9} & \frac{F_{-2}}{5} & -\frac{8 F_{-2}}{5} & \frac{10 F_4}{9} &
   \frac{41 F_{\frac{110}{41}}}{9} & -6 F_5 \\
 0 & 0 & 0 & 0 & 0 & 0 & 0 & 0 & 0 & 0 & 0 & 0 & 0 & 0 & -\frac{4 F_{-2}}{45} & \frac{F_{-2}}{45} & -\frac{F_{-2}}{15} & \frac{F_{-2}}{45} & 0 & 0 & -\frac{10 F_1}{9} & \frac{F_{-2}}{5} & -\frac{8 F_{-2}}{5} & \frac{2 F_4}{9} &
   \frac{19 F_{\frac{22}{19}}}{9} & -6 F_5 \\
\hline
 0 & 0 & 0 & 0 & 0 & 0 & 0 & 0 & 0 & 0 & 0 & 0 & 0 & 0 & 0 & 0 & 0 & 0 & 0 & 0 & 0 & -\frac{2 F_{-2}}{35} & \frac{16 F_{-2}}{35} & 0 & -\frac{3}{7} & \frac{12 F_5}{7} \\
\hline
\end{array}
}$
\end{adjustbox}
\caption{One-loop mixing matrix in the HWS of $\cA(5,2)$ in a SEP basis.}
\end{sidewaystable}

\clearpage
\paragraph{Integer eigenvectors.}

For small $(m,n)$, there exist special eigenstates whose eigenvalues are integers and receive no $1/N$ corrections.
Let us introduce the notation
\begin{equation}
\fD_N^{(1)}\,\cO_{m,n}(e)=e\,\cO_{m,n}(e)
\end{equation}
where $\cO_{m,n}(e)$ denotes a multi-trace operator with integer one-loop eigenvalue $e\in\mathbb{Z}$, chosen to be a $U(2)$ HWS for $m\geq n$.

One famous example is the Konishi descendant in the $su(2)$ sector,
\begin{equation}
\mathcal{O}_{2,2} (6) = {\rm tr}(WWZZ) - {\rm tr}(WZWZ) .
\label{app:Konishi}
\end{equation}
The Konishi descendant has one-loop eigenvalue $6$ exactly at any $N$. There is no other non-BPS state in the HWS sector of $\cA(2,2)$ \cite{Bianchi:1999ge}. The next simplest operator is
\begin{equation}
\begin{aligned}
\mathcal{O}_{3,2} (4) &= {\rm tr}(WWZZZ) - {\rm tr}(WZWZZ) 
- \frac{2}{N}\Big( {\rm tr}(WWZZ) - {\rm tr}(WZWZ) \Big){\rm tr}(Z) 
\\
&= \tr(W' W' Z' Z' Z') - \tr(W'Z'W'Z' Z')
\end{aligned}
\label{eq:O32+4} 
\end{equation}
where the $1/N$ terms disappear when written by traceless variables. 
Its one-loop eigenvalue is $4$ as long as this operator remains non-zero.
There is no other non-BPS state in the HWS sector of $\cA(3,2)$ without single-letter-traces.
The third simplest example is a three-excitation state corresponding to the singular Bethe root configuration \cite{Beisert:2003tq,Beisert:2004hm},
\begin{equation}
\mathcal{O}_{3,3} (6) = {\rm tr} (WWZWZZ) - {\rm tr} (WWZZWZ)
\label{eq:O33+6}
\end{equation}
which is called an exceptional operator \cite{Arutyunov:2012tx}.
This is the unique negative-parity operator in $\cA(3,3)$, and its one-loop eigenvalue is $6$ exactly at any $N$.\footnote{The parity is defined by taking the transpose $(Z_{ij}, W_{ij}) \mapsto (Z_{ji}, W_{ji})$ in \eqref{eq:Og-definition} and is realised as inverting the permutation $\swmap^{(N)}_{m,n}(g)\mapsto\swmap^{(N)}_{m,n}(g^{-1})$ \cite{Beisert:2003tq}.}

The single-letter-trace descendants of the above operators can be found at a larger $(m,n)$, which have integer eigenvalues.
For example, if we multiply only ${\rm tr} (Z)$'s, we find
\begin{align}
\mathcal{O}_{m,2} (6) 
&= \mathcal{O}_{2,2} (6) \,  {\rm tr}(Z)^{m-2} , \qquad (m \ge 2)
\\[2mm]
\mathcal{O}_{m,2} (4) 
&= \mathcal{O}_{3,2} (4) \,  {\rm tr}(Z)^{m-3} , \qquad (m \ge 3)
\\[2mm]
\mathcal{O}_{m,3} (6) &= \mathcal{O}_{3,3} (6) \,  {\rm tr}(Z)^{m-3} , \qquad (m \ge 3).
\end{align}
If we multiply both ${\rm tr} (Z)$ and ${\rm tr} (W)$, we find
\begin{align}
\mathcal{O}_{3,3} (4) &= 
\Big( {\rm tr}(WWZZZ) - {\rm tr}(WZWZZ) \Big) {\rm tr}(W)
- \Big( {\rm tr}(WWWZZ) - {\rm tr}(WWZWZ) \Big) {\rm tr}(Z) 
\notag \\
&= \mathcal{O}_{3,2} (4) \, {\rm tr}(W) - \mathcal{O}_{2,3} (4) \, {\rm tr}(Z) 
\label{eq:O33+4}
\end{align}
where $\cO_{2,3}(4)$ denotes the operator obtained from $\cO_{3,2}(4)$ by exchanging $Z$ and $W$.
The relative coefficients are chosen such that $\cO_{m,n} (e)$ becomes the HWS of the global $\su2$ symmetry.

\section{Characteristic polynomial data}\label{app:polydata}

Below we summarise the characteristic polynomials in the HWS sectors for gauge groups $U(N)$ and $SU(N)$ in the range $4 \leq m+n \leq 8$. 
The results for higher values of $m+n$ are too lengthy to be displayed here.
In the main text, these polynomials are denoted by $\det_{\cH_N^{\mathrm{HWS}}(m,n)}(x-\fD^{(1)}_N)$ and $\det_{\cH'_N{}^{\mathrm{HWS}}(m,n)}(x-\fD^{\prime(1)}_N)$, respectively. The parameter $\gss = 1/N$ is left symbolic in the stable regime.

\subsection{$U(N)$ case}
\begin{table}[!htbp]
\centering
\small
\setlength{\tabcolsep}{4pt}
\setlength{\extrarowheight}{1pt}
\renewcommand{\arraystretch}{1}
\begin{tabular}{ccc|l}
$m$ & $n$ & $N$ & $\det_{\cH_N^{\mathrm{HWS}}(m,n)} (x-  \fD^{(1)}_N)$ \\[1mm]
\hline
$2$ & $2$ & $N\ge 4$ & $(x - 6)\,x^{2}$ \\
 & & $3$ & $(x - 6)\,x^{2}$ \\
 & & $2$ & $(x - 6)\,x$ \\[1mm]
\hline
$3$ & $2$ & $N\ge 5$ & $(x - 6)\,(x - 4)\,x^{4}$ \\
 & & $4$ & $(x - 6)\,(x - 4)\,x^{4}$ \\
 & & $3$ & $(x - 6)\,(x - 4)\,x^{3}$ \\
 & & $2$ & $(x - 6)\,x$ \\[1mm]
\hline
$4$ & $2$ & $N\ge 6$ & $(x - 6)\,(x - 4)\,x^{10}\,(x^{3} - 16 x^{2} + (-40 \gss^{2} + 80) x + 80 \gss^{2} - 120)$ \\
 & & $5$ & $(x - 6)\,(x - 4)\,x^{10}\,(x^{3} - 16 x^{2} + \frac{392}{5} \, x - \frac{584}{5})$ \\
 & & $4$ & $(x - 6)\,(x - 4)\,x^{9}\,(x^{3} - 16 x^{2} + \frac{155}{2} \, x - 115)$ \\
 & & $3$ & $(x - 6)\,(x - 4)\,x^{6}\,(x^{2} - \frac{38}{3} \, x + \frac{100}{3})$ \\
 & & $2$ & $(x - 10)\,(x - 6)\,x^{2}$ \\[1mm]
\hline
$3$ & $3$ & $N\ge 6$ & $(x - 6)\,(x - 4)\,x^{2}$ \\
 & & $5$ & $(x - 6)\,(x - 4)\,x^{2}$ \\
 & & $4$ & $(x - 6)\,(x - 4)\,x^{2}$ \\
 & & $3$ & $(x - 6)\,(x - 4)\,x$ \\
 & & $2$ & $1$ \\[1mm]
\hline
$5$ & $2$ & $N\ge 7$ & $(x - 6)\,(x - 4)\,x^{17}\,(x^{3} - 16 x^{2} + (-40 \gss^{2} + 80) x + 80 \gss^{2} - 120) \ \times$ \\
& & & \hspace{55mm} $(x^{4} - 18 x^{3} + (-40 \gss^{2} + 116) x^{2} + (168 \gss^{2} - 312) x + 288)$ \\
 & & $6$ & $(x - 6)\,(x - 4)\,x^{17}\,(x^{3} - 16 x^{2} + \frac{710}{9} \, x - \frac{1060}{9})\,(x^{4} - 18 x^{3} + \frac{1034}{9} \, x^{2} - \frac{922}{3} \, x + 288)$ \\
 & & $5$ & $(x - 6)\,(x - 4)\,x^{16}\,(x^{3} - 16 x^{2} + \frac{392}{5} \, x - \frac{584}{5})\,(x^{4} - 18 x^{3} + \frac{572}{5} \, x^{2} - \frac{7632}{25} \, x + 288)$ \\
 & & $4$ & $(x - 6)\,(x - 4)\,x^{13}\,(x^{3} - 16 x^{2} + \frac{155}{2} \, x - 115)\,(x^{3} - 15 x^{2} + \frac{137}{2} \, x - 96)$ \\
 & & $3$ & $(x - 6)\,(x - 4)\,x^{8}\,(x^{2} - \frac{38}{3} \, x + \frac{100}{3})\,(x^{2} - \frac{38}{3} \, x + 36)$ \\
 & & $2$ & $(x - 10)\,(x - 6)\,x^{2}$ \\[1mm]
\hline
$4$ & $3$ & $N\ge 7$ & $(x - 6)\,(x - 5)\,(x - 4)\,x^{10}\,(x^{2} - 9 x - 40 \gss^{2} + 20)\,(x^{3} - 16 x^{2} + (-40 \gss^{2} + 80) x + 80 \gss^{2} - 120)$ \\
 & & $6$ & $(x - 6)\,(x - \frac{17}{3})\,(x - 5)\,(x - 4)\,(x - \frac{10}{3})\,x^{10}\,(x^{3} - 16 x^{2} + \frac{710}{9} \, x - \frac{1060}{9})$ \\
 & & $5$ & $(x - 6)\,(x - 5)\,(x - 4)\,x^{10}\,(x^{2} - 9 x + \frac{92}{5})\,(x^{3} - 16 x^{2} + \frac{392}{5} \, x - \frac{584}{5})$ \\
 & & $4$ & $(x - 6)\,(x - 5)\,(x - 4)\,x^{8}\,(x^{2} - 9 x + \frac{35}{2})\,(x^{3} - 16 x^{2} + \frac{155}{2} \, x - 115)$ \\
 & & $3$ & $(x - \frac{20}{3})\,(x - 6)\,(x - 4)\,x^{4}\,(x^{2} - \frac{38}{3} \, x + \frac{100}{3})$ \\
 & & $2$ & $(x - 10)\,x$ \\[1mm]
\hline
\end{tabular}
\caption{$U(N)$ characteristic polynomials for $4 \le m+n \le 7$.}\label{tbl:UN-combined-charpoly}
\end{table}

\begin{sidewaystable}[p]
\centering
\footnotesize
\setlength{\tabcolsep}{4pt}
\setlength{\extrarowheight}{1pt}
\renewcommand{\arraystretch}{1.13}
\begin{tabular}{ccc|p{0.90\textheight}}
$m$ & $n$ & $N$ & $\det_{\cH_N^{\mathrm{HWS}}(m,n)} (x-  \fD^{(1)}_N)$ \\[1mm]
\hline
$6$ & $2$ & $N\ge 8$ & $(x - 6)\,(x - 4)\,x^{32}\,(x^{3} - 16 x^{2} + (-40 \gss^{2} + 80) x + 80 \gss^{2} - 120)\,(x^{4} - 18 x^{3} + (-40 \gss^{2} + 116) x^{2} + (168 \gss^{2} - 312) x + 288)\,(x^{8} - 40 x^{7} + (-232 \gss^{2} + 684) x^{6} + (5976 \gss^{2} - 6512) x^{5} + (8576 \gss^{4} - 60976 \gss^{2} + 37616) x^{4} + (-100416 \gss^{4} + 310080 \gss^{2} - 134368) x^{3} + (-35840 \gss^{6} + 280960 \gss^{4} - 802880 \gss^{2} + 288064) x^{2} + (-164864 \gss^{6} + 288512 \gss^{4} + 934528 \gss^{2} - 336000) x - 186368 \gss^{6} - 1229312 \gss^{4} - 301056 \gss^{2} + 161280)$ \\
 & & $7$ & $(x - 6)\,(x - 4)\,x^{32}\,(x^{3} - 16 x^{2} + \frac{3880}{49} \, x - \frac{5800}{49})\,(x^{4} - 18 x^{3} + \frac{5644}{49} \, x^{2} - \frac{2160}{7} \, x + 288)\,(x^{8} - 40 x^{7} + \frac{33284}{49} \, x^{6} - \frac{313112}{49} \, x^{5} + \frac{87336768}{2401} \, x^{4} - \frac{307524064}{2401} \, x^{3} + \frac{4568065408}{16807} \, x^{2} - \frac{5324612864}{16807} \, x + \frac{2598738944}{16807})$ \\
 & & $6$ & $(x - 6)\,(x - 4)\,x^{31}\,(x^{3} - 16 x^{2} + \frac{710}{9} \, x - \frac{1060}{9})\,(x^{4} - 18 x^{3} + \frac{1034}{9} \, x^{2} - \frac{922}{3} \, x + 288)\,(x^{8} - 40 x^{7} + \frac{6098}{9} \, x^{6} - 6346 x^{5} + \frac{2910236}{81} \, x^{4} - \frac{3397468}{27} \, x^{3} + \frac{193897816}{729} \, x^{2} - \frac{225860096}{729} \, x + \frac{110782336}{729})$ \\
 & & $5$ & $(x - 6)\,(x - 4)\,x^{28}\,(x^{3} - 16 x^{2} + \frac{392}{5} \, x - \frac{584}{5})\,(x^{4} - 18 x^{3} + \frac{572}{5} \, x^{2} - \frac{7632}{25} \, x + 288)\,(x^{7} - \frac{186}{5} \, x^{6} + \frac{14264}{25} \, x^{5} - \frac{584424}{125} \, x^{4} + \frac{2762448}{125} \, x^{3} - \frac{37654144}{625} \, x^{2} + \frac{274079616}{3125} \, x - \frac{164351488}{3125})$ \\
 & & $4$ & $(x - 6)\,(x - 4)\,x^{22}\,(x^{3} - 16 x^{2} + \frac{155}{2} \, x - 115)\,(x^{3} - 15 x^{2} + \frac{137}{2} \, x - 96)\,(x^{6} - 35 x^{5} + \frac{975}{2} \, x^{4} - 3456 x^{3} + 13146 x^{2} - \frac{101833}{4} \, x + \frac{39319}{2})$ \\
 & & $3$ & $(x - 6)\,(x - 4)\,x^{12}\,(x^{2} - \frac{38}{3} \, x + \frac{100}{3})\,(x^{2} - \frac{38}{3} \, x + 36)\,(x^{3} - 22 x^{2} + \frac{1304}{9} \, x - \frac{2576}{9})$ \\
 & & $2$ & $(x - 14)\,(x - 10)\,(x - 6)\,x^{3}$ \\[1mm]
\hline
$5$ & $3$ & $N\ge 8$ & $(x - 6)\,(x - 5)\,(x - 6 \gss - 6)\,(x + 6 \gss - 6)\,(x - 4)^{2}\,x^{20}\,(x^{2} - 9 x - 40 \gss^{2} + 20)\,(x^{3} - 16 x^{2} + (-40 \gss^{2} + 80) x + 80 \gss^{2} - 120)\,(x^{4} - 18 x^{3} + (-40 \gss^{2} + 116) x^{2} + (168 \gss^{2} - 312) x + 288)\,(x^{4} - 18 x^{3} + (-40 \gss^{2} + 116) x^{2} + (432 \gss^{2} - 320) x - 960 \gss^{2} + 320)$ \\
 & & $7$ & $(x - \frac{48}{7})\,(x - 6)\,(x - \frac{36}{7})\,(x - 5)\,(x - 4)^{2}\,x^{20}\,(x^{2} - 9 x + \frac{940}{49})\,(x^{3} - 16 x^{2} + \frac{3880}{49} \, x - \frac{5800}{49})\,(x^{4} - 18 x^{3} + \frac{5644}{49} \, x^{2} - \frac{15248}{49} \, x + \frac{14720}{49})\,(x^{4} - 18 x^{3} + \frac{5644}{49} \, x^{2} - \frac{2160}{7} \, x + 288)$ \\
 & & $6$ & $(x - 7)\,(x - 6)\,(x - \frac{17}{3})\,(x - 5)^{2}\,(x - 4)^{2}\,(x - \frac{10}{3})^{2}\,x^{20}\,(x^{3} - 16 x^{2} + \frac{710}{9} \, x - \frac{1060}{9})\,(x^{3} - \frac{44}{3} \, x^{2} + 66 x - 88)\,(x^{4} - 18 x^{3} + \frac{1034}{9} \, x^{2} - \frac{922}{3} \, x + 288)$ \\
 & & $5$ & $(x - \frac{36}{5})\,(x - 6)\,(x - 5)\,(x - \frac{24}{5})\,(x - 4)^{2}\,x^{18}\,(x^{2} - 9 x + \frac{92}{5})\,(x^{3} - 16 x^{2} + \frac{392}{5} \, x - \frac{584}{5})\,(x^{4} - 18 x^{3} + \frac{572}{5} \, x^{2} - \frac{7632}{25} \, x + 288)\,(x^{4} - 18 x^{3} + \frac{572}{5} \, x^{2} - \frac{7568}{25} \, x + \frac{1408}{5})$ \\
 & & $4$ & $(x - \frac{15}{2})\,(x - 6)\,(x - 5)\,(x - 4)^{2}\,x^{13}\,(x^{2} - 9 x + \frac{35}{2})\,(x^{3} - 16 x^{2} + \frac{155}{2} \, x - 115)\,(x^{3} - 16 x^{2} + \frac{163}{2} \, x - 130)\,(x^{3} - 15 x^{2} + \frac{137}{2} \, x - 96)$ \\
 & & $3$ & $(x - 8)\,(x - 6)\,(x - 4)\,(x - \frac{20}{3})^{2}\,x^{6}\,(x^{2} - \frac{38}{3} \, x + \frac{100}{3})\,(x^{2} - \frac{38}{3} \, x + 36)$ \\
 & & $2$ & $(x - 10)\,x$ \\[1mm]
\hline
$4$ & $4$ & $N\ge 8$ & $(x - 5)\,x^{12}\,(x^{2} - 9 x - 40 \gss^{2} + 20)\,(x^{3} - 16 x^{2} + (-40 \gss^{2} + 80) x + 80 \gss^{2} - 120)\,(x^{7} - 48 x^{6} + (-412 \gss^{2} + 948) x^{5} + (11840 \gss^{2} - 9968) x^{4} + (30560 \gss^{4} - 132848 \gss^{2} + 60192) x^{3} + (-476800 \gss^{4} + 722080 \gss^{2} - 208480) x^{2} + (-627200 \gss^{6} + 2770560 \gss^{4} - 1912000 \gss^{2} + 383040) x + 2816000 \gss^{6} - 5824000 \gss^{4} + 2016000 \gss^{2} - 288000)$ \\
 & & $7$ & $(x - 5)\,x^{12}\,(x^{2} - 9 x + \frac{940}{49})\,(x^{3} - 16 x^{2} + \frac{3880}{49} \, x - \frac{5800}{49})\,(x^{7} - 48 x^{6} + \frac{46040}{49} \, x^{5} - \frac{476592}{49} \, x^{4} + \frac{138042000}{2401} \, x^{3} - \frac{465655360}{2401} \, x^{2} + \frac{828748800}{2401} \, x - \frac{29325056000}{117649})$ \\
 & & $6$ & $(x - \frac{17}{3})\,(x - 5)\,(x - \frac{10}{3})\,x^{12}\,(x^{3} - 16 x^{2} + \frac{710}{9} \, x - \frac{1060}{9})\,(x^{7} - 48 x^{6} + \frac{8429}{9} \, x^{5} - \frac{86752}{9} \, x^{4} + \frac{4578554}{81} \, x^{3} - \frac{15292000}{81} \, x^{2} + \frac{242066800}{729} \, x - \frac{172360000}{729})$ \\
 & & $5$ & $(x - 5)\,x^{11}\,(x^{2} - 9 x + \frac{92}{5})\,(x^{3} - 16 x^{2} + \frac{392}{5} \, x - \frac{584}{5})\,(x^{7} - 48 x^{6} + \frac{23288}{25} \, x^{5} - \frac{47472}{5} \, x^{4} + \frac{6865872}{125} \, x^{3} - \frac{4508992}{25} \, x^{2} + \frac{194345472}{625} \, x - \frac{27062272}{125})$ \\
 & & $4$ & $(x - 5)\,x^{8}\,(x^{2} - 9 x + \frac{35}{2})\,(x^{3} - 16 x^{2} + \frac{155}{2} \, x - 115)\,(x^{6} - \frac{91}{2} \, x^{5} + \frac{1617}{2} \, x^{4} - \frac{28827}{4} \, x^{3} + \frac{67983}{2} \, x^{2} - \frac{320935}{4} \, x + 73625)$ \\
 & & $3$ & $(x - 10)\,(x - \frac{20}{3})\,x^{4}\,(x^{2} - 22 x + \frac{752}{9})\,(x^{2} - \frac{38}{3} \, x + \frac{100}{3})$ \\
 & & $2$ & $(x - 20)\,(x - 10)\,x$ \\[1mm]
\hline
\end{tabular}
\caption{$U(N)$ characteristic polynomials for $m+n=8$.}\label{tbl:UN-combined-charpoly-L8}
\end{sidewaystable}

\clearpage
\subsection{$SU(N)$ case}

\begin{table}[!htbp]
\centering
\small
\setlength{\tabcolsep}{4pt}
\setlength{\extrarowheight}{1pt}
\renewcommand{\arraystretch}{1.13}
\begin{tabular}{ccc|l}
$m$ & $n$ & $N$ & $\det_{\cH'_N{}^{\mathrm{HWS}}(m,n)} (x-  \fD^{\prime(1)}_N)$ \\[1mm]
\hline
$2$ & $2$ & $N\ge 4$ & $(x - 6)\,x$ \\
 & & $3$ & $(x - 6)\,x$ \\
 & & $2$ & $(x - 6)$ \\[1mm]
\hline
$3$ & $2$ & $N\ge 5$ & $(x - 4)\,x$ \\
 & & $4$ & $(x - 4)\,x$ \\
 & & $3$ & $(x - 4)$ \\
 & & $2$ & $1$ \\[1mm]
\hline
$4$ & $2$ & $N\ge 6$ & $x^{3}\,(x^{3} - 16 x^{2} + (-40 \gss^{2} + 80) x + 80 \gss^{2} - 120)$ \\
 & & $5$ & $x^{3}\,(x^{3} - 16 x^{2} + \frac{392}{5} \, x - \frac{584}{5})$ \\
 & & $4$ & $x^{2}\,(x^{3} - 16 x^{2} + \frac{155}{2} \, x - 115)$ \\
 & & $3$ & $x\,(x^{2} - \frac{38}{3} \, x + \frac{100}{3})$ \\
 & & $2$ & $(x - 10)$ \\[1mm]
\hline
$3$ & $3$ & $N\ge 6$ & $(x - 6)$ \\
 & & $5$ & $(x - 6)$ \\
 & & $4$ & $(x - 6)$ \\
 & & $3$ & $(x - 6)$ \\
 & & $2$ & $1$ \\[1mm]
\hline
$5$ & $2$ & $N\ge 7$ & $x^{4}\,(x^{4} - 18 x^{3} + (-40 \gss^{2} + 116) x^{2} + (168 \gss^{2} - 312) x + 288)$ \\
 & & $6$ & $x^{4}\,(x^{4} - 18 x^{3} + \frac{1034}{9} \, x^{2} - \frac{922}{3} \, x + 288)$ \\
 & & $5$ & $x^{3}\,(x^{4} - 18 x^{3} + \frac{572}{5} \, x^{2} - \frac{7632}{25} \, x + 288)$ \\
 & & $4$ & $x^{2}\,(x^{3} - 15 x^{2} + \frac{137}{2} \, x - 96)$ \\
 & & $3$ & $x\,(x^{2} - \frac{38}{3} \, x + 36)$ \\
 & & $2$ & $1$ \\[1mm]
\hline
$4$ & $3$ & $N\ge 7$ & $(x - 5)\,x^{2}\,(x^{2} - 9 x - 40 \gss^{2} + 20)$ \\
 & & $6$ & $(x - \frac{17}{3})\,(x - 5)\,(x - \frac{10}{3})\,x^{2}$ \\
 & & $5$ & $(x - 5)\,x^{2}\,(x^{2} - 9 x + \frac{92}{5})$ \\
 & & $4$ & $(x - 5)\,x\,(x^{2} - 9 x + \frac{35}{2})$ \\
 & & $3$ & $(x - \frac{20}{3})$ \\
 & & $2$ & $1$ \\[1mm]
\hline
\end{tabular}
\caption{$SU(N)$ characteristic polynomials for $4 \le m+n \le 7$.}\label{tbl:SUN-combined-charpoly}
\end{table}

\begin{sidewaystable}[p]
\centering
\small
\setlength{\tabcolsep}{4pt}
\setlength{\extrarowheight}{1pt}
\renewcommand{\arraystretch}{1.13}
\begin{tabular}{ccc|p{0.80\textheight}}
$m$ & $n$ & $N$ & $\det_{\cH'_N{}^{\mathrm{HWS}}(m,n)} (x-  \fD^{\prime(1)}_N)$ \\[1mm]
\hline
$6$ & $2$ & $N\ge 8$ & $x^{8}\,(x^{8} - 40 x^{7} + (-232 \gss^{2} + 684) x^{6} + (5976 \gss^{2} - 6512) x^{5} + (8576 \gss^{4} - 60976 \gss^{2} + 37616) x^{4} + (-100416 \gss^{4} + 310080 \gss^{2} - 134368) x^{3} + (-35840 \gss^{6} + 280960 \gss^{4} - 802880 \gss^{2} + 288064) x^{2} + (-164864 \gss^{6} + 288512 \gss^{4} + 934528 \gss^{2} - 336000) x - 186368 \gss^{6} - 1229312 \gss^{4} - 301056 \gss^{2} + 161280)$ \\
 & & $7$ & $x^{8}\,(x^{8} - 40 x^{7} + \frac{33284}{49} \, x^{6} - \frac{313112}{49} \, x^{5} + \frac{87336768}{2401} \, x^{4} - \frac{307524064}{2401} \, x^{3} + \frac{4568065408}{16807} \, x^{2} - \frac{5324612864}{16807} \, x + \frac{2598738944}{16807})$ \\
 & & $6$ & $x^{7}\,(x^{8} - 40 x^{7} + \frac{6098}{9} \, x^{6} - 6346 x^{5} + \frac{2910236}{81} \, x^{4} - \frac{3397468}{27} \, x^{3} + \frac{193897816}{729} \, x^{2} - \frac{225860096}{729} \, x + \frac{110782336}{729})$ \\
 & & $5$ & $x^{6}\,(x^{7} - \frac{186}{5} \, x^{6} + \frac{14264}{25} \, x^{5} - \frac{584424}{125} \, x^{4} + \frac{2762448}{125} \, x^{3} - \frac{37654144}{625} \, x^{2} + \frac{274079616}{3125} \, x - \frac{164351488}{3125})$ \\
 & & $4$ & $x^{4}\,(x^{6} - 35 x^{5} + \frac{975}{2} \, x^{4} - 3456 x^{3} + 13146 x^{2} - \frac{101833}{4} \, x + \frac{39319}{2})$ \\
 & & $3$ & $x\,(x^{3} - 22 x^{2} + \frac{1304}{9} \, x - \frac{2576}{9})$ \\
 & & $2$ & $(x - 14)$ \\[1mm]
\hline
$5$ & $3$ & $N\ge 8$ & $(x - 4)\,(x - 6 \gss - 6)\,(x + 6 \gss - 6)\,x^{3}\,(x^{4} - 18 x^{3} + (-40 \gss^{2} + 116) x^{2} + (432 \gss^{2} - 320) x - 960 \gss^{2} + 320)$ \\
 & & $7$ & $(x - \frac{48}{7})\,(x - \frac{36}{7})\,(x - 4)\,x^{3}\,(x^{4} - 18 x^{3} + \frac{5644}{49} \, x^{2} - \frac{15248}{49} \, x + \frac{14720}{49})$ \\
 & & $6$ & $(x - 7)\,(x - 5)\,(x - 4)\,(x - \frac{10}{3})\,x^{3}\,(x^{3} - \frac{44}{3} \, x^{2} + 66 x - 88)$ \\
 & & $5$ & $(x - \frac{36}{5})\,(x - \frac{24}{5})\,(x - 4)\,x^{2}\,(x^{4} - 18 x^{3} + \frac{572}{5} \, x^{2} - \frac{7568}{25} \, x + \frac{1408}{5})$ \\
 & & $4$ & $(x - \frac{15}{2})\,(x - 4)\,x\,(x^{3} - 16 x^{2} + \frac{163}{2} \, x - 130)$ \\
 & & $3$ & $(x - 8)\,(x - \frac{20}{3})$ \\
 & & $2$ & $1$ \\[1mm]
\hline
$4$ & $4$ & $N\ge 8$ & $x^{4}\,(x^{7} - 48 x^{6} + (-412 \gss^{2} + 948) x^{5} + (11840 \gss^{2} - 9968) x^{4} + (30560 \gss^{4} - 132848 \gss^{2} + 60192) x^{3} + (-476800 \gss^{4} + 722080 \gss^{2} - 208480) x^{2} + (-627200 \gss^{6} + 2770560 \gss^{4} - 1912000 \gss^{2} + 383040) x + 2816000 \gss^{6} - 5824000 \gss^{4} + 2016000 \gss^{2} - 288000)$ \\
 & & $7$ & $x^{4}\,(x^{7} - 48 x^{6} + \frac{46040}{49} \, x^{5} - \frac{476592}{49} \, x^{4} + \frac{138042000}{2401} \, x^{3} - \frac{465655360}{2401} \, x^{2} + \frac{828748800}{2401} \, x - \frac{29325056000}{117649})$ \\
 & & $6$ & $x^{4}\,(x^{7} - 48 x^{6} + \frac{8429}{9} \, x^{5} - \frac{86752}{9} \, x^{4} + \frac{4578554}{81} \, x^{3} - \frac{15292000}{81} \, x^{2} + \frac{242066800}{729} \, x - \frac{172360000}{729})$ \\
 & & $5$ & $x^{3}\,(x^{7} - 48 x^{6} + \frac{23288}{25} \, x^{5} - \frac{47472}{5} \, x^{4} + \frac{6865872}{125} \, x^{3} - \frac{4508992}{25} \, x^{2} + \frac{194345472}{625} \, x - \frac{27062272}{125})$ \\
 & & $4$ & $x^{2}\,(x^{6} - \frac{91}{2} \, x^{5} + \frac{1617}{2} \, x^{4} - \frac{28827}{4} \, x^{3} + \frac{67983}{2} \, x^{2} - \frac{320935}{4} \, x + 73625)$ \\
 & & $3$ & $(x - 10)\,x\,(x^{2} - 22 x + \frac{752}{9})$ \\
 & & $2$ & $(x - 20)$ \\[1mm]
\hline
\end{tabular}
\caption{$SU(N)$ characteristic polynomials for $m+n=8$.}\label{tbl:SUN-combined-charpoly-L8}
\end{sidewaystable}

\clearpage
\subsection{Counting the quarter-BPS states}

The grand generating function for quarter-BPS states, including half-BPS states,
in the $\su2$ sector of $\cN=4$ SYM with gauge group $U(N)$ is
\cite{Kinney:2005ej,Dolan:2007rq}
\begin{equation}
Z_{\ge 1/4}^{U}(\nu,x,y)
=
\prod_{k,j\geq0}\frac{1}{1-\nu x^k y^j}
=
1+\sum_{N=1}^{\infty}\sum_{m,n=0}^{\infty}
d^{U(N)}_{m,n}\,\nu^N x^m y^n.
\end{equation}
The corresponding $SU(N)$ generating function is
\begin{equation}
Z_{\ge 1/4}^{SU}(\nu,x,y)
=
1+(1-x)(1-y)\bigl(Z_{\ge 1/4}^{U}(\nu,x,y)-1\bigr)
=
1+\sum_{N=1}^{\infty}\sum_{m,n=0}^{\infty}
d^{SU(N)}_{m,n}\,\nu^N x^m y^n.
\end{equation}
Here $\nu$ counts the gauge-group rank, and $x,y$ count the two scalar degrees.
We assign the constant term $1$ at $N=0$ for convenience.
The HWS multiplicities are obtained by expanding in $U(2)$ characters:
\begin{equation}
\begin{aligned}
Z_{\ge 1/4}^{U}(\nu,x,y)
&= 1+
\sum_{N=1}^{\infty}\nu^N
\sum_{L=0}^{\infty}
\sum_{\substack{\Lambda\vdash L\\\ell(\Lambda)\leq2}}
d^{U(N),\mathrm{HWS}}_{\Lambda}\,\chi^\Lambda(x,y),\\
Z_{\ge 1/4}^{SU}(\nu,x,y)
&= 1+
\sum_{N=1}^{\infty}\nu^N
\sum_{L=0}^{\infty}
\sum_{\substack{\Lambda\vdash L\\\ell(\Lambda)\leq2}}
d^{SU(N),\mathrm{HWS}}_{\Lambda}\,\chi^\Lambda(x,y).
\end{aligned}
\end{equation}
where
\begin{equation}
\begin{aligned}
d^{U(N),{\rm HWS}}_{[m,n]}
 &=d^{U(N)}_{m,n}-d^{U(N)}_{m+1,n-1},
\\[2mm]
d^{SU(N),{\rm HWS}}_{[m,n]}
 &=d^{U(N),{\rm HWS}}_{[m,n]}
 -d^{U(N),{\rm HWS}}_{[m,n-1]}
 -(1-\delta_{m,n})d^{U(N),{\rm HWS}}_{[m-1,n]}
 +d^{U(N),{\rm HWS}}_{[m-1,n-1]}.
\end{aligned}
\end{equation}
The first equation follows from the $u(2)$ character \eqref{def:U2 char}, and the second equation is Lemma \ref{prop:HWS dim comp}.

Below we summarise the numbers of zero modes, $\dim \ker \fD_N^{(1)}$ for $U(N)$ and $\dim \ker \fD_N^{\prime (1)}$ for $SU(N)$ in the HWS sectors with $1\leq m+n\leq8$.
We denote these zero-mode multiplicities by $N^{U(N),{\rm HWS}}_{\ge1/4,[m,n]}$ and $N^{SU(N),{\rm HWS}}_{\ge1/4,[m,n]}$.
These numbers are equal to the powers of $x$ in the corresponding characteristic polynomials. 
Every entry in the tables for the HWS sector agrees with the coefficients of the generating functions for any $N$,
\begin{equation}
N^{U(N),{\rm HWS}}_{\ge1/4,[m,n]} =d^{U(N),{\rm HWS}}_{[m,n]},\qquad
N^{SU(N),{\rm HWS}}_{\ge1/4,[m,n]} =d^{SU(N),{\rm HWS}}_{[m,n]}, \qquad \forall N \in \mathbb{N}.
\end{equation}
This agreement provides non-trivial evidence that the space of $SU(N)$-invariants at finite $N$ is described by the intersection.

The half-BPS contribution is obtained by setting $y=0$ in the generating functions.
A half-BPS HWS belongs to the $u(2)$ representation $[L,0]$. 
A genuinely quarter-BPS HWS belongs to a representation $[m,n]$ with $m\geq n\geq1$, for both $U(N)$ and $SU(N)$ gauge groups.

In the following tables, blank entries at $N>m+n$ indicate the stable value displayed in the $N=\infty$ column.

\subsubsection{$U(N)$ case}

The $N=1$ entry corresponds to $\tr(Z)^m \, \tr(W)^n$. They are half-BPS descendants, and become HWS only if $n=0$.

\begin{table}[h]
\centering
\small
\setlength{\tabcolsep}{4pt}
\setlength{\extrarowheight}{1pt}
\renewcommand{\arraystretch}{1.13}
\begin{tabular}{cc|ccccccccc}
$m$ & $n$ & $N=1$ & $2$ & $3$ & $4$ & $5$ & $6$ & $7$ & $8$ & $\infty$ \\[1mm]
\hline
$1$ & $0$ & $1$ &  &  &  &  &  &  &  & $1$ \\[1mm]
\hline
$1$ & $1$ & $0$ & $0$ &  &  &  &  &  &  & $0$ \\
$2$ & $0$ & $1$ & $2$ &  &  &  &  &  &  & $2$ \\[1mm]
\hline
$2$ & $1$ & $0$ & $1$ & $1$ &  &  &  &  &  & $1$ \\
$3$ & $0$ & $1$ & $2$ & $3$ &  &  &  &  &  & $3$ \\[1mm]
\hline
$2$ & $2$ & $0$ & $1$ & $2$ & $2$ &  &  &  &  & $2$ \\
$3$ & $1$ & $0$ & $1$ & $2$ & $2$ &  &  &  &  & $2$ \\
$4$ & $0$ & $1$ & $3$ & $4$ & $5$ &  &  &  &  & $5$ \\[1mm]
\hline
$3$ & $2$ & $0$ & $1$ & $3$ & $4$ & $4$ &  &  &  & $4$ \\
$4$ & $1$ & $0$ & $2$ & $4$ & $5$ & $5$ &  &  &  & $5$ \\
$5$ & $0$ & $1$ & $3$ & $5$ & $6$ & $7$ &  &  &  & $7$ \\[1mm]
\hline
$3$ & $3$ & $0$ & $0$ & $1$ & $2$ & $2$ & $2$ &  &  & $2$ \\
$4$ & $2$ & $0$ & $2$ & $6$ & $9$ & $10$ & $10$ &  &  & $10$ \\
$5$ & $1$ & $0$ & $2$ & $5$ & $7$ & $8$ & $8$ &  &  & $8$ \\
$6$ & $0$ & $1$ & $4$ & $7$ & $9$ & $10$ & $11$ &  &  & $11$ \\[1mm]
\hline
$4$ & $3$ & $0$ & $1$ & $4$ & $8$ & $10$ & $10$ & $10$ &  & $10$ \\
$5$ & $2$ & $0$ & $2$ & $8$ & $13$ & $16$ & $17$ & $17$ &  & $17$ \\
$6$ & $1$ & $0$ & $3$ & $8$ & $12$ & $14$ & $15$ & $15$ &  & $15$ \\
$7$ & $0$ & $1$ & $4$ & $8$ & $11$ & $13$ & $14$ & $15$ &  & $15$ \\[1mm]
\hline
$4$ & $4$ & $0$ & $1$ & $4$ & $8$ & $11$ & $12$ & $12$ & $12$ & $12$ \\
$5$ & $3$ & $0$ & $1$ & $6$ & $13$ & $18$ & $20$ & $20$ & $20$ & $20$ \\
$6$ & $2$ & $0$ & $3$ & $12$ & $22$ & $28$ & $31$ & $32$ & $32$ & $32$ \\
$7$ & $1$ & $0$ & $3$ & $10$ & $16$ & $20$ & $22$ & $23$ & $23$ & $23$ \\
$8$ & $0$ & $1$ & $5$ & $10$ & $15$ & $18$ & $20$ & $21$ & $22$ & $22$ \\[1mm]
\hline
\end{tabular}
\caption{The $U(N)$ HWS zero-mode multiplicities $N^{U(N),{\rm HWS}}_{\ge 1/4,[m,n]}$ for $1\le m+n\le8$.}
\label{tbl:UN-zero-modes}
\end{table}

\begin{table}[h]
\centering
\small
\setlength{\tabcolsep}{4pt}
\setlength{\extrarowheight}{1pt}
\renewcommand{\arraystretch}{1.13}
\begin{tabular}{cc|ccccccccc}
$m$ & $n$ & $N=1$ & $2$ & $3$ & $4$ & $5$ & $6$ & $7$ & $8$ & $\infty$ \\[1mm]
\hline
$1$ & $0$ & $1$ &  &  &  &  &  &  &  & $1$ \\[1mm]
\hline
$1$ & $1$ & $1$ & $2$ &  &  &  &  &  &  & $2$ \\
$2$ & $0$ & $1$ & $2$ &  &  &  &  &  &  & $2$ \\[1mm]
\hline
$2$ & $1$ & $1$ & $3$ & $4$ &  &  &  &  &  & $4$ \\
$3$ & $0$ & $1$ & $2$ & $3$ &  &  &  &  &  & $3$ \\[1mm]
\hline
$2$ & $2$ & $1$ & $5$ & $8$ & $9$ &  &  &  &  & $9$ \\
$3$ & $1$ & $1$ & $4$ & $6$ & $7$ &  &  &  &  & $7$ \\
$4$ & $0$ & $1$ & $3$ & $4$ & $5$ &  &  &  &  & $5$ \\[1mm]
\hline
$3$ & $2$ & $1$ & $6$ & $12$ & $15$ & $16$ &  &  &  & $16$ \\
$4$ & $1$ & $1$ & $5$ & $9$ & $11$ & $12$ &  &  &  & $12$ \\
$5$ & $0$ & $1$ & $3$ & $5$ & $6$ & $7$ &  &  &  & $7$ \\[1mm]
\hline
$3$ & $3$ & $1$ & $8$ & $19$ & $27$ & $30$ & $31$ &  &  & $31$ \\
$4$ & $2$ & $1$ & $8$ & $18$ & $25$ & $28$ & $29$ &  &  & $29$ \\
$5$ & $1$ & $1$ & $6$ & $12$ & $16$ & $18$ & $19$ &  &  & $19$ \\
$6$ & $0$ & $1$ & $4$ & $7$ & $9$ & $10$ & $11$ &  &  & $11$ \\[1mm]
\hline
$4$ & $3$ & $1$ & $10$ & $28$ & $44$ & $53$ & $56$ & $57$ &  & $57$ \\
$5$ & $2$ & $1$ & $9$ & $24$ & $36$ & $43$ & $46$ & $47$ &  & $47$ \\
$6$ & $1$ & $1$ & $7$ & $16$ & $23$ & $27$ & $29$ & $30$ &  & $30$ \\
$7$ & $0$ & $1$ & $4$ & $8$ & $11$ & $13$ & $14$ & $15$ &  & $15$ \\[1mm]
\hline
$4$ & $4$ & $1$ & $13$ & $42$ & $74$ & $95$ & $105$ & $108$ & $109$ & $109$ \\
$5$ & $3$ & $1$ & $12$ & $38$ & $66$ & $84$ & $93$ & $96$ & $97$ & $97$ \\
$6$ & $2$ & $1$ & $11$ & $32$ & $53$ & $66$ & $73$ & $76$ & $77$ & $77$ \\
$7$ & $1$ & $1$ & $8$ & $20$ & $31$ & $38$ & $42$ & $44$ & $45$ & $45$ \\
$8$ & $0$ & $1$ & $5$ & $10$ & $15$ & $18$ & $20$ & $21$ & $22$ & $22$ \\[1mm]
\hline
\end{tabular}
\caption{The $U(N)$ raw bidegree multiplicities $d^{U(N)}_{m,n}$ for $1\le m+n\le8$.}
\label{tbl:UN-raw-zero-modes}
\end{table}

\clearpage
\subsubsection{$SU(N)$ case}

\begin{table}[h]
\centering
\small
\setlength{\tabcolsep}{4pt}
\setlength{\extrarowheight}{1pt}
\renewcommand{\arraystretch}{1.13}
\begin{tabular}{cc|ccccccccc}
$m$ & $n$ & $N=1$ & $2$ & $3$ & $4$ & $5$ & $6$ & $7$ & $8$ & $\infty$ \\[1mm]
\hline
$1$ & $0$ & $0$ &  &  &  &  &  &  &  & $0$ \\[1mm]
\hline
$1$ & $1$ & $0$ & $0$ &  &  &  &  &  &  & $0$ \\
$2$ & $0$ & $0$ & $1$ &  &  &  &  &  &  & $1$ \\[1mm]
\hline
$2$ & $1$ & $0$ & $0$ & $0$ &  &  &  &  &  & $0$ \\
$3$ & $0$ & $0$ & $0$ & $1$ &  &  &  &  &  & $1$ \\[1mm]
\hline
$2$ & $2$ & $0$ & $0$ & $1$ & $1$ &  &  &  &  & $1$ \\
$3$ & $1$ & $0$ & $0$ & $0$ & $0$ &  &  &  &  & $0$ \\
$4$ & $0$ & $0$ & $1$ & $1$ & $2$ &  &  &  &  & $2$ \\[1mm]
\hline
$3$ & $2$ & $0$ & $0$ & $0$ & $1$ & $1$ &  &  &  & $1$ \\
$4$ & $1$ & $0$ & $0$ & $1$ & $1$ & $1$ &  &  &  & $1$ \\
$5$ & $0$ & $0$ & $0$ & $1$ & $1$ & $2$ &  &  &  & $2$ \\[1mm]
\hline
$3$ & $3$ & $0$ & $0$ & $0$ & $0$ & $0$ & $0$ &  &  & $0$ \\
$4$ & $2$ & $0$ & $0$ & $1$ & $2$ & $3$ & $3$ &  &  & $3$ \\
$5$ & $1$ & $0$ & $0$ & $0$ & $1$ & $1$ & $1$ &  &  & $1$ \\
$6$ & $0$ & $0$ & $1$ & $2$ & $3$ & $3$ & $4$ &  &  & $4$ \\[1mm]
\hline
$4$ & $3$ & $0$ & $0$ & $0$ & $1$ & $2$ & $2$ & $2$ &  & $2$ \\
$5$ & $2$ & $0$ & $0$ & $1$ & $2$ & $3$ & $4$ & $4$ &  & $4$ \\
$6$ & $1$ & $0$ & $0$ & $1$ & $2$ & $3$ & $3$ & $3$ &  & $3$ \\
$7$ & $0$ & $0$ & $0$ & $1$ & $2$ & $3$ & $3$ & $4$ &  & $4$ \\[1mm]
\hline
$4$ & $4$ & $0$ & $0$ & $1$ & $2$ & $3$ & $4$ & $4$ & $4$ & $4$ \\
$5$ & $3$ & $0$ & $0$ & $0$ & $1$ & $2$ & $3$ & $3$ & $3$ & $3$ \\
$6$ & $2$ & $0$ & $0$ & $1$ & $4$ & $6$ & $7$ & $8$ & $8$ & $8$ \\
$7$ & $1$ & $0$ & $0$ & $1$ & $2$ & $3$ & $4$ & $4$ & $4$ & $4$ \\
$8$ & $0$ & $0$ & $1$ & $2$ & $4$ & $5$ & $6$ & $6$ & $7$ & $7$ \\[1mm]
\hline
\end{tabular}
\caption{The $SU(N)$ HWS zero-mode multiplicities $N^{SU(N),{\rm HWS}}_{\ge 1/4,[m,n]}$ for $1\le m+n\le8$.}
\label{tbl:SUN-zero-modes}
\end{table}

\begin{table}[h]
\centering
\small
\setlength{\tabcolsep}{4pt}
\setlength{\extrarowheight}{1pt}
\renewcommand{\arraystretch}{1.13}
\begin{tabular}{cc|ccccccccc}
$m$ & $n$ & $N=1$ & $2$ & $3$ & $4$ & $5$ & $6$ & $7$ & $8$ & $\infty$ \\[1mm]
\hline
$1$ & $0$ & $0$ &  &  &  &  &  &  &  & $0$ \\[1mm]
\hline
$1$ & $1$ & $0$ & $1$ &  &  &  &  &  &  & $1$ \\
$2$ & $0$ & $0$ & $1$ &  &  &  &  &  &  & $1$ \\[1mm]
\hline
$2$ & $1$ & $0$ & $0$ & $1$ &  &  &  &  &  & $1$ \\
$3$ & $0$ & $0$ & $0$ & $1$ &  &  &  &  &  & $1$ \\[1mm]
\hline
$2$ & $2$ & $0$ & $1$ & $2$ & $3$ &  &  &  &  & $3$ \\
$3$ & $1$ & $0$ & $1$ & $1$ & $2$ &  &  &  &  & $2$ \\
$4$ & $0$ & $0$ & $1$ & $1$ & $2$ &  &  &  &  & $2$ \\[1mm]
\hline
$3$ & $2$ & $0$ & $0$ & $2$ & $3$ & $4$ &  &  &  & $4$ \\
$4$ & $1$ & $0$ & $0$ & $2$ & $2$ & $3$ &  &  &  & $3$ \\
$5$ & $0$ & $0$ & $0$ & $1$ & $1$ & $2$ &  &  &  & $2$ \\[1mm]
\hline
$3$ & $3$ & $0$ & $1$ & $3$ & $6$ & $7$ & $8$ &  &  & $8$ \\
$4$ & $2$ & $0$ & $1$ & $3$ & $6$ & $7$ & $8$ &  &  & $8$ \\
$5$ & $1$ & $0$ & $1$ & $2$ & $4$ & $4$ & $5$ &  &  & $5$ \\
$6$ & $0$ & $0$ & $1$ & $2$ & $3$ & $3$ & $4$ &  &  & $4$ \\[1mm]
\hline
$4$ & $3$ & $0$ & $0$ & $3$ & $7$ & $11$ & $12$ & $13$ &  & $13$ \\
$5$ & $2$ & $0$ & $0$ & $3$ & $6$ & $9$ & $10$ & $11$ &  & $11$ \\
$6$ & $1$ & $0$ & $0$ & $2$ & $4$ & $6$ & $6$ & $7$ &  & $7$ \\
$7$ & $0$ & $0$ & $0$ & $1$ & $2$ & $3$ & $3$ & $4$ &  & $4$ \\[1mm]
\hline
$4$ & $4$ & $0$ & $1$ & $5$ & $13$ & $19$ & $24$ & $25$ & $26$ & $26$ \\
$5$ & $3$ & $0$ & $1$ & $4$ & $11$ & $16$ & $20$ & $21$ & $22$ & $22$ \\
$6$ & $2$ & $0$ & $1$ & $4$ & $10$ & $14$ & $17$ & $18$ & $19$ & $19$ \\
$7$ & $1$ & $0$ & $1$ & $3$ & $6$ & $8$ & $10$ & $10$ & $11$ & $11$ \\
$8$ & $0$ & $0$ & $1$ & $2$ & $4$ & $5$ & $6$ & $6$ & $7$ & $7$ \\[1mm]
\hline
\end{tabular}
\caption{The $SU(N)$ raw bidegree multiplicities $d^{SU(N)}_{m,n}$ for $1\le m+n\le8$.}
\label{tbl:SUN-raw-zero-modes}
\end{table}

\bibliographystyle{utphys}
\bibliography{references}

\end{document}